\documentclass[leqno,centertags,reqno,cmex10]{LMCS}
\overfullrule=2 pt

\def\dOi{9(4:10)2013}
\lmcsheading%
{\dOi}
{1--51}
{}
{}
{Jun.~21, 2012}
{Nov.~12, 2013}
{}

\subjclass{F1.1, F3.1, F4.1, F4.3}

 \ACMCCS{[{\bf Theory of computation}]: Models of computation; Formal
   languages and automata theory; [{\bf Software and its
       engineering}]: Software organization and properties---Software
   system structures---Software system models---Petri nets; Software
   organization and properties---Software functional
   properties---Formal methods}

\usepackage{enumerate}
\usepackage{hyperref}

\usepackage{epsfig,psfrag}
\usepackage{amsthm,graphics,amscd,amssymb,enumerate,xspace,latexsym,amsbsy,smallmath,MnSymbol}

\usepackage{pgf}
\usepackage{tikz}
\usetikzlibrary{backgrounds,automata,arrows,calc,positioning,decorations.pathmorphing,petri}
%\tikzstyle{background rectangle}=[rounded corners,fill =red!10]

\tikzset{background rectangle/.style={rounded corners,inner frame sep=0mm,draw}} 
\tikzstyle{every state}=[inner sep=4pt, minimum size=2mm,draw=blue,fill=blue!20,thick]
\tikzstyle{every node}=[inner sep=4pt, minimum size=4mm]

\tikzstyle{every place}=[inner sep=0pt, minimum size=3mm,draw=blue!20,fill=blue!20,thick]
%\tikzstyle{my place}=[./place,inner sep=0pt, minimum size=3mm,draw=blue,fill=blue,thick]
\tikzstyle{every transition}=[inner sep=0pt, minimum size=3mm,fill=red,draw=red,thick]

\newtheorem{remark}{Remark}

\newcommand{\hide}[1]{}

\newcommand{\set}[1]{\left\{#1\right\}}

\newcommand{\tuple}[1]{\left(#1\right)}
\newcommand{\setcomp}[2]{\left\{#1\mid\; #2\right\}}
\newcommand{\setcompdiv}[3]{\left\{#1\mid\; #2\right.$ $\left. #3\right\}}

\newcommand{\denotationof}[1]{[\![#1]\!]}

\newcommand{\zeroto}[1]{[{#1}]}

\newcommand{\fract}{{\it frac}}
\newcommand{\fractof}[1]{{\it frac}\left(#1\right)}

\newcommand{\nat}{{\mathbb N}}

\newcommand{\preals}{{\mathbb R}_{> 0}}
\newcommand{\nnreals}{{\mathbb R}_{\geq 0}}

\newcommand{\states}{Q}
\newcommand{\state}{q}
\newcommand{\conf}{c}
\newcommand{\confs}{C}

\newcommand{\comp}{\pi}

%TPN

\newcommand{\match}{{\it match}}
\newcommand{\Intervals}{{\it Intrv}}
\newcommand{\interval}{{\mathcal I}}

\newcommand{\oointrvl}[2]{({#1}:{#2})}

\newcommand{\mset}{b}
\newcommand{\cmset}{c}
\newcommand{\initmset}{\mset^{{\it init}}}

\newcommand{\aset}{{\it A}}

\newcommand{\lst}[1]{[{#1}]}
\newcommand{\mleq}{\leq}
\newcommand{\emptystring}{\varepsilon}

\newcommand{\tpn}{{\cal N}}
\newcommand{\ptpn}{{\cal N}}
\newcommand{\ptpntuple}{\tuple{\states,\places,\transitions,\costfun}}

\newcommand{\places}{P}
\newcommand{\costplaces}{\places_c}
\newcommand{\freeplaces}{\places_f}
\newcommand{\lessfree}{\le^f}
\newcommand{\lessplus}{{\oplus}}
\newcommand{\lesscost}{\le^c}
\newcommand{\lessall}{\le^{fc}}
\newcommand{\place}{p}

\newcommand{\transitions}{T}
\newcommand{\transition}{t}

\newcommand{\inputs}{{\it In}}

\newcommand{\outputs}{{\it Out}}
\newcommand{\reads}{{\it Read}}

\newcommand{\cost}{C}
\newcommand{\costfun}{{\it Cost}}
\newcommand{\optcostfun}{{\it OptCost}}
\newcommand{\costof}[1]{\costfun\left(#1\right)}
\newcommand{\optcostof}[1]{\optcostfun\left(#1\right)}

\newcommand{\marking}{M}

\newcommand{\trans}[1]{\stackrel{#1}{\longrightarrow}}
\newcommand{\movesto}[1]{\stackrel{#1}{\longrightarrow}}
\newcommand{\amovesto}[1]{{\stackrel{#1}{\longrightarrow}}_A}
\newcommand{\compto}[1]{\stackrel{#1}{\longrightarrow}}

\newcommand{\timedmovesto}{\longrightarrow_{\it Time}}
\newcommand{\xtimedmovesto}[1]{\stackrel{#1}\longrightarrow_{\it Time}}
\newcommand{\disc}{{\it Disc}}
\newcommand{\discmovesto}{\longrightarrow_\disc}

\newcommand{\ttrans}{\longrightarrow_{\transition}}

\newcommand{\pre}{\textit{Pre}}

\newcommand{\word}{w}

\newcommand{\maxval}{{\it cmax}}

\newcommand{\initmarking}{\marking_{\it init}}
\newcommand{\finalmarking}{\marking_{\it fin}}

\newcommand{\w}{{\it w}}
\newcommand{\z}{{\it z}}

\newcommand{\msets}[1]{{#1}^\odot}

\newcommand{\uc}[1]{{{#1}\!\uparrow}}

\newcommand{\st}{{\it ST}}

\newcommand{\strans}{\trans{*}}

\newcommand{\wleq}{\leq^\word}

\newcommand{\vecdef}{\vec}

\newcommand{\automaton}{{\cal A}}

%\newcommand{\rotate}{{\it Rotate}}

% Parosh
\newcommand{\counter}{x}
\newcommand{\cmachine}{M}
\newcommand{\initstate}{\state_{\it init}}
\newcommand{\finalstate}{\state_{\it fin}}
\newcommand{\ctransition}{r}
\newcommand{\op}{{\it op}}
\newcommand{\ztest}[1]{{#1}=0?}

\newcommand{\inc}[1]{{#1}{++}}
\newcommand{\dec}[1]{{#1}{--}}

\newcommand{\initconf}{\conf_{\it init}}
\newcommand{\finalconf}{\conf_{\it fin}}
\newcommand{\finalconfs}{\confs_{\it fin}}

% Richard

\newcommand{\threshold}{v}
\newcommand{\costnum}{v}
\newcommand{\aptpn}{{\it aptpn}}
\newcommand{\ac}{{\it ac}}
\newcommand{\wordencoding}{{\it enc}}
\newcommand{\wordencodingof}[1]{{\it enc}\left({#1}\right)}

% new defs, Parosh

%\newcommand{\mencodingof}[2]{\encodingfun\left(#1\right)\left(#2\right)}

\newcommand{\msetsover}[1]{\msets{#1}}
\newcommand{\wordsover}[1]{{#1}^*}

\newcommand{\fstates}{F}

\newcommand{\sep}{\#}

\newcommand{\sdtn}{{\cal T}}
\newcommand{\transfer}{{\it Trans}}
\newcommand{\inhibitorplace}{\place^i}
\newcommand{\inhibitortransition}{\transition^i}
\newcommand{\inhibitor}{(\inhibitorplace, \inhibitortransition)}
\newcommand{\sdtntuple}{\tuple{\states,\places,\transitions,\transfer}}
\newcommand{\iantuple}{\tuple{\states,\places,\transitions,\inhibitor}}
\newcommand{\annotate}{{\it annotate}}
\newcommand{\initsdtnconf}{\conf^{\sdtn}_{{\it init}}}
\newcommand{\initsdtnconfs}{\confs^{\sdtn}_{{\it init}}}

\newcommand{\finalsdtnconf}{\conf^{\sdtn}_{{\it final}}}
\newcommand{\finalsdtnconfs}{\confs^{\sdtn}_{{\it final}}}

\newcommand{\lowplace}[2]{{\tt LowPlace}\left(#1,#2\right)}

\newcommand{\highplace}[2]{{\tt HighPlace}\left(#1,#2\right)}

\newcommand{\zeroplace}[2]{{\tt ZeroPlace}\left(#1,#2\right)}

\newcommand{\off}{{\tt off}}
\newcommand{\on}{{\tt on}}

\newcommand{\inputdebtplace}[2]{{\it InputDebt}\left(#1,#2\right)}
\newcommand{\readdebtplace}[2]{{\it ReadDebt}\left(#1,#2\right)}

\newcommand{\aptpnof}[1]{{\it aptpn}\left(#1\right)}
\newcommand{\highword}{\word^{{\it high}}}
\newcommand{\lowword}{\word^{{\it low}}}
\newcommand{\zmset}{\mset_0}

\newcommand{\astate}{{\tt AState}}
\newcommand{\nstate}{{\tt NState}}
\newcommand{\fstate}[1]{{\tt FState}\left({#1}\right)}
\newcommand{\rdebtz}{{\tt RDebt}}
\newcommand{\rdebt}[2]{\rdebtz\left({#1,#2}\right)}
\newcommand{\mode}{{\tt Mode}}
\newcommand{\coverflag}{{\tt CoverFlag}}
\newcommand{\coverindex}{{\tt CoverIndex}}

\newcommand{\true}{{\it true}}
\newcommand{\false}{{\it false}}

\newcommand{\initmode}{{\tt Init}}
\newcommand{\initlowmode}{{\tt InitLow}}
\newcommand{\initzeromode}{{\tt InitZero}}
\newcommand{\simmode}{{\tt Sim}}
\newcommand{\discmode}{{\tt Disc}}
\newcommand{\typeoneamode}{{\tt Type1.1}}
\newcommand{\typeonebmode}{{\tt Type1.2}}
\newcommand{\typetwoamode}{{\tt Type2.1}}
\newcommand{\typetwobmode}{{\tt Type2.2}}
\newcommand{\finalonemode}{{\tt Final1}}
\newcommand{\finaltwomode}{{\tt Final2}}

\newcommand{\var}{{\tt x}}
\newcommand{\vars}{{\tt X}}
\newcommand{\val}{{\tt v}}
\newcommand{\domof}[1]{{\it dom}\left(#1\right)}
\newcommand{\transgen}{\theta}
\newcommand{\transgens}{\Theta}
\newcommand{\precondof}[1]{{\tt PreCond}\left(#1\right)}
\newcommand{\postcondof}[1]{{\tt PostCond}\left(#1\right)}
\newcommand{\assigned}{\leftarrow}
\newcommand{\inputsof}[1]{{\tt In}\left(#1\right)}
\newcommand{\outputsof}[1]{{\tt Out}\left(#1\right)}

\newcommand{\stof}[1]{\st\left(#1\right)}

\newcommand{\zeroplacetransform}[1]{{\it ZeroPlaceTransf}\left(#1\right)}
\newcommand{\lowplacetransform}[1]{{\it LowPlaceTransf}\left(#1\right)}
\newcommand{\highplacetransform}[1]{{\it HighPlaceTransf}\left(#1\right)}

\newcommand{\inputdebttransform}[1]{{\it InputDebtTransf}\left(#1\right)}
\newcommand{\readdebttransform}[1]{{\it ReadDebtTransf}\left(#1\right)}

\newcommand{\inn}{\rightModels}

\newcommand{\maxread}{{\it Rmax}}
\newcommand{\rdebtmapping}{\alpha}
\newcommand{\monus}{\stackrel{\bullet}{-}}

\makeatletter
\DeclareRobustCommand*\cal{\@fontswitch\relax\mathcal}
\makeatother

\begin{document}

\title[Priced Timed Petri Nets]{Priced Timed Petri Nets\rsuper*}

\author[P.~A.~Abdulla]{Parosh Aziz Abdulla\rsuper a}
\address{{\lsuper a}Uppsala University, Department of Information Technology, 
Box 337, SE-751 05 Uppsala, Sweden}

\author[R.~Mayr]{Richard Mayr\rsuper b}
\address{{\lsuper b}University of Edinburgh, School of Informatics,
10 Crichton Street, Edinburgh EH89AB, UK}

\thanks{{\lsuper b}Supported by Royal Society grant JP080268.} 

\keywords{Formal verification; Petri nets; Timed Automata}
\titlecomment{{\lsuper*}Extended abstracts of parts of this work have appeared in 
the proceedings of FOSSACS 2009 \cite{AM:FOSSACS2009} and LICS 2011 \cite{AM:LICS2011}.}

\begin{abstract}
\noindent
We consider priced timed Petri nets, i.e., 
unbounded Petri nets where each token carries a
real-valued clock. Transition arcs are labeled with time intervals,
which specify constraints on the ages of tokens.
Furthermore, our cost model assigns token storage costs per time unit to places, 
and firing costs to transitions. This general model strictly subsumes
both priced timed automata and unbounded priced Petri nets. 

We study the cost of computations that reach a given control-state.
In general, a computation with minimal cost may not exist, due to strict inequalities in
the time constraints. 
However, we show that the infimum of the costs to reach a given control-state 
is computable in the case where all place and transition costs are non-negative.

On the other hand, if negative costs are allowed, then the question whether a
given control-state is reachable with zero overall cost becomes undecidable.
In fact, this negative result holds even in the simpler case of discrete time
(i.e., integer-valued clocks).
\end{abstract}

\maketitle

\section{Introduction}
\noindent
Petri nets \cite{Petri:thesis,Peterson:survey}
are a widely used model for the study and analysis of concurrent systems.
Many different formalisms have been proposed which extend 
Petri nets with clocks and real-time constraints, leading to 
various definitions of {\it Timed Petri nets (TPNs)}.
A complete discussion of all these formalisms is beyond the scope of this 
paper and the interested reader is referred to the surveys in
\cite{Srba:FORMATS2008,Bowden:TPN:Survey,BCHLR-atva2005}.

An important distinction is whether the time model is discrete 
or continuous. In discrete-time nets, time is interpreted as
being incremented in discrete steps and thus the ages of tokens are in a countable domain,
commonly the natural numbers. Such discrete-time nets have been studied 
in, e.g., \cite{PCVC99:TPN:Nondecidability,Escrig:etal:TPN}.
In continuous-time nets, time is interpreted as continuous, and the ages of
tokens are real numbers. Some problems for continuous-time nets have been
studied in \cite{Parosh:Aletta:bqoTPN,Parosh:Aletta:infinity,ADMN:dlg,AMM:LMCS2007}.

In parallel, there have been several works on extending the
model of timed automata \cite{AD:timedautomata} with {\it prices}
({\it weights}) (see e.g., \cite{AlurTP01:WTA,Kim:etal:priced,BouyerBBR:PTA,Behrman:HSCC2001,Bouyer:CACM2011,Jurdzinski:FORMATS2008,Bouyer:FORMATS2008}).
Weighted timed automata are suitable models for embedded systems, where 
one has to take into consideration the fact that the behavior of the system
may be constrained by the consumption of different types of resources.
Concretely, weighted timed automata
extend classical timed automata 
with a cost function $\costfun$ 
that maps every location and every transition to a 
nonnegative integer (or rational) number.
For a transition, $\costfun$ gives the cost of 
performing the transition. 
For a location, $\costfun$ gives the cost per time unit for 
staying in the location. 
In this manner, we can define, for each computation of the system, the accumulated 
cost of staying in locations and performing transitions 
along the computation.

Here we consider a very expressive model that subsumes all models
mentioned above. {\em Priced Timed Petri Nets} ({\it PTPN})
are a generalization of classic Petri nets \cite{Petri:thesis} with
real-valued (i.e., continuous-time) clocks, real-time constraints, 
and prices for computations.

Each token is equipped with a real-valued clock, representing the age of the token.
The firing conditions of a transition include the usual ones for Petri nets.
Additionally, each arc between a place and a transition is labeled with a 
time-interval whose bounds are natural numbers (or possibly $\infty$ as upper
bound). These intervals can be open, closed or half
open. Like in timed automata, this is used to encode strict or non-strict
inequalities that describe constraints on the real-valued clocks.
When firing a transition, tokens which 
are removed/added from/to places must have ages lying in the 
intervals of the corresponding transition arcs.
Furthermore, we add special {\em read-arcs} to our model. These affect the
enabledness of transitions, but, unlike normal arcs, they do not remove the
token from the input place. Read arcs preserve the exact age of the input
token, unlike the scenario where a token is first removed and then replaced.
Read arcs are necessary in order to make PTPN subsume the classic
priced timed automata of \cite{BouyerBBR:PTA}.

We assign a cost to computations via a cost function $\costfun$ that
maps transitions and places of the Petri net to natural numbers.
For a transition $t$, $\costfun(t)$ gives the cost of
performing the transition, while for a place $p$,
$\costfun(p)$ gives the cost per time unit per token in the place.
The total cost of a computation is given by the sum of all costs of
fired transitions plus the storage costs for storing certain numbers of tokens
on certain places for certain times during the computation.
Like in priced timed automata, having integers as costs and time bounds is not a restriction, because
the case of rational numbers can be reduced to the integer case.
In most of the paper we consider non-negative costs for transitions and
places. In the last section we show that allowing negative costs makes even
very basic questions undecidable.

Apart from the cost model, our PTPN are very close to 
the Timed-Arc Petri Nets of \cite{Srba:FORMATS2008}
(except for some extensions; see below) in the sense that time is
recorded by clocks in the individual tokens, of which there 
can be unboundedly many.
This model differs significantly from the 
{\em Time Petri Nets} (also described in \cite{Srba:FORMATS2008}) where time is measured by 
a bounded number of clocks in the transitions.
In addition to the cost model, our PTPN also extend the models
of \cite{Srba:FORMATS2008}, \cite{Parosh:Aletta:bqoTPN,Parosh:Aletta:infinity,ADMN:dlg,AMM:LMCS2007}
and \cite{PCVC99:TPN:Nondecidability,Escrig:etal:TPN} in two other ways.
First, our PTPN model includes read-arcs, which is necessary to subsume
(priced) timed automata. Secondly, in PTPN, newly generated tokens 
do not necessarily have age zero, but instead their age is chosen
nondeterministically from specified intervals (which can be point intervals). 
Our PTPN model uses a continuous-time semantics (with real-valued clocks) like
the models considered in
\cite{Parosh:Aletta:bqoTPN,Parosh:Aletta:infinity,ADMN:dlg,AMM:LMCS2007}
and unlike the simpler discrete-time semantics (with integer-valued clocks)
considered in \cite{PCVC99:TPN:Nondecidability,Escrig:etal:TPN}.
We use a non-urgent semantics where tokens can grow older even
if this disables the firing of certain transitions (sometimes for ever), like 
in the Timed-Arc Petri Nets of \cite{Srba:FORMATS2008} and unlike
in the Time Petri Nets of \cite{Srba:FORMATS2008}.

Thus, our PTPN are a continuous-time model which subsumes the continuous-time
TPN of
\cite{Parosh:Aletta:bqoTPN,Parosh:Aletta:infinity,ADMN:dlg,AMM:LMCS2007},
the Timed-Arc Petri Nets of \cite{Srba:FORMATS2008} 
and the priced timed automata of \cite{AlurTP01:WTA,Kim:etal:priced,BouyerBBR:PTA}.
It should be noted that PTPN are infinite-state in several different ways.
First, the Petri net itself is unbounded. So the number of tokens (and thus
the number of clocks) can grow beyond any bound, i.e., the PTPN can create and
destroy arbitrarily many clocks. In that PTPN differ from
the priced timed automata of \cite{AlurTP01:WTA,Kim:etal:priced,BouyerBBR:PTA}, which 
have only a finite number of control-states and only a fixed
finite number of clocks.
Secondly, every single clock value is a real number of which there are 
uncountably many.

\paragraph{\bf Our contribution.}
We study the cost to reach a given control-state 
in a PTPN. In Petri net terminology, this is called a control-state
reachability problem or a coverability problem. 
The related reachability problem (i.e., reaching a particular
configuration) is undecidable for (continuous-time and discrete-time) 
TPN \cite{PCVC99:TPN:Nondecidability}, even without taking costs into
account.
In general, a cost-optimal computation may not exist
(e.g., even in priced timed automata it can happen that there is no computation
of cost $0$, but there exist computations of cost $\le \epsilon$ for every 
$\epsilon > 0$).

Our main contribution is to show that the {\em infimum} of the costs to reach a given control-state 
is computable, provided that all transition and place costs are non-negative.

This cost problem had been shown to be decidable for the much simpler 
model of {\em discrete-time} PTPN in \cite{AM:FOSSACS2009}.
However, discrete-time PTPN do not subsume the
priced timed automata of \cite{BouyerBBR:PTA}. Moreover, the techniques
from \cite{AM:FOSSACS2009} do not carry over to the continuous-time domain
(e.g., arbitrarily many delays of length $2^{-n}$ for $n=1,2,\dots$
can can happen in $\le 1$ time).

On the other hand, if negative costs are allowed, then even very basic
questions become undecidable. In Section~\ref{sec:undecidability} 
we show that the question whether a
given control-state is reachable with zero overall cost is undecidable if
negative transition costs are allowed.
This negative result does not need real-valued clocks; 
it even holds in the simpler case of discrete time (i.e., integer-valued) clocks.

\paragraph{\bf Outline of Used Techniques.}
Since the PTPN model is very expressive, several powerful new techniques are
developed in this paper to analyze them. These techniques are
interesting in their own right and can be instantiated 
to solve other problems. 

In Section~\ref{tpn:section} we define PTPN and the priced coverability
problem, and describe its relationship with priced timed automata and Petri
nets. 
Then, in Sections~\ref{delta:section}--\ref{sec:cost-abstraction},
we reduce the priced coverability problem for PTPN to a 
coverability problem in an abstracted untimed model called
AC-PTPN. This abstraction is done by an argument similar to a 
construction in \cite{BouyerBBR:PTA}, where parameters indicating 
a feasible computation are contained in a polyhedron, which is described by a
totally unimodular matrix. However, our class of matrices is more general than
in \cite{BouyerBBR:PTA}, because PTPN allow the creation of new clocks
with a nonzero value.
The resulting AC-PTPN are still much more expressive than Petri nets,
because their configurations are arbitrarily long sequences of multisets
(instead of a single multiset in the case of normal Petri nets).
Moreover, the transitions of AC-PTPN are not monotone, because larger
configurations cost more and might thus exceed the cost limit.

In order to solve coverability for AC-PTPN, we develop a very general method to solve
reachability/coverability problems in infinite-state transition systems
which are more general than the 
well-quasi-ordered/well-structured transition systems of
\cite{Parosh:Bengt:Karlis:Tsay:general:IC,Finkel:Schnoebelen:everywhere:TCS}.
We call this method the {\em abstract phase construction}, and it is 
described in abstract terms in Section~\ref{sec:phase}.
In particular, it includes a generalization of the Valk-Jantzen construction \cite{ValkJantzen}
to arbitrary well-quasi-ordered domains.

In Section~\ref{sec:main}, we instantiate this abstract method with AC-PTPN
and prove the main result. This instantiation is nontrivial and requires
several auxiliary lemmas, which ultimately use the decidability 
of the reachability problem for Petri nets 
with one inhibitor arc \cite{Reinhardt:inhibitor,Bonnet-mfcs11}.
There exist close connections between timed Petri nets, Petri nets with one inhibitor arc, and
transfer nets.
In Section~\ref{sec:sdtn} we establish a connection between a subclass of
transfer nets called {\em simultaneous-disjoint-transfer nets} (SD-TN) and Petri nets
with one inhibitor arc, and in Section~\ref{sec:oracle} we show the
decidability of a crucial property by reducing it to a reachability problem
for SD-TN. By combining all parts, we show the main result, i.e.,
the computability of the infimum of the costs for PTPN.

The undecidability result for general costs of Section~\ref{sec:undecidability} is shown 
by a direct reduction from Minsky 2-counter machines, where a tradeoff 
between positive and negative costs is used to ensure a faithful simulation.

\section{Priced Timed Petri Nets}
\label{tpn:section}

\subsection{Preliminaries}
We use $\nat,\nnreals, \preals$ to denote the sets
of natural numbers (including 0), nonnegative reals, 
and strictly positive reals, respectively. 
For a natural number $k$, we use $\nat^k$ and $\nat^k_\omega$ 
to denote the set of vectors of size $k$ over $\nat$ and  $\nat
\cup\set{\omega}$, respectively
($\omega$ represents the first limit ordinal).
For $n\in\nat$, we use $\zeroto{n}$ to denote the set $\set{0,\ldots,n}$.
For $x\in\nnreals$, we use $\fractof{x}$ to denote the fractional part of $x$.
We use a set $\Intervals$ of intervals. An open interval is 
written as $\oointrvl  w z$ 
 where $\w\in\nat$ and $\z\in\nat\cup\set{\infty}$. Intervals can also 
be closed in one or both directions, 
 e.g. $[w:z]$ is closed in both directions  and
  $[w:z)$ is closed to the left and open to the right.

For a set $\aset$, we use $\wordsover\aset$ and $\msetsover\aset$
to denote the set of finite words and finite multisets over $\aset$, respectively.
We view a multiset $\mset$ over $\aset$ as
a mapping $\mset: \aset \mapsto \nat$.
Sometimes, we write finite multisets as lists (possibly with multiple occurrences), so 
both $\lst{2.4,2.4,2.4,5.1,5.1}$ and
$\lst{2.4^3\;,\;5.1^2}$ represent a multiset $\mset$ over $\nnreals$
where $\mset(2.4)=3$, $\mset(5.1)=2$ and $\mset(x)=0$ for $x\neq
2.4,5.1$.
%
%We may also write $\mset$ as $\lst{2.4^3\;,\;5.1^2}$.
%
For multisets $\mset_1$ and $\mset_2$ over $\aset$, we say that
$\mset_1\mleq\mset_2$ if $\mset_1(a)\leq\mset_2(a)$ for each $a\in A$.
We define $\mset_1+\mset_2$ to be the multiset $\mset$ where
$\mset(a)=\mset_1(a)+\mset_2(a)$, and (assuming $\mset_1\mleq\mset_2$) we
define $\mset_2 - \mset_1$ to be the multiset $\mset$ where
$\mset(a)=\mset_2(a)-\mset_1(a)$, for each $a\in \aset$.
We use $a\in\mset$ to denote that $\mset(a)>0$.
We use $\emptyset$ or $[]$ to denote the empty multiset and $\emptystring$ 
to denote the empty word.

Let $(\aset, \leq)$ be a poset. We define a partial order
$\wleq$ on $\wordsover{\aset}$ as follows. Let $a_1\dots a_n \wleq b_1\dots b_m$ iff
there is a subsequence $b_{j_1}\dots b_{j_n}$ of 
$b_1\dots b_m$ s.t. $\forall k \in \{1,\dots,n\}.\, a_k \le b_{j_k}$.
A subset $B\subseteq \aset$ is  said to be {\it upward closed}
in $\aset$
if $a_1\in B, a_2 \in \aset$ and $a_1\leq a_2$ implies
$a_2\in B$.
If $\aset$ is known from the context, then we say simply that 
$B$ is {\it upward closed}.
For $B\subseteq \aset$
we define the {\it upward closure} $B\!\uparrow$ to be the set
$\setcomp{a\in \aset}{\exists a'\in B:\; a'\leq a}$.
A {\it downward closed} set $B$  and the {\it downward closure} 
$B\!\downarrow$ are defined in a similar manner.
We use $a\!\uparrow$, $a\!\downarrow$, $a$ instead of 
$\set{a}\!\uparrow$, $\set{a}\!\downarrow$, $\set a$, respectively.

Given a transition relation $\trans{}$, we denote its transitive closure
by $\trans{+}$ and its reflexive-transitive closure by $\strans$.
Given a set of configurations $\confs$, let 
$\pre_{\trans{}}(\confs) = \{\conf' \,|\, \exists \conf \in \confs.\, \conf' \trans{} \conf\}$
and
$\pre^*_{\trans{}}(\confs) = \{\conf' \,|\, \exists \conf \in \confs.\, \conf' \strans \conf\}$.

\subsection{Priced Timed Petri Nets}
A {\em Priced Timed Petri Net (PTPN)} is a tuple $\ptpn=\ptpntuple$ where
$\states$ is a finite set of control-states and
$\places$ is a finite set of places.
Though control-states can in principle be encoded into extra places,
it is conceptually useful to distinguish them. From a modeling point of view,
this distinguishes the control-flow from the data. In technical proofs,
it is useful to distinguish the finite memory of the control
from the infinite memory of the (timed) tokens in the Petri net.
$\transitions$ is a finite set of transitions,
where each transition $\transition \in \transitions$ is of the form
$\transition = (q_1,q_2,\inputs,\reads,\outputs)$.
We have that $q_1,q_2 \in \states$ are the source control-state and target control-state,
respectively, and 
$\inputs,\reads,\outputs \in \msetsover{(\places \times \Intervals)}$
are finite multisets over $\places \times \Intervals$ which define the
input-arcs, read-arcs and output-arcs of $t$, respectively.
$\costfun:\places \cup \transitions \to \nat$ 
is the cost function assigning
firing costs to transitions and storage costs to places.
Note that it is not a restriction to use integers for time bounds and costs
in PTPN. By the same standard technique as in timed automata, 
the problem for rational numbers can be reduced to the integer case
(by multiplying all numbers with the least common multiple of the divisors).
To simplify the presentation we use a one-dimensional cost.
This can be generalized to multidimensional costs; see
Section~\ref{sec:conclusion}.
We let $\maxval$ denote the maximum integer appearing on the arcs of a given
PTPN.
A {\it configuration} of $\ptpn$ is a tuple 
$\tuple{q,\marking}$ where $q \in\states$ is a control-state and
$\marking$ is a {\it marking} of $\ptpn$.
A marking is
a multiset over $\places\times\nnreals$, i.e., 
$\marking \in \msetsover{(\places\times\nnreals)}$.
The marking $\marking$ defines the numbers  and ages
of tokens in each place in the net.
We identify a token in a marking
$\marking$ by the pair $(\place,x)$ representing its place and age
in $\marking$.
Then, $\marking(\place,x)$ 
defines the number of tokens with age $x$ in place $\place$.
Abusing notation, we define, for each place $\place$, a multiset $\marking(\place)$ 
over $\nnreals$,
where $\marking(\place)(x)=\marking(p,x)$.

For a marking $\marking$ of the form
$\lst{(\place_1,x_1)\;,\;\ldots\;,\;(\place_n,x_n)}$
 and $x\in\preals$, we use $\marking^{+x}$
to denote the marking
$\lst{(\place_1,x_1+x)\;,\;\ldots\;,\;(\place_n,x_n+x)}$.

\subsection{Computations}\label{subsec:computations} 
We define two transition relations on the 
set of configurations: timed transition
and discrete transition.
 A {\em timed transition} increases the age of each token
by the same real number.
Formally, for $x\in\preals, q \in \states$, we have 
$(q,\marking_1)\xtimedmovesto{x}(q,\marking_2)$ 
if $\marking_2=\marking_1^{+x}$.
We use $(q,\marking_1)\timedmovesto{}(q,\marking_2)$ to denote that
$(q,\marking_1)\xtimedmovesto{x}(q,\marking_2)$ for some $x\in\preals$.

We define the set of {\em discrete transitions $\discmovesto$} as
$\bigcup_{\transition\in\transitions}\ttrans$, where
$\ttrans$ represents the effect of {\it firing} the discrete 
transition $\transition$.
To define $\ttrans$ formally, we need the auxiliary predicate $\match$
that relates markings with the inputs/reads/outputs of transitions.
Let $\marking \in \msetsover{(\places\times\nnreals)}$ and
$\alpha \in \msetsover{(\places \times \Intervals)}$.
Then $\match(\marking,\alpha)$ holds iff there exists a bijection 
$f: \marking \mapsto \alpha$ s.t. for every $(p,x) \in \marking$ we have 
$f((p,x))=(p',\interval)$ with $p'=p$ and $x \in \interval$.
Let $\transition = (q_1,q_2,\inputs,\reads,\outputs) \in \transitions$.
Then we have a discrete transition
$(q_1,\marking_1) \ttrans (q_2,\marking_2)$ iff 
there exist $I,O,R,\marking_1^{\it rest} \in \msetsover{(\places\times\nnreals)}$ s.t. the 
following conditions are satisfied:
\begin{iteMize}{$\bullet$}
\item
$\marking_1 = I + R + \marking_1^{\it rest}$
\item
$\match(I, \inputs)$, $\match(R, \reads)$ and 
$\match(O, \outputs)$.
\item
$\marking_2 = O + R + \marking_1^{\it rest}$
\end{iteMize}
We say that $\transition$ is  {\it enabled} in $(q_1,\marking_1)$ if 
the first two conditions are satisfied.
A transition $\transition$ may be fired iff for each
input-arc and each read-arc, there is a token with the right age
in the corresponding input place.
These tokens in $I$ matched to the input arcs will be removed when the transition is
fired, while the tokens in $R$ matched to the read-arcs are kept.
The newly produced tokens in $O$ have ages which are chosen nondeterministically from the
relevant intervals on the output arcs of the transitions.
This semantics is lazy, i.e., enabled transitions do not need to
fire and can be disabled again.

We write $\movesto{} \, =\, \timedmovesto\,\cup\,\discmovesto$ to denote all 
transitions.
For sets $\confs,\confs'$ of configurations,
we write $\confs\movesto{*}\confs'$ to denote that
$\conf\movesto{*}\conf'$ for some $\conf\in\confs$ and $\conf'\in\confs'$.
%We define $\reachable(\confs) := \{\confs'\,|\, \confs \compto* \confs'\}$ as the set of configurations reachable from $\confs$. 
%
A {\it computation} $\comp$ (from $\conf$ to $\conf'$) is a sequence of transitions
$\conf_0\movesto{} \conf_1 \movesto{} \dots \movesto{} \conf_n$ 
such that $\conf_0=\conf$ and $\conf_n=\conf'$.
We write $\conf \compto{\comp} \conf'$ to denote that
$\comp$ is a computation from $\conf$ to $\conf'$.
Similarly, we write $\confs \compto{\comp} \confs'$ to denote that
$\exists \conf_1 \in \confs,\conf_n \in \confs'.\, \conf_1 \compto{\comp}
\conf_n$.
% Notice that $\conf \movesto{*} \conf'$ iff
% $\conf \compto{\comp} \conf'$  for some $\comp$.
%
\subsection{Costs}
% We assign costs to computations.
%
The cost of a computation consisting of
one discrete transition $\transition \in \transitions$ 
is defined as $\costof{(q_1,\marking_1) \ttrans (q_2,\marking_2)} := \costof{t}$.
The cost of a computation consisting of
one timed transition is defined by
$\costof{(q,\marking) \movesto{x} (q,\marking^{+x})} := 
x*\sum_{\place \in \places} |\marking(\place)|*\costof{\place}$.
The cost of a computation is the sum of all transition costs in it, i.e., 
\[
\costof{(q_1,\marking_1) \movesto{} (q_2,\marking_2) \movesto{} \dots \movesto{} (q_n,\marking_n)}:=
\sum_{1\leq i<n}\costof{(q_i,\marking_i) \movesto{} (q_{i+1},\marking_{i+1})}
\]
We write $\confs\movesto{\costnum} \confs'$ to denote that there is
a computation $\comp$ such that  $\confs\movesto{\comp} \confs'$ 
and $\costof{\comp}\leq\costnum$.
We define 
$\optcostof{\confs,\confs'}$ to be the infimum 
of the set $\setcomp{\costnum}{\confs\movesto{\costnum} \confs'}$,
i.e., the infimum of the costs of all computations leading
from $\confs$ to $\confs'$. 
We use the infimum, because the minimum does not exist in general.
% There are trivial examples where, e.g., there is no computation
% of cost $0$, but there exist computations of cost $\le \epsilon$ for every 
% $\epsilon > 0$.) 
%
% For a control state $\state$ we define
% $\optcostof{\confs,\state}:=\optcostof{\confs,\setcomp{\tuple{\state,\marking}}{\marking \mbox{ is a marking}}}$.
We partition the set of places $\places = \costplaces \cup \freeplaces$
where $\costof{p} > 0$ for $p \in \costplaces$ and
$\costof{p} = 0$ for $p \in \freeplaces$.
The places in $\costplaces$ are called cost-places and the places in
$\freeplaces$ are called free-places.

\subsection{Relation of PTPN to Other Models}
PTPN subsume the priced timed automata of 
\cite{AlurTP01:WTA,Kim:etal:priced,BouyerBBR:PTA}
via the following simple encoding.
For every one of the finitely many clocks of the automaton
we have one place in the PTPN with exactly one token on it whose age 
encodes the clock value. We assign cost zero to these places.
For every control-state $s$ of the automaton we have one place $p_s$ in the
PTPN. Place $p_s$ contains exactly one token iff the automaton is in state
$s$, and it is empty otherwise. An automaton transition from state $s$ to
state $s'$ is encoded by a PTPN transition consuming the token from $p_s$
and creating a token on $p_{s'}$. The transition guards referring to clocks
are encoded as read-arcs to the places which encode clocks, labeled with the
required time intervals. Note that open and half-open time intervals
are needed to encode the strict inequalities used in timed automata.
Clock resets are encoded by consuming the timed token (by an input-arc)
and replacing it (by an output-arc) with a new token on the same place with age $0$.
The cost of staying in state $s$ is encoded by assigning a cost to place
$p_s$, and the cost of performing a transition is encoded as the cost of the
corresponding PTPN transition.
Also PTPN subsume fully general unbounded (i.e.,
infinite-state) Petri nets (by setting all time intervals to $[0:\infty)$ and 
thus ignoring the clock values).

Note that (just like for timed automata) 
the problems for continuous-time PTPN cannot be reduced to 
(or approximated by) the discrete-time case. 
Replacing strict inequalities with non-strict ones
might make the final control-state reachable, when it originally was
unreachable
(i.e., it would cause qualitative differences, and not only quantitative ones).

\subsection{The Priced Coverability Problem}
We will consider two variants of the cost problem,
the {\it Cost-Threshold} problem and the 
{\it Cost-Optimality} problem.
They are both characterized by an \emph{initial}
control state $\initstate$ and a \emph{final} control state $\finalstate$.

Let $\initconf = \tuple{\initstate,\lst{}}$ be the initial configuration
and $\finalconfs = \{\tuple{\finalstate,\marking}\,|\, 
\marking \in \msetsover{(\places\times\nnreals)}\}$
the set of final configurations defined by the control-state $\finalstate$.
I.e., we start from a configuration where the control state
is $\initstate$ and where all the places are empty, and then consider
the cost of computations that takes us to
$\finalstate$. (If $\initconf$ contained tokens with a non-integer age
then the optimal cost might not be an integer.)

In the {\it Cost-Threshold} problem we ask the question whether
$\optcostof{\initconf, \finalconfs} \le \threshold$
for a given threshold $\threshold \in \nat$.

% $\tuple{\initstate,\lst{}}\compto{\threshold}\finalstate$.

In the {\it Cost-Optimality} problem, we want to compute
% $\optcostof{\tuple{\initstate,\lst{}},\finalstate}$, i.e.,
$\optcostof{\initconf, \finalconfs}$.

\subsection{A Running Example}\label{subsec:example1}

\begin{figure}[htbp]
\begin{center}
\begin{tikzpicture}[show background rectangle,label distance=-1.2mm]

\node[transition,label=above:{\tiny $\transition_1$},label=right:{\tiny $1$}](t1){}; %
\node[place,,minimum size=10mm,below  = 9mm of t1,label=right:{\tiny $\place_2$},label=left:{\tiny $2$}] (p2){}
[children are tokens, token distance=5mm]
child {node [token] {$6.5$}};

\node[place,draw=blue,fill=blue,left  = 7mm of p2,label=left:{\tiny $\state_1$}] (q1){};
\node[place,minimum size=10mm,left  = 7mm of q1,label=left:{\tiny $\place_1$},label=above:{\tiny $3$}] (p1){}
[children are tokens, token distance=5mm]
child {node [token] {$3.1$}}
child {node [token] {$3.1$}}
child {node [token] {$2.5$}};

\node[place,draw=blue,fill=blue,right  = 7mm of p2,label=right:{\tiny $\state_2$}] (q2){};
\node[place,minimum size=10mm,right  = 7mm of q2,label=right:{\tiny $\place_3$},label=below:{\tiny $0$}] (p3){}
[children are tokens, token distance=5mm]
child {node [token] {$0.1$}}
child {node [token] {$0.1$}};
;

\node[transition,below  =  9mm of p2,label=below:{\tiny $\transition_2$},label=left:{\tiny $3$}](t2){};

\draw[->,thick] (t1) -- (p2);
\draw[->,thick] (q1) -- (t1);
\draw[->,thick] (t1) -- (q2);
\draw[->,thick] (p1) -- node[sloped,above=-1mm] {\tiny $(0,3]$} (t1);
\draw[->,thick] (t1) --  node[sloped,above=-1mm] {\tiny $[1,5)$} (p2);
\draw[->,thick] (t1) -- node[sloped,above=-1mm] {\tiny $(2,\infty)$} (p3);
\draw[->,thick] (t2) -- node[sloped,below=-1mm] {\tiny $[0,\infty)$}(p1);
\draw[<->,thick] (t2) -- node[sloped,below=-1mm] {\tiny $[2,2]$}(p2);
\draw[->,thick] (p3) -- node[sloped,below=-1mm] {\tiny $[1,4)$} (t2);
\draw[->,thick] (q2) -- (t2);
\draw[->,thick] (t2) -- (q1);

\end{tikzpicture}
\end{center}
\caption{A simple example of a PTPN.
}
\label{ptpn:fig}
\end{figure}

Figure~\ref{ptpn:fig} shows a simple PTPN.
We will use this PTPN to give examples of some of the concepts that we introduce
in this paper.

\paragraph{\bf Places and Transitions}
The PTPN has
two control states ($\state_1$ and $\state_2$) depicted as dark-colored circles,
three places ($\place_1$, $\place_2$, $\place_3$) depicted as light-colored circles,
and two transitions ($\transition_1$ and $\transition_2$) depicted as rectangles.
Source/target control states, input/output places are indicated by arrows
to the relevant transition.
Read places are indicated by double headed arrows.
The source and target control states of $\transition_1$ are $\state_1$, resp.\ $\state_2$.
The input, read resp.\ output arcs of $\transition_1$ are given by the multisets
$\lst{\tuple{\place_1,(0,3]}}$,  $\lst{}$
resp.\ $\lst{\tuple{\place_2,[1,5)},\tuple{\place_3,(2,\infty)}}$.
In a similar manner, $\transition_2$  is defined by the tuple
$\tuple{
\state_2,\state_1,
\lst{\tuple{\place_3,[1,4)}},
\lst{\tuple{\place_2,[2,2]}},\lst{\tuple{\place_1,[0,\infty)}}}$.
The prices of $\transition_1,\transition_2,\place_1,\place_2,\place_3$
are $1,3,3,2,0$ respectively.
The value of $\maxval$ is $5$.

\paragraph{\bf Markings}
Figure~\ref{ptpn:fig} shows a marking 
$\lst{\tuple{\place_1,3.1}^2,\tuple{\place_1,2.5},\tuple{\place_2,6.5},\tuple{\place_3,0.1}^2}$.

\paragraph{\bf Computations and Prices}
An example of a computation $\comp$ is:
\[
\begin{array}{c}
\tuple{\state_1,\lst{\tuple{\place_1,3.1}^2,\tuple{\place_1,2.5},\tuple{\place_2,6.5},\tuple{\place_3,0.1}^2}}
\\\longrightarrow_{\transition_1}\\
\tuple{\state_2,\lst{\tuple{\place_1,3.1}^2,\tuple{\place_2,6.5},\tuple{\place_2,1.3},
\tuple{\place_3,0.1}^2,\tuple{\place_3,2.2}}}
\\\xtimedmovesto{0.7}\\
\tuple{\state_2,\lst{\tuple{\place_1,3.8}^2,\tuple{\place_2,7.2},\tuple{\place_2,2.0},
\tuple{\place_3,0.8}^2,\tuple{\place_3,2.9}}}
\\\longrightarrow_{\transition_2}\\
\tuple{\state_1,\lst{\tuple{\place_1,3.8}^2,\tuple{\place_1,9.2},\tuple{\place_2,7.2},\tuple{\place_2,2.0},\tuple{\place_3,0.8}^2}}
\\\xtimedmovesto{1.3}\\
\tuple{\state_1,\lst{\tuple{\place_1,5.1}^2,\tuple{\place_1,10.5},\tuple{\place_2,8.5},\tuple{\place_2,3.3},\tuple{\place_3,2.1}^2}}
\end{array}
\]

The cost $\costof\comp$ is given by
\[
\begin{array}{l}
1+ 2*3*0.7+2*2*0.7+3*0*0.7+\\
3+ 3*3*1.3+2*2*1.3+1*0*1.3=27.9
\end{array}
\]

The transition $\transition_2$ is not enabled from any of the following configurations:
\begin{iteMize}{$\bullet$}
\item
The marking
$\tuple{\state_1,\lst{\tuple{\place_1,3.8},\tuple{\place_2,2.0},
\tuple{\place_3,2.9}}}$
since it does not have the correct control state.
\item
The marking
$\tuple{\state_2,\lst{\tuple{\place_1,3.1}^2,\tuple{\place_2,2.0},\tuple{\place_3,0.1}^2}}$ 
since it is missing input tokens with the correct ages in $\place_3$.
\item
The marking
$\tuple{\state_2,\lst{\tuple{\place_1,3.1}^2,\tuple{\place_2,1.0},\tuple{\place_3,1.1}^2}}$ 
since it is missing read tokens with the correct ages in $\place_2$.
\end{iteMize}

\section{Computations in $\delta$-Form}
\label{delta:section}
\noindent
We solve the Cost-Threshold and Cost-Optimality problems for PTPN via a 
reduction that goes through a series of abstraction steps.
First we show that, in order to solve the cost problems, it is sufficient
to consider computations of a certain form where the ages of all the tokens are
arbitrarily close to an integer.
This technique is similar to the corner-point abstraction used in the analysis
of priced timed automata
\cite{Behrman:HSCC2001,Bouyer:CACM2011,BouyerBBR:PTA} where sets of possible 
clock valuations are described by polyhedra, and where the infimum cost is
achieved in the corner-points of these polyhedra.
For PTPN, extra difficulties arise from the unbounded growth
of configurations by newly created clocks in new tokens and the potential
disappearance of old clocks in consumed tokens.
Moreover, newly created clocks in PTPN do not necessarily have value zero.
This leads to a more complex structure of the matrices that describe the
polyhedra of clock valuations than in the case of priced timed automata; see
Def.~\ref{def:ptpn-matrix}.
However, we show in Lemma~\ref{lem:PTPN_totally_unimodular} that these more
general matrices are still totally unimodular, which makes the corner-point
abstraction possible.

We decompose PTPN markings $\marking$ into submarkings such that in every submarking the 
fractional parts (but not necessarily the integer parts) of the token ages are
the same. We then arrange these submarkings in a sequence  
$
\marking_{-m}, \dots , \marking_{-1}, \marking_0,
\marking_1,\dots,\marking_n
$
such that $\marking_{-m}, \dots , \marking_{-1}$ contain tokens with fractional 
parts $\ge 1/2$ in increasing order,
$\marking_0$ contains the tokens with fractional part zero,
and $\marking_1,\dots,\marking_n$ contain 
tokens with fractional 
parts $< 1/2$ in increasing order.

For example, the marking
$\lst{\tuple{\place_1,3.1}^2,\tuple{\place_1,2.5},\tuple{\place_2,6.5},\tuple{\place_3,0.1}^2}$
of Figure~\ref{ptpn:fig} is decomposed as
$M_{-1} = \lst{\tuple{\place_1,2.5},\tuple{\place_2,6.5}}$,
$M_0 = \lst{}$
and
$M_1 = \lst{\tuple{\place_1,3.1}^2, \tuple{\place_3,0.1}^2}$.

Formally, the decomposition of a
PTPN marking $\marking$ into its fractional parts
\[
\marking_{-m}, \dots , \marking_{-1}, \marking_0,
\marking_1,\dots,\marking_n
\] 
is uniquely defined by the following properties:
\begin{iteMize}{$\bullet$}
\item
$\marking=\marking_{-m}+ \dots + \marking_{-1}+ \marking_0+
\marking_1+\dots+\marking_n$.
\item
If $\tuple{\place,x}\in\marking_i$ and $i<0$ then $\fractof{x} \ge 1/2$.
If $\tuple{\place,x}\in\marking_0$ then $\fractof{x}=0$.
If $\tuple{\place,x}\in\marking_i$ and $i>0$ then $\fractof{x} < 1/2$.
\item
Let $\tuple{\place_i,x_i}\in\marking_i$ and $\tuple{\place_j,x_j}\in\marking_j$.
Then $\fractof{x_i} = \fractof{x_j}$ iff $i=j$,
and if $-m\leq i < j < 0$ or $0 \le i < j \le n$ then
$\fractof{x_i} < \fractof{x_j}$.
\item
$\marking_i\neq\emptyset$ if $i\neq 0$ 
($\marking_0$ can be empty, but the other $\marking_i$ must be non-empty in order to get a
unique representation.)
\end{iteMize}
We say that a timed transition $(q,\marking)\movesto{x}(q,\marking')$
is {\em detailed} iff at most one fractional part of any token in $\marking$
changes its status about reaching or exceeding the next higher integer value.
Formally, let $\epsilon$ be the fractional part of the token ages in
$\marking_{-1}$, or $\epsilon = 1/2$ if $M_{-1}$ does not exist.
Then $(q,\marking)\movesto{x}(q,\marking')$
is {\em detailed} iff either $0 < x < 1-\epsilon$ (i.e., no tokens reach the
next integer),
or $\marking_0 = \emptyset$ and $x = \epsilon$ (no tokens had integer age,
but those in $M_{-1}$ reach integer age).
Every computation of a PTPN can be transformed into an equivalent one 
(w.r.t. reachability and cost) where
all timed transitions are detailed, by replacing some long timed transitions with
several detailed shorter ones where necessary. 
Thus we may assume w.l.o.g.
that timed transitions are detailed.
This property is needed to obtain a one-to-one correspondence between PTPN
steps and the steps of A-PTPN, defined in the next section.

For $\delta\in(0:1/5]$ the marking
$\lst{\tuple{\place_1,x_1},\ldots,\tuple{\place_n,x_n}}$ 
is in {\it $\delta$-form} if, for all $i:1\leq i\leq n$, 
it is the case that
either (i) $\fract(x_i) < \delta$ (low fractional part), 
or (ii) $\fract(x_i) > 1-\delta$ (high fractional part).
I.e., the age of each token is close to (within $< \delta$) 
an integer.
We choose $\delta \le 1/5$ to ensure that the cases 
(i) and (ii) do not overlap, and that they still do not overlap for a new 
$\delta' \le 2/5$ after a delay of $\le 1/5$ time units.

The occurrence of a discrete transition $t$ is said to be in $\delta$-form if 
its output $O$ is in $\delta$-form, i.e., 
the ages of the newly generated tokens are  
close to an integer. This is not a property
of the transition $t$ as such, but a property of its occurrence,
because it depends on the particular choice of $O$ (which is not fixed
but possibly nondeterministic within certain constraints; 
see Subsection~\ref{subsec:computations}).

Let $\ptpn=\ptpntuple$ be a PTPN and
$\initconf = \tuple{\initstate,\lst{}}$
and $\finalconfs = \{\tuple{\finalstate,\marking}\,|\, 
\marking \in \msetsover{(\places\times\nnreals)}\}$ 
as in the last section.

For $0 < \delta \le 1/5$, the computation $\comp$ is in $\delta$-form
iff 
\begin{enumerate}
\item 
Every occurrence of a discrete transition 
$\conf_i \ttrans \conf_{i+1}$ is in $\delta$-form, and 
\item
For every timed transition $\conf_i \movesto{x} \conf_{i+1}$ we have
either $x \in (0:\delta)$ or $x \in (1-\delta:1)$.
\end{enumerate}
We show that, in order to find the infimum of the possible costs,
it suffices to consider computations in $\delta$-form, for arbitrarily small
values of $\delta > 0$.

\begin{lem}\label{lem:deltaform}
Let $\initconf \compto{\comp} \finalconfs$,
where $\comp$ is $\initconf = \conf_0\movesto{} 
\dots \movesto{} \conf_{\it length} \in \finalconfs$.
Then for every $\delta > 0$ there exists a computation
$\comp'$ in $\delta$-form where
$\initconf \compto{\comp'} \finalconfs$,
where $\comp'$ is $\initconf = \conf'_0\movesto{}
\dots \movesto{} \conf'_{\it length} \in \finalconfs$ s.t.
$\costof{\comp'} \le \costof{\comp}$, $\comp$ and $\comp'$ have the same
length and $\forall i:0 \le i \le {\it length}.\, |\conf_i| = |\conf'_i|$.
Furthermore, if $\comp$ is detailed then $\comp'$ is detailed.
\end{lem}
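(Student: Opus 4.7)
My plan is to parametrise the real-valued choices in $\comp$ as a point in a rational polyhedron $P$ whose closure has integer-coordinate vertices (by Lemma~\ref{lem:PTPN_totally_unimodular}), then slide the original parameter toward a low-cost vertex until all fractional parts land within $\delta$ of integers. The main obstacle is the integrality of the vertices of $\bar P$, which is exactly what Lemma~\ref{lem:PTPN_totally_unimodular} delivers via total unimodularity of the PTPN constraint matrix; without it, the nearest vertex could have fractional coordinates with moderate denominators and the corner-point argument would fail to drive all tested ages uniformly within $\delta$ of integers.

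\emph{Parametrisation.} Enumerate the timed transitions of $\comp$ in order with durations $x_1,\dots,x_T$ and the tokens produced by output arcs along $\comp$ with their birth ages $y_1,\dots,y_L$. Every age tested by a guard on an input, read, or output arc in $\comp$ is an affine form $y_j+\sum_{i\in S}x_i$, where $S\subseteq\{1,\dots,T\}$ collects the indices of the delays between the creation of that token and the moment of the test. Together with the positivity constraints $x_i>0$, all such guards yield a system of strict and non-strict linear inequalities in the $L+T$ variables $(\mathbf y,\mathbf x)$ with integer right-hand sides. Let $P$ be the feasibility set and $\bar P$ its topological closure; the original computation $\comp$ is a point $p\in P$.

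\emph{Low-cost integer vertex.} The cost of any computation obtained by replaying the discrete transition sequence of $\comp$ equals $\sum_t\costof{t}+\sum_{i=1}^T c_i x_i$, where $c_i\in\nat$ is the storage cost per time unit of the (fixed) marking present during the $i$-th delay. This is a linear function of the coordinates with non-negative integer coefficients, so it attains its infimum over $\bar P$ (restricted, if needed, to a bounded rational subpolytope containing $p$) at a vertex $v$ with $\costof{v}\le\costof{p}=\costof{\comp}$. By Lemma~\ref{lem:PTPN_totally_unimodular}, every vertex of $\bar P$ has integer coordinates, so in particular $v$ does.

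\emph{Sliding to $\delta$-form.} For $\lambda\in[0,1)$ let $p_\lambda:=(1-\lambda)p+\lambda v$. Every strict inequality defining $P$ is strict at $p$ and (at worst) non-strict at $v$, hence remains strict at $p_\lambda$; every non-strict defining inequality is satisfied by both endpoints. So $p_\lambda\in P$, and by convexity $\costof{p_\lambda}\le\max(\costof{p},\costof{v})=\costof{\comp}$. Since $v$ has integer coordinates, each coordinate of $p_\lambda$ is within $(1-\lambda)$ of an integer. Choose $\lambda$ so close to $1$ that $(T+1)(1-\lambda)<\delta$; then each $x_i\in(0,\delta)\cup(1-\delta,1)$, each $y_j$ satisfies $\fractof{y_j}<\delta$ or $\fractof{y_j}>1-\delta$, and each tested age (a $\{0,1\}$-combination of at most $T+1$ such coordinates) is within $\delta$ of an integer. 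Reading off $\comp'$ from $p_\lambda$ using the same sequence of discrete transitions as $\comp$ yields the required $\delta$-form computation; equality of lengths, of configuration sizes $|\conf'_i|=|\conf_i|$, and the cost bound are immediate, while the detailed property is inherited because no delay crosses an integer in its interior and the ordering of fractional parts is preserved along the segment from $p$ to $v$.
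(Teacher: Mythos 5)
Your proof follows essentially the same route as the paper's: parametrise the delays $x_i$ and birth ages $y_j$, note that the feasible tuples form a polyhedron whose closure is cut out by a PTPN constraint matrix (totally unimodular by Lemma~\ref{lem:PTPN_totally_unimodular}, hence integer vertices), and exploit linearity of the cost to move the original point toward a cheapest integer vertex — your explicit convex-combination ``sliding'' is just a more detailed rendering of the paper's concluding argument, and detailedness/length/size preservation are handled the same way. Two cosmetic points: a coordinate of $p_\lambda$ is within $(1-\lambda)\|p-v\|_\infty$ of an integer, not $(1-\lambda)$ (harmless, choose $\lambda$ accordingly), and the hedge about truncating to a bounded subpolytope is both unnecessary (the non-negative linear cost attains its minimum at a vertex of $\bar P$ itself, since $\bar P$ lies in the non-negative orthant and is pointed) and would actually spoil vertex integrality if used, since vertices of the truncation need not be vertices of $\bar P$.
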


\begin{proof}
{\bf Outline of the proof.}
We construct $\comp'$ by fixing the structure of the computation $\comp$ and
varying the finitely many real numbers describing the delays of timed transitions
and the ages of newly created tokens. The tuples of numbers corresponding to a
possible computation are contained in a polyhedron, which is described by 
inequations via a totally
unimodular matrix, and whose vertices thus have integer coordinates.
Since the cost function is linear in these numbers, the infimum of the costs
can be approximated arbitrarily closely by computations $\comp'$ whose numbers are
arbitrarily close to integers, i.e., computations $\comp'$ in $\delta$-form 
for arbitrarily small $\delta > 0$.\\
{\bf Detailed proof.}
The computation $\comp$ with $\initconf \compto{\comp} \finalconfs$
consists of a sequence of discrete transitions and timed transitions.
Let $n$ be the number of timed transitions in $\comp$ and $x_i >0$
(for $1\le i \le n$) be the delay of the $i$-th timed transition in $\comp$.
Let $m$ be the number of newly created tokens in $\comp$. We fix some
arbitrary order on these tokens (it does not need to agree with the order of token
creation) and call them $t_1,\dots,t_m$.
Let $y_i$ be the age of token $t_i$ when it is created in $\comp$.
(Recall that the age of new tokens is not always zero, but chosen
nondeterministically out of given intervals.)

We now consider the set of all computations $\comp'$ that have the same
structure, i.e., the same transitions, as $\comp$, but with modified
values of $y_1,\dots,y_m$ and $x_1,\dots,x_n$.
Such computations $\comp'$ have the same length as $\comp$
and the sizes of the visited configurations match.
Also if $\comp$ is detailed then $\comp'$ is detailed.

It remains to show that one such computation $\comp'$ is
in $\delta$-form and $\costof{\comp'} \le \costof{\comp}$.

The set of tuples $(y_1,\dots,y_m,x_1,\dots,x_n)$ for which such a computation
$\comp'$ is feasible is described by a set of inequations that depend on %
the transition guards. (The initial configuration, and the set of final 
configurations do not introduce any constraints on
$(y_1,\dots,y_m,x_1,\dots,x_n)$,
because they are closed under changes to token ages.)
The inequations are derived from the following conditions.
\begin{iteMize}{$\bullet$}
\item
The time always advances, i.e., $x_i >0$.
\item
When the token $t_j$ is created by an output arc with interval
$[a:b]$ we have $a \le y_j \le b$, and similarly with strict
inequalities if the interval is (half) open.
Note that the bounds $a$ and $b$ are integers (except where 
$b=\infty$ in which case there is no upper bound constraint).
\item
Consider a token $t_j$ that is an input of some discrete transition $t$
via an input arc or a read arc labeled with interval $[a:b]$.
Note that the bounds $a$ and $b$ are integers (or $\infty$).
Let $x_k, x_{k+1}, \dots, x_{k+l}$ be the delays of the timed transitions
that happened between the creation of token $t_j$ and the transition $t$.
Then we must have $a \le y_j + x_k + x_{k+1} + \dots + x_{k+l} \le b$.
(Similarly with strict inequalities if the interval is (half) open.)
\end{iteMize} 
These inequations describe a polyhedron ${\it PH}$ which contains all 
feasible tuples of values $(y_1,\dots,y_m,x_1,\dots,x_n)$. 
By the precondition of this lemma, there exists a
computation $\initconf \compto{\comp} \finalconfs$
and thus the polyhedron ${\it PH}$ is nonempty.
Therefore we obtain the closure of the polyhedron $\overline{{\it PH}}$
by replacing all strict 
inequalities $<, >$ with normal inequalities $\le, \ge$.
Thus $\overline{{\it PH}}$ contains ${\it PH}$,
but every point in $\overline{{\it PH}}$ is arbitrarily close to a point in ${\it PH}$.
Now we show that the vertices of the 
polyhedron $\overline{{\it PH}}$ have integer
coordinates.

Let $v = (y_1,\dots,y_m,x_1,\dots,x_n)$ be a column vector of the free
variables. Then the polyhedron $\overline{{\it PH}}$ can be described by the 
inequation $M \cdot v \le c$, where $c$ is a column vector of integers and
$M$ is an integer matrix. Now we analyze the shape of the matrix $M$.
Each inequation corresponds to a row in $M$. If the inequality
is $\le$ then the elements are in $\{0,1\}$, and if the 
inequality is $\ge$ then the elements are in $\{0,-1\}$.
Each of the inequations above refers to at most one variable $y_j$,
and possibly one continuous block of several variables $x_k,x_{k+1},\dots,x_{k+l}$.
Moreover, for each $y_j$, this block (if it is nonempty)
starts with the same variable $x_k$. This is because the $x_k,x_{k+1},\dots,x_{k+l}$
describe the delays of the timed transitions between the creation
of token $t_j$ and the moment where $t_j$ is used. $x_k$ is always the first
delay after the creation of $t_j$, and no delays can be left
out. Note that the token $t_j$ can be used more than once, because transitions with
read arcs do not consume the token. 
We present the inequalities in blocks, where the first block contains all
which refer to $y_1$, the second block contains all which 
refer to $y_2$, etc. The last block contains those inequations 
that do not refer to any $y_j$, but only to variables $x_i$. 
Inside each block we sort the inequalities 
w.r.t. increasing length of the $x_k,x_{k+1},\dots,x_{k+l}$ block, i.e., 
from smaller values of $l$ to larger ones. (For $y_j$ we have the same $k$.)
Thus the matrix $M$ has the following form:
\[
\left(
\begin{array}{rrrr|rrrrrr}
1 & 0 & 0 & 0 & 0 & 0 & 0 & 0 & 0 & 0 \\
1 & 0 & 0 & 0 & 0 & 1 & 1 & 1 & 0 & 0 \\
1 & 0 & 0 & 0 & 0 & 1 & 1 & 1 & 1 & 0 \\
\dots\\
0 & 1 & 0 & 0 & 0 & 0 & 0 & 0 & 0 & 0 \\
0 & -1 & 0 & 0 & -1 & -1 & 0 & 0 & 0 & 0 \\
0 & 1 & 0 & 0 & 1 & 1 & 1 & 1 & 0 & 0 \\
\dots
\end{array}
\right)
\]
Formally, the shape of these matrices is defined as follows.

\begin{defi}\label{def:ptpn-matrix}
We call a $(z \times m+n)$-matrix a {\em PTPN constraint matrix},
if every row has one of the following two forms.
Let $j \in \{1,\dots,m\}$ and $k(j) \in \{1,\dots,n\}$ be a number that
depends only on $j$, and let $\alpha \in \{-1,1\}$.
First form: $0^{j-1} \alpha 0^{m-j} 0^{k(j)-1} \alpha^* 0^*$.
Second form: $0^* \alpha^* 0^*$.
Matrices that contain only rows of the second form all called {\em 3-block matrices}
in \cite{BouyerBBR:PTA}.
\end{defi}

\begin{defi}\label{def:totally_unimodular} \cite{NW88} An integer matrix is called 
{\em totally unimodular} iff
the determinant of all its square submatrices is equal to $0$, $1$ or $-1$.
\end{defi}

\begin{lem}\label{lem:PTPN_totally_unimodular}
All PTPN constraint matrices are totally unimodular.
\end{lem}
\smallskip
\begin{proof}
First, every square submatrix of a PTPN constraint matrix has the same form
and is also a PTPN constraint matrix. Thus it suffices to show the 
property for square PTPN constraint matrices.
We show this by induction on the size.
The base case of size $1\times 1$
is trivial, because the single value must be in $\{-1,0,1\}$.
For the induction step consider a square $k\times k$ PTPN constraint matrix
$M$, with some $n,m$ s.t. $n+m = k$.
If $M$ does not contain any row of the first form
then $M$ is a 3-block matrix and thus totally unimodular by 
\cite{BouyerBBR:PTA} (Lemma 2). Otherwise, $M$ contains a row $i$ of the first
form where $M(i,j) \in \{-1,1\}$ for some $1 \le j \le m$. 
Without restriction let $i$ be such a row in $M$ where 
the number of nonzero entries is minimal.
Consider all rows $i'$ in $M$ where $M(i',j) \neq 0$.
Except for $M(i',j)$, they just contain (at most) one block
of elements $1$ (or $-1$) that starts at position $m+k(j)$. 
By adding/subtracting row $i$ to all these other rows $i'$ where
$M(i',j) \neq 0$ we obtain a new matrix $M'$ where $M'(i,j)$ 
is the only nonzero entry in column $j$ in $M'$ and
$\det(M') = \det(M)$. Moreover, $M'$ is also a PTPN constraint matrix,
because of the minimality of the nonzero block length in row $i$
and because all these blocks start at $m+k(j)$. I.e., in $M'$ these
modified rows $i'$ have the form $0^* 1^* 0^*$ or $0^* (-1)^* 0^*$.
We obtain $M''$ from $M'$ by deleting column $j$ and row $i$,
and $M''$ is a $(k-1)\times (k-1)$ PTPN constraint matrix (because $j \le m$).
By induction hypothesis, $M''$ is totally unimodular and $\det(M'') \in \{-1,0,1\}$.
By the cofactor method, 
$\det(M') = (-1)^{i+j} * M'(i,j) * \det(M'') \in \{-1,0,1\}$.
Thus $\det(M) = \det(M') \in \{-1,0,1\}$ and 
$M$ is totally unimodular.
\end{proof}

\begin{thm}\label{thm:NW88} \cite{NW88}.
Consider the polyhedron $\{v \in \R^k\ |\ M \cdot v \le c\}$ 
with $M$ a totally unimodular $(p \times k)$ matrix
and $c \in \Z^p$. Then the coordinates of its vertices are integers.
\end{thm}

Since our polyhedron $\overline{{\it PH}}$ is described by a PTPN constraint matrix,
which is totally unimodular by Lemma~\ref{lem:PTPN_totally_unimodular}, 
it follows from Theorem~\ref{thm:NW88} 
that the vertices of
$\overline{{\it PH}}$ have integer coordinates.

Since the ${\it Cost}$ function is linear in $x_1,\dots,x_n$
(and does not depend on $y_1,\dots,y_m$),
the infimum of the costs on ${\it PH}$ is obtained at a vertex
of $\overline{{\it PH}}$, which has integer coordinates by Theorem~\ref{thm:NW88}.
Therefore, one can get arbitrarily close to the infimum cost
with values $y_1,\dots,y_m,x_1,\dots,x_n$ which are arbitrarily 
close to some integers.
Thus, for every computation $\initconf \compto{\comp} \finalconfs$
there exists a modified computation $\comp'$
with values $y_1,\dots,y_m,x_1,\dots,x_n$ arbitrarily 
close to integers (i.e., $\comp'$ in $\delta$-form for arbitrarily small $\delta >0$)
such that $\initconf \compto{\comp'} \finalconfs$
and $\costof{\comp'} \le \costof{\comp}$.
(Note that the final configuration reached by $\comp'$ possibly differs from the
final configuration of $\comp$ in the ages of some tokens. However, this does
not matter, because the set of configurations $\finalconfs$ is closed
under such changes.)
\end{proof}

The following corollary follows directly from Lemma~\ref{lem:deltaform}
and shows that in order to find the infimum of the cost it suffices to 
consider only computations in $\delta$-form for arbitrarily small $\delta >0$.

\begin{cor}
For every $\delta>0$ we have
\[
\optcostof{\initconf,\finalconfs} =
\inf\{\costof{\comp}\, |\, \initconf \compto{\comp} \finalconfs,
\mbox{$\comp$ in $\delta$-form}\}
\]
\end{cor}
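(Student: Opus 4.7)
The plan is to derive the equality as a direct two-sided consequence of Lemma~\ref{lem:deltaform}. Write $R_\delta$ for the right-hand side of the claimed equality, namely the infimum of $\costof{\comp}$ taken over computations $\comp$ in $\delta$-form with $\initconf \compto{\comp} \finalconfs$. I would fix an arbitrary $\delta > 0$ and establish $\optcostof{\initconf,\finalconfs} \le R_\delta$ and $R_\delta \le \optcostof{\initconf,\finalconfs}$ separately.

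The first inequality is immediate: every $\delta$-form computation from $\initconf$ to $\finalconfs$ is in particular \emph{some} computation from $\initconf$ to $\finalconfs$, so the set of costs over which $R_\delta$ is taken is a subset of the set over which $\optcostof{\initconf,\finalconfs}$ is taken. Hence the infimum over the smaller set cannot be smaller than the infimum over the larger one.

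For the second inequality, I would invoke Lemma~\ref{lem:deltaform} directly. Given any computation $\initconf \compto{\comp} \finalconfs$, the lemma produces a companion computation $\comp'$ in $\delta$-form with $\initconf \compto{\comp'} \finalconfs$ and $\costof{\comp'} \le \costof{\comp}$. In particular, $R_\delta \le \costof{\comp'} \le \costof{\comp}$. Since this bound holds for every witnessing $\comp$, taking infimum over all such $\comp$ on the right gives $R_\delta \le \optcostof{\initconf,\finalconfs}$. Combining the two inequalities yields the desired equality.

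There is no substantive obstacle at this stage, and indeed the paper explicitly labels this statement as a corollary following ``directly'' from the preceding lemma. All the technical content, namely producing $\delta$-form approximants of arbitrary computations without increasing the cost, has already been done in Lemma~\ref{lem:deltaform} via the polyhedral/total-unimodularity argument; the corollary is just its repackaging as an identity between two infima.
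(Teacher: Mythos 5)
Your argument is correct and is exactly the intended two-sided infimum comparison: the trivial direction from restricting the index set, and the nontrivial direction supplied by Lemma~\ref{lem:deltaform}. The paper gives no separate proof, simply noting the corollary ``follows directly'' from that lemma, which matches your approach.
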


\section{Abstract PTPN}\label{sec:abstract_ptpn}
\noindent
We now reduce the Cost-Optimality problem to a simpler case without
explicit clocks by defining a new class of systems called 
{\em abstract PTPN} (for short A-PTPN), whose computations
represent PTPN computations in $\delta$-form, for infinitesimally small
values of $\delta >0$.
For each PTPN $\ptpn=\ptpntuple$, we define a corresponding
A-PTPN $\ptpn'$, sometimes denoted by $\aptpnof{\ptpn}$.
The A-PTPN $\ptpn'$ is syntactically of the same form $\ptpntuple$ as $\ptpn$.
However, $\ptpn'$ induces a different transition system, because
its configurations and operational semantics are different.
Below we define the set of markings of the A-PTPN, and then describe the transition
relation.
We will also explain the relation to the markings and the transition relation
induced by the original PTPN.
\paragraph{\bf Markings and Configurations}
Fix a $\delta:0<\delta\leq 2/5$.
A marking $\marking$ of $\ptpn$ in $\delta$-form is encoded by a marking
$\aptpnof\marking$ of $\ptpn'$ which is
described by a triple 
$\tuple{\highword,\zmset,\lowword}$ where 
$\highword,\lowword\in\wordsover{\left(\msetsover{(\places\times\zeroto{\maxval +1})}\right)}$ and
$\zmset\in \msetsover{(\places\times\zeroto{\maxval +1})}$.
The ages of the tokens in $\aptpnof\marking$ are integers and
therefore only carry the integral parts
of the tokens in the original PTPN.
However, the marking $\aptpnof\marking$ carries additional information about the
fractional parts of the tokens as follows.
The tokens in $\highword$ represent tokens
in $\marking$ that have \emph{high} fractional parts 
(their values are at most $\delta$ below
the next integer);
the tokens in $\lowword$ represent tokens
in $\marking$ that have \emph{low} fractional parts 
(their values at most $\delta$ above the previous integer); while
tokens in $\zmset$ represent tokens
in $\marking$ that have \emph{zero} fractional parts (their values are equal to
an integer).
Furthermore, the ordering among the fractional parts of tokens
in $\highword$ (resp.\ $\lowword$) is represented by the positions of the
multisets to which they belong in  $\highword$ (resp.\ $\lowword$).
Let $\marking =\marking_{-m}, \dots , \marking_{-1}, \marking_0,
\marking_1,\dots,\marking_n$ be the decomposition of $\marking$ into
fractional parts, as defined in Section~\ref{delta:section}.
Then we define
$\aptpnof\marking:= \tuple{\highword, \zmset, \lowword}$
with $\highword = b_{-m} \dots b_{-1}$, 
and 
$\lowword = b_1 \dots b_n$,
where
$b_i((p,\lfloor x\rfloor)) = \marking_i((p,x))$ if $x \le \maxval$.
(This is well defined, because $\marking_i$
contains only tokens with one particular fractional part.)
Furthermore,  $b_0((p,\maxval +1)) = \sum_{y > \maxval} \marking((p,y))$, i.e.,
all tokens whose age is $> \maxval$ are abstracted as tokens of age $\maxval +1$,
because the PTPN cannot distinguish between token ages $> \maxval$.
Note that $\highword$ and $\lowword$ represent tokens with fractional 
parts in increasing order.
An A-PTPN configuration is a control-state plus a marking.
If we apply $\aptpn$ to a set of configurations (i.e., $\aptpn(\finalconfs)$),
we implicitly restrict this set to the subset of configurations in $2/5$-form. 

\paragraph{\bf Transition Relation}

The transitions on the A-PTPN are defined as follows.
For every discrete transition 
$\transition = (q_1,q_2,\inputs,\reads,\outputs) \in \transitions$
we have  
$(q_1, b_{-m} \dots b_{-1}, b_0, b_1 \dots b_n) \ttrans
(q_2, c_{-m'} \dots c_{-1}, c_0, c_1 \dots c_{n'})$
if the following conditions are satisfied:
For every $i:-m \le i \le n$ 
there exist $b_{i}^I, b_{i}^R, b_{i}^{\it rest}, \hat{O}, b_0^O \in
\msetsover{(\places\times\zeroto{\maxval +1})}$
s.t. for every $0 < \epsilon < 1$ we have
\begin{iteMize}{$\bullet$}
\item
$b_{i} = b_{i}^I + b_{i}^R + b_{i}^{\it rest}$ for $-m \le i \le n$
\item
$\match((\sum_{i\neq 0} b_i^I)^{+\epsilon} + b_0^I, \inputs)$
\item
$\match((\sum_{i\neq 0} b_i^R)^{+\epsilon} + b_0^R, \reads)$
\item
$\match(\hat{O}^{+\epsilon} + b_0^O, \outputs)$ 
\item
There is a strictly monotone injection $f: \{-m,\dots,n\} \mapsto \{-m',\dots,n'\}$
where $f(0)=0$ 
s.t. $c_{f(i)} \ge b_i - b_i^I$
and $c_0 = b_0 - b_0^I + b_0^O$
and
$\sum_{i \neq 0} c_i = (\sum_{i\neq 0} b_{i} - b_{i}^I) + \hat{O}$.
\end{iteMize}
The intuition is that the A-PTPN tokens in $b_i$ for $i \neq 0$ represent
PTPN tokens with a little larger, and strictly positive,
fractional part. Thus their age is incremented by $\epsilon>0$
before it is matched to the input, read and output arcs.
The fractional parts of the tokens that are not involved
in the transition stay the same.
However, since all the time intervals in the PTPN have integer bounds,
the fractional parts of newly created tokens are totally arbitrary.
Thus they can be inserted at any position in the sequence, between any
positions in the sequence, or before/after the sequence
of existing fractional parts.
This is specified by the last condition on the
sequence $c_{-m'} \dots c_{-1}, c_0, c_1 \dots c_{n'}$.

The following lemma shows the connection between a PTPN
and its corresponding A-PTPN for discrete transition steps.

\begin{lem}\label{lem:PTPN_APTPN_disc}
Let $(q,M)$ be a PTPN configuration in $\delta$-form for some $\delta \le
1/5$.
There is an occurrence of a discrete transition in $\delta$-form
$(q,M) \ttrans (q',M')$ if and only if
$\aptpn((q,M)) \ttrans \aptpn((q',M'))$. 
\end{lem}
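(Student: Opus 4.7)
The plan is to establish the biconditional in both directions by translating between concrete PTPN witnesses of a discrete transition firing $\transition$ and abstract A-PTPN witnesses of the same $\transition$ firing. The key preliminary observation, which I would record first, is that because every PTPN guard interval has integer endpoints, a token of age $k + \phi$ with $0 < \phi < 1$ satisfies a given guard iff a token of age $k + \epsilon$ does, for some (equivalently, every) $\epsilon \in (0,1)$. This is exactly what justifies the A-PTPN's lumping of all non-zero-fractional-part tokens into a single sum tested with one common $\epsilon$.

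For the forward direction, start from a PTPN witness $\marking = I + R + \marking^{\it rest}$, $\marking' = O + R + \marking^{\it rest}$ with the three $\match$ conditions, and decompose each of $I$, $R$, $\marking^{\it rest}$ along the fractional-part partition $\marking_{-m}, \ldots, \marking_0, \ldots, \marking_n$ of $\marking$. Define $b_i^I$, $b_i^R$, $b_i^{\it rest}$ as the $\aptpn$-images of these pieces (with ages exceeding $\maxval$ folded into $\maxval + 1$), so that $b_i = b_i^I + b_i^R + b_i^{\it rest}$. Split $O$ into its integer-aged part $b_0^O$ and non-integer remainder $\hat O$, and construct $f$ by sending each surviving index $i$ of $\marking$ to the unique index in $\marking'$ carrying the same fractional part as $\marking_i$; indices with $b_i = b_i^I$ whose fractional part does not reappear in $\marking'$ are placed in any slot consistent with strict monotonicity, since their constraint $c_{f(i)} \ge b_i - b_i^I$ is vacuous. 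The three A-PTPN match conditions then reduce, via the preliminary observation applied separately to the integer-aged and non-integer-aged pieces, to the given PTPN match conditions.

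For the backward direction, given A-PTPN witnesses $b_i^I$, $b_i^R$, $b_i^{\it rest}$, $\hat O$, $b_0^O$, $f$, I would reconstruct concrete real-valued ages. Pick source fractional parts $\phi_{-m} < \ldots < \phi_{-1}$ in $(1-\delta,1)$ and $\phi_1 < \ldots < \phi_n$ in $(0,\delta)$, and let $\marking$ be the PTPN marking obtained by taking $b_0$ at integer ages and shifting every $(p,k) \in b_i$ to $(p, k + \phi_i)$ for $i \neq 0$; then $\aptpn(\marking)$ matches the given source. Next, pick target fractional parts $\psi_{-m'} < \ldots < \psi_{-1}$ in $(1-\delta,1)$ and $\psi_1 < \ldots < \psi_{n'}$ in $(0,\delta)$ satisfying $\psi_{f(i)} = \phi_i$ wherever $b_i \neq b_i^I$ and chosen freely (interleaved in the correct order) elsewhere; the density of the open regions makes this possible. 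Then taking $I$ as the pre-image of $\sum_i b_i^I$ at fractional parts $\phi$, $R$ analogously, and $O$ as the pre-image of $\hat O$ at the target fractional parts $\psi$ (together with $b_0^O$ at integer ages), yields concrete multisets satisfying the PTPN $\match$ conditions by the preliminary observation, and $\marking' = O + R + \marking^{\it rest}$ is in $\delta$-form because every fractional part lies in $[0,\delta) \cup (1-\delta,1)$.

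The main technical nuisance is the bookkeeping around $f$ when entire fractional-part submarkings of $\marking$ vanish (all of $\marking_i$ lies in $I$) and when $\hat O$ introduces fresh fractional parts in $\marking'$ absent from $\marking$. The former is handled by allowing unconstrained slots in $f$ to land in any position consistent with strict monotonicity (possibly in an otherwise empty $c_i$ that is elided when passing to the canonical form); the latter by interleaving fresh positions into the $c$-sequence for the new output fractional parts, relying throughout on the density of the open intervals $(0,\delta)$ and $(1-\delta,1)$.
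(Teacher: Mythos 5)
Your proof is correct and follows essentially the same route as the paper's: forward, decompose the concrete witnesses $I$, $R$, $M^{\mathit{rest}}$, $O$ along the fractional-part classes and use integrality of the interval bounds to transfer the $\mathit{match}$ conditions; backward, pull the abstract witnesses back to concrete ages, placing the fractional parts of new output tokens in the dense intervals $(0,\delta)$ and $(1-\delta,1)$ so that $O$, and hence the occurrence, is in $\delta$-form. The only wrinkle is that in the backward direction the source fractional parts $\varphi_i$ should be taken to be those of the \emph{given} marking $M$ (they already lie in the required ranges since $M$ is in $\delta$-form) rather than freshly chosen ones -- a trivial adjustment -- and your explicit handling of vanishing fractional classes via vacuous, elided slots for $f$ is if anything more careful than the paper's own treatment.
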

\begin{proof}
Let $\marking=\marking_{-m}+ \dots + \marking_{-1}+ \marking_0+
\marking_1+\dots+\marking_n$ be the unique decomposition of $\marking$ into
increasing fractional parts, and
$\aptpnof\marking:= \tuple{b_{-m} \dots b_{-1}, \zmset, b_1 \dots b_n}$,
as defined in Section~\ref{sec:abstract_ptpn}.
Let $\transition = (q,q',\inputs,\reads,\outputs)$.

Now we prove the first implication.
Provided that $(q,\marking) \ttrans (q',\marking')$
there exist $I,O,R,\marking^{\it rest} \in \msetsover{(\places\times\nnreals)}$ s.t. the 
following conditions are satisfied:
\begin{iteMize}{$\bullet$}
\item
$\marking = I + R + \marking^{\it rest}$
\item
$\match(I, \inputs)$, $\match(R, \reads)$ and 
$\match(O, \outputs)$.
\item
$\marking' = O + R + \marking^{\it rest}$.
\end{iteMize}
Thus each $\marking_i$ can be decomposed into parts 
$\marking_i = \marking_i^{I} + \marking_i^{R} + \marking_i^{\it rest}$,
where $I = \sum_i \marking_i^{I}$, 
$R =  \sum_i \marking_i^{R}$,
$\marking^{\it rest} = \sum_i \marking_i^{\it rest}$.
Let $b_i^I = \aptpnof{\marking_i^{I}}$,
$b_i^R = \aptpnof{\marking_i^{R}}$,
$b_i^{\it rest} = \aptpnof{\marking_i^{\it rest}}$.
Then $b_{i} = b_{i}^I + b_{i}^R + b_{i}^{\it rest}$.
Since the time intervals on transitions have integer bounds,
we obtain $\match((\sum_{i\neq 0} b_i^I)^{+\epsilon} + b_0^I, \inputs)$
and
$\match((\sum_{i\neq 0} b_i^R)^{+\epsilon} + b_0^R, \reads)$.

Similarly as $\marking$, the marking $O$ can be uniquely 
decomposed into parts with increasing fractional part of the ages of tokens,
i.e., $O = O_{-j}+ \dots + O_{-1}+ O_0+
O_1+\dots+O_k$.
Let $\hat{O} = \aptpnof{O-O_0}$ and $b_0^O = \aptpnof{O_0}$.
Thus we get $\match(\hat{O}^{+\epsilon} + b_0^O, \outputs)$.

Since $\marking' = O + R + \marking^{\it rest}$, the sequence of the
remaining parts of the $\marking_i$ is merged with the 
sequence $O_{-j}+ \dots + O_{-1}+ O_0+ O_1+\dots+O_k$.
Thus $\marking'$ can be uniquely decomposed into parts with 
increasing fractional part of the ages of tokens, i.e.,
$\marking' = \marking'_{-m'}+ \dots + \marking'_{-1}+ \marking'_0+
\marking'_1+\dots+\marking'_{n'}$.
Let $c_i = \aptpnof{\marking'_{i}}$.
Thus there is a strictly monotone injection 
$f: \{-m,\dots,n\} \mapsto \{-m',\dots,n'\}$
where $f(0)=0$ 
s.t. $c_{f(i)} \ge b_i - b_i^I$
and $c_0 = b_0 - b_0^I + b_0^O$
and
$\sum_{i \neq 0} c_i = (\sum_{i\neq 0} b_{i} - b_{i}^I) + \hat{O}$.

Thus
$\aptpnof{(q,\marking)} = 
(q, b_{-m} \dots b_{-1}, b_0, b_1 \dots b_n) \ttrans
(q', c_{-m'} \dots c_{-1}, c_0, c_1 \dots c_{n'})
= \aptpnof{(q',\marking')}
$.

Now we show the other direction.
If $\aptpnof{(q,\marking)} \ttrans \aptpnof{(q',\marking')}$ then
we have $\aptpnof{(q',\marking')} = (q', c_{-m'} \dots c_{-1}, c_0, c_1 \dots
c_{n'})$ s.t.
\begin{iteMize}{$\bullet$}
\item
$b_{i} = b_{i}^I + b_{i}^R + b_{i}^{\it rest}$ for $-m \le i \le n$
\item
$\match((\sum_{i\neq 0} b_i^I)^{+\epsilon} + b_0^I, \inputs)$
\item
$\match((\sum_{i\neq 0} b_i^R)^{+\epsilon} + b_0^R, \reads)$
\item
$\match(\hat{O}^{+\epsilon} + b_0^O, \outputs)$ 
\item
There is a strictly monotone injection $f: \{-m,\dots,n\} \mapsto \{-m',\dots,n'\}$
where $f(0)=0$ 
s.t. $c_{f(i)} \ge b_i - b_i^I$
and $c_0 = b_0 - b_0^I + b_0^O$
and
$\sum_{i \neq 0} c_i = (\sum_{i\neq 0} b_{i} - b_{i}^I) + \hat{O}$.
\end{iteMize}
As before, each $\marking_i$ can be decomposed into parts 
$\marking_i = \marking_i^{I} + \marking_i^{R} + \marking_i^{\it rest}$,
where $b_i^I = \aptpnof{\marking_i^{I}}$,
$b_i^R = \aptpnof{\marking_i^{R}}$, and
$b_i^{\it rest} = \aptpnof{\marking_i^{\it rest}}$.
Let $I = \sum_i \marking_i^{I}$, 
$R =  \sum_i \marking_i^{R}$, and
$\marking^{\it rest} = \sum_i \marking_i^{\it rest}$.
So we have $\marking = I + R + \marking^{\it rest}$.
Furthermore, since the interval bounds are integers,
we have $\match(I, \inputs)$, $\match(R, \reads)$ and 
$\match(O, \outputs)$.
Finally, due to the conditions on $\hat{O}$ and $b_0^O$,
there exists a marking $O$ s.t.
$\hat{O} + b_0^O = \aptpnof{O}$,
$\marking' = O + R + \marking^{\it rest}$
and
$\aptpnof{(q',\marking')} = (q', c_{-m'} \dots c_{-1}, c_0, c_1 \dots c_{n'})$.
Moreover, this $O$ can be chosen to be in $\delta$-form, for the following
reasons. The tokens in $O$ whose fractional part is the same as a fractional
part in $\marking$ are trivially in $\delta$-form, because $\marking$
is in $\delta$-form.
The tokens in $O$ whose fractional part is between two fractional
parts in $\marking$ is also trivially in $\delta$-form, because $\marking$
is in $\delta$-form.
Now consider the tokens in $O$ whose fractional part is larger
than any fractional part in $M_1 + \dots + M_n$.
Let $\delta_1$ be the maximal fractional part in $M_1 + \dots + M_n$.
We have $\delta_1 < \delta$, because $\marking$ is in $\delta$-form.
Since $\delta_1 < \delta$, the interval $(\delta_1:\delta)$
is nonempty and contains uncountably many different values.
Therefore there is still space for infinitely many different fractional parts
in $O$ in the interval $(\delta_1:\delta)$.
Finally consider the tokens in $O$ whose fractional part is smaller
than any fractional part in $M_{-m} + \dots + M_{-1}$.
Let $\delta_2$ be the minimal fractional part in $M_{-m} + \dots + M_{-1}$.
We have $\delta_2 > 1-\delta$, because $\marking$ is in $\delta$-form.
Therefore there is still space for infinitely many different fractional parts
in $O$ in the nonempty interval $(1-\delta:\delta_2)$.

Thus, since $O$ is in $\delta$-form, 
the transition $(q,\marking) \ttrans (q',\marking')$ is in
$\delta$-form, as required.
\end{proof}

Now we show how to encode timed transitions into A-PTPN.
We define A-PTPN transitions that encode the
effect of PTPN detailed timed transitions $\movesto{x}$ 
for $x \in (0:\delta)$ or $x \in (1-\delta:1)$ for sufficiently small
$\delta>0$. We call these {\em abstract timed transitions}.
For any multiset $b \in \msetsover{(\places\times\zeroto{\maxval +1})}$
let $b^+ \in \msetsover{(\places\times\zeroto{\maxval +1})}$ 
be defined by $b^+((p,x+1)) = b((p,x))$ for $x \le \maxval$
and $b^+((p,\maxval +1)) = b((p,\maxval +1)) + b((p,\maxval))$, i.e., the age
$\maxval+1$ represents all ages $> \maxval$.
There are 4 different types of abstract timed transitions.
(In the following all $b_i$ are nonempty.)

\begin{enumerate}[\hbox to8 pt{\hfill}]      
\item\noindent{\hskip-12 pt\bf Type 1:}\
$(q_1, b_{-m} \dots b_{-1}, b_0, b_1 \dots b_n) \movesto{}
(q_1, b_{-m} \dots b_{-1}, \emptyset, b_0 b_1 \dots b_n)$.
This simulates a very small delay $\delta >0$ where 
the tokens of integer age in $b_0$ now have a positive fractional part,
but no tokens reach an integer age.
\item\noindent{\hskip-12 pt\bf Type 2:}\
$(q_1, b_{-m} \dots b_{-1}, \emptyset, b_1 \dots b_n) \movesto{} 
(q_1, b_{-m} \dots b_{-2}, b_{-1}^{+}, b_1 \dots b_n)$.
This simulates a very small delay $\delta >0$ in the case where
there were no tokens of integer age
and the tokens in $b_{-1}$ just reach the next higher integer age.
\item\noindent{\hskip-12 pt\bf Type 3:}\
$(q_1, b_{-m} \dots b_{-1}, b_0, b_1 \dots b_n) \movesto{} 
(q_1, b_{-m}^{+} \dots b_{-2}^{+} b_{-1}^{+} b_0 \dots b_k, \emptyset,
b_{k+1}^{+} \dots b_n^{+})$ for a $k \in \{0,\dots,n\}$.
This simulates a delay in $(1-\delta:1)$ where the tokens in $b_0 \dots b_k$
do not quite reach the next higher integer and no token gets an integer age.
\item\noindent{\hskip-12 pt\bf Type 4:}\
$(q_1, b_{-m} \dots b_{-1}, b_0, b_1 \dots b_n) \movesto{} 
(q_1, b_{-m}^{+} \dots b_{-2}^{+} b_{-1}^{+} b_0 \dots b_k, b_{k+1}^{+},
b_{k+2}^{+} \dots b_n^{+})$ for some $k \in \{0,\dots,n-1\}$.
This simulates a delay in $(1-\delta:1)$ where the tokens in $b_0, \dots b_k$
do not quite reach the next higher integer and the tokens
on $b_{k+1}$ just reach the next higher integer age.
\end{enumerate}

\noindent The cost model for A-PTPN is defined as follows.
For every transition $\transition \in \transitions$ we have
$\costof{(q_1,M_1) \ttrans (q_2,M_2)} := \costof{t}$, just like in PTPN.
For abstract timed transitions of types 1 and 2 we define the cost as zero.
For abstract timed transitions $(q,M_1) \movesto{} (q,M_2)$ of
types 3 and 4, we define 
$\costof{(q,M_1) \movesto{} (q,M_2)} := \sum_{\place \in \places}
|M_1(\place)|*\costof{\place}$ (i.e., as if the elapsed time had length 1).
The intuition is that, as $\delta$ converges to zero, the cost 
of the PTPN timed transitions of length in $(0:\delta)$ (types 1 and 2)
or in $(1-\delta:1)$ (types 3 and 4) converges to the cost of the
corresponding abstract timed transitions in the A-PTPN.
This will be shown formally in Lemma~\ref{lem:cost-PTPN-APTPN}.

The following two lemmas show the connection between (detailed) timed
transitions in PTPN and abstract timed transitions in the corresponding
A-PTPN.

\begin{lem}\label{lem:PTPN_APTPN_fast}
Let $(q,M)$ be a PTPN configuration in $\delta$-form for some $\delta \le
1/5$ and $x \in (0:\delta)$.
There is a PTPN detailed timed transition 
$(q,M) \movesto{x} (q,M^{+x})$ if and only if
there is an A-PTPN abstract timed transition of type 1 or 2 s.t.
$\aptpn((q,M)) \movesto{} \aptpn((q,M^{+x}))$. 
\end{lem}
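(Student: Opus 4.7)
The plan is to decompose $M = M_{-m} + \cdots + M_{-1} + M_0 + M_1 + \cdots + M_n$ into its fractional-part submarkings and trace how the decomposition evolves under the shift by $x$, matching the outcome to either a Type 1 or a Type 2 abstract transition. Let $\epsilon$ denote the common fractional part of the tokens of $M_{-1}$ (or $\epsilon := 1/2$ if $M_{-1}$ is absent); the $\delta$-form hypothesis together with $\delta \le 1/5$ gives that every token in $M_i$ with $i > 0$ has fractional part in $(0:\delta)$, every token in $M_i$ with $i < 0$ has fractional part in $(1-\delta:1)$, and $\epsilon > 1-\delta$ whenever $M_{-1}$ exists.

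For the forward direction I case-split on the two clauses of the detailed condition. Since $x < \delta \le 1/5$, no token whose fractional part lies below $1-\delta$ can cross the next integer during the shift, so the only candidates for crossing are the tokens of $M_{-1}$, and those cross exactly when $x \ge 1-\epsilon$. In the no-crossing case $0 < x < 1-\epsilon$, the tokens of $M_0$ acquire the fractional value $x$, which is strictly smaller than $f_i + x$ for every $i > 0$ (writing $f_i$ for the fractional part of tokens in $M_i$); hence $M_0^{+x}$ becomes the leftmost low submarking in the new decomposition while every other submarking preserves both its integer parts and its relative order, and applying $\aptpn$ yields precisely $(b_{-m}\cdots b_{-1}, \emptyset, b_0 b_1 \cdots b_n)$, the Type 1 target (degenerating to a self-loop when $b_0 = \emptyset$). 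In the crossing case $M_0 = \emptyset$ and $x = 1-\epsilon$, the tokens of $M_{-1}$ land exactly on the next integer, no other submarking crosses, and the remaining submarkings keep their integer parts and their relative order; hence $\aptpn(M^{+x}) = (b_{-m}\cdots b_{-2}, b_{-1}^{+}, b_1 \cdots b_n)$, the Type 2 target.

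The converse is read off directly from the shape of the abstract target. If $\aptpn(M^{+x})$ has the Type 1 shape, its middle multiset is empty, forcing that no token of $M$ crossed an integer under the shift; this is exactly the condition $0 < x < 1-\epsilon$, and hence the PTPN transition is detailed. If $\aptpn(M^{+x})$ has the Type 2 shape, its middle multiset is $b_{-1}^{+}$, which forces the tokens of $M_{-1}$ to have landed on integer age, so $x = 1-\epsilon$, while the source $\aptpn(M)$ having empty middle already ensures $M_0 = \emptyset$; again the PTPN transition is detailed.

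The main obstacle is the careful bookkeeping inside $\aptpn$: one must check that the empty/nonempty status of every position in the target word matches the word-structure dictated by the abstract transition, that the clamping of ages strictly above $\maxval$ to $\maxval+1$ commutes with the shift (which it does, by the very definition of $b^+$), and that the degenerate case $M_{-1}$ absent (so $\epsilon = 1/2$) still behaves correctly in both directions; there, $x < \delta \le 1/5 < 1/2 = 1-\epsilon$ excludes Type 2 and Type 1 always applies. All of these points rest on $\delta \le 1/5$, which keeps the low and high fractional regions cleanly separated both before and after any shift of size below $\delta$.
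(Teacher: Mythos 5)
Your proof is correct and follows essentially the same route as the paper's: decompose $M$ into its fractional-part submarkings, let $\epsilon$ be the fractional part of $M_{-1}$, and split into the case $0 < x < 1-\epsilon$ (giving the Type~1 step) and the case $M_0=\emptyset$, $x=1-\epsilon$ (giving the Type~2 step), with the converse read off from the shape of the abstract target. Your extra remarks on the $\maxval$-clamping and the degenerate case where $M_{-1}$ is absent go slightly beyond the paper's own write-up but do not change the argument.
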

\begin{proof}
Let $\marking=\marking_{-m}+ \dots + \marking_{-1}+ \marking_0+
\marking_1+\dots+\marking_n$ be the unique decomposition of $\marking$ into
increasing fractional parts (as defined in Section~\ref{delta:section}), and
$\aptpnof\marking:= \tuple{b_{-m} \dots b_{-1}, \zmset, b_1 \dots b_n}$,
as defined as above.
Let $\epsilon$ be the fractional part of the ages of the tokens
in $\marking_{-1}$. Since $(q,M)$ is in $\delta$-form, we have
$0 < 1-\epsilon < \delta$. 
Now there are two cases.

In the first case we have $x < 1-\epsilon$. 
Then the tokens in $M_{-1}^{+x}$ will have fractional
part $\epsilon +x \in (1-\delta:1)$, and the tokens in $M_{0}^{+x}$ 
will have fractional part $x \in (0:\delta)$.
Therefore $\aptpn((q,M)) =  
(q, \tuple{b_{-m} \dots b_{-1}, \zmset, b_1 \dots b_n})
\movesto{}
(q, \tuple{b_{-m} \dots b_{-1}, \emptyset, \zmset b_1 \dots b_n})
=
\aptpn((q,M^{+x}))$, by an A-PTPN abstract timed transition of type 1,
if and only if $(q,M)\!\movesto{x} (q,M^{+x})$. 

In the second case we must have $x = 1-\epsilon$ and $M_0 = \emptyset$,
because $(q,M) \movesto{x} (q,M^{+x})$ is a detailed timed transition.
In this case exactly the tokens in $M_{-1}$ reach the next higher integer age,
i.e., the tokens in $M_{-1}^{+x}$ have integer age and the integer is one higher than
the integer part of the age of the tokens in $M_0$.
Thus $\aptpn((q,M)) =  
(q, \tuple{b_{-m} \dots b_{-1}, \emptyset, b_1 \dots b_n})
\movesto{}
(q, \tuple{b_{-m} \dots b_{-2}, b_{-1}^+, b_1 \dots b_n})
=
\aptpn((q,M^{+x}))$, by an A-PTPN abstract timed transition of type 2,
if and only if $(q,M) \movesto{x} (q,M^{+x})$.
\end{proof}

\begin{lem}\label{lem:PTPN_APTPN_slow}
Let $(q,M)$ be a PTPN configuration in $\delta$-form for some $\delta \le
1/5$ and $x \in (1-\delta:1)$.
There is a PTPN timed transition 
$(q,M) \movesto{x} (q,M^{+x})$ if and only if
there is an A-PTPN transition of either type 3 or 4 s.t.
$\aptpn((q,M)) \movesto{} \aptpn((q,M^{+x}))$. 
\end{lem}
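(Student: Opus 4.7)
The plan is to mirror the structure of the proof of Lemma~\ref{lem:PTPN_APTPN_fast}, handling the slow delay case $x \in (1-\delta:1)$ in parallel to the fast case. As before, I would write the unique decomposition $\marking = \marking_{-m} + \dots + \marking_{-1} + \marking_0 + \marking_1 + \dots + \marking_n$ with $\epsilon_i$ denoting the common fractional part of the tokens in $\marking_i$, and let $\aptpn((q,\marking)) = \tuple{b_{-m}\dots b_{-1}, b_0, b_1\dots b_n}$ as in Section~\ref{sec:abstract_ptpn}. Since $(q,\marking)$ is in $\delta$-form, $\epsilon_i > 1-\delta$ for $i<0$ and $\epsilon_i < \delta$ for $i>0$.

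The key calculation is to track which tokens cross an integer boundary when $x$ is added. Because $\delta \le 1/5$, for every $i<0$ we have $\epsilon_i + x > 2(1-\delta) > 1$, so every token represented in $b_{-m}\dots b_{-1}$ crosses to the next integer, with new fractional part $\epsilon_i + x - 1 \in (1-2\delta:1)$. Tokens in $\marking_0$ acquire fractional part $x \in (1-\delta:1)$, which is again high. For tokens in $\marking_i$ with $i>0$, since $\epsilon_i < \delta$, three cases arise: if $\epsilon_i + x < 1$ the new fractional part is $\epsilon_i + x \in (1-\delta:1)$ (high, no crossing); if $\epsilon_i + x = 1$ the token lands exactly on the next integer; if $\epsilon_i + x > 1$ the token crosses and gets new fractional part $\epsilon_i + x - 1 \in (0:\delta)$ (low). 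Because the $\epsilon_i$ for $i>0$ are strictly increasing, there is a unique threshold index $k\in\{0,\dots,n\}$ separating non-crossing from crossing indices, and at most one $i$ can satisfy $\epsilon_i + x = 1$, which must be $i = k+1$.

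For the forward direction, given $(q,\marking)\movesto{x}(q,\marking^{+x})$, I would identify $k$ and branch on whether $\epsilon_{k+1}+x = 1$ (type 4) or not (type 3). I then verify that the ordered decomposition of $\marking^{+x}$ is exactly what the A-PTPN rule produces: the new sequence of high fractional parts, in increasing order, is $\epsilon_{-m}+x-1 < \dots < \epsilon_{-1}+x-1 < x < \epsilon_1+x < \dots < \epsilon_k+x$, which under $\aptpn$ becomes $b_{-m}^{+}\dots b_{-1}^{+} b_0 b_1\dots b_k$; the integer-aged block becomes either empty (type 3) or $b_{k+1}^{+}$ (type 4); and the low-fractional tail becomes $b_{k+1}^{+}\dots b_n^{+}$ or $b_{k+2}^{+}\dots b_n^{+}$, respectively, matching the definitions of type 3/4. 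For the backward direction, given a type 3 or type 4 abstract transition with parameter $k$, I would exhibit a concrete $x$: in the type 3 case pick any $x$ in the open interval $(\max(1-\delta,\, 1-\epsilon_{k+1}),\, 1-\epsilon_k)$, nonempty because $\epsilon_k<\epsilon_{k+1}<\delta$ (with the obvious one-sided modifications when $k=0$ or $k=n$); in the type 4 case take $x = 1-\epsilon_{k+1} \in (1-\delta:1)$. Adding this $x$ to all token ages reproduces the claimed ordered decomposition and hence the asserted $\aptpn$ image.

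The main obstacle is not conceptual but bookkeeping: one must verify that the reordering of fractional parts after adding $x$ agrees with the concatenation pattern prescribed by the A-PTPN rule, and handle edge cases such as $\marking_0 = \emptyset$, $k=0$, $k=n$, empty high blocks, and tokens of age exceeding $\maxval$ whose integer part saturates at $\maxval+1$ under the $b^+$ operation. No new idea beyond careful case analysis is required, and the arithmetic inequalities needed are all guaranteed by the bound $\delta \le 1/5$.
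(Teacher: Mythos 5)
Your proposal is correct and follows essentially the same route as the paper's proof: decompose $M$ into its fractional-part groups, locate the threshold index $k$, distinguish type 3 (no group lands exactly on an integer) from type 4 ($\epsilon_{k+1}+x=1$), and check that the reordered decomposition of $M^{+x}$ matches the prescribed target, with the converse handled by choosing $x$ in $(1-\epsilon_{k+1}:1-\epsilon_k)$ or $x=1-\epsilon_{k+1}$. Your write-up is in fact slightly more explicit than the paper's (the ordering chain for the new high part, the crossing of the old high groups, saturation at $\maxval+1$), and correctly gives the type-4 tail as $b_{k+2}^{+}\dots b_n^{+}$.
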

\begin{proof}
Let $\marking=\marking_{-m}+ \dots + \marking_{-1}+ \marking_0+
\marking_1+\dots+\marking_n$ be the unique decomposition of $\marking$ into
increasing fractional parts (as defined in Section~\ref{delta:section}), and
$\aptpnof\marking:= \tuple{b_{-m} \dots b_{-1}, \zmset, b_1 \dots b_n}$,
as defined above.
Let $\epsilon_k$ be the fractional part of the ages of the tokens
in $\marking_{k}$ for $0 \le k \le n$. 
Since $(q,M)$ is in $\delta$-form, we have
$0 < \epsilon_k < \delta$. 
Now there are two cases.

In the first case we have $x \in (1-\epsilon_{k+1}:1-\epsilon_k) \subseteq
(1-\delta:1)$ for some $0 \le k \le n$.
(If $k=n$ we have $x \in (1-\delta:1-\epsilon_n)$, and
if $k=0$ we have $x \in (1-\epsilon_1:1)$.)
Then, in the step from $M_{k+1}$ to $M_{k+1}^{+x}$, the token ages 
in $M_{k+1}$ reach and
slightly exceed the next higher integer age, while
the token ages in $M_{k}^{+x}$ still stay slightly below the next higher
integer.
Therefore $\aptpn((q,M)) =  
(q, \tuple{b_{-m} \dots b_{-1}, \zmset, b_1 \dots b_n})
\movesto{}
(q, \tuple{b_{-m}^+ \dots b_{-1}^+ b_0 \dots b_k, \emptyset, b_{k+1}^+ \dots b_n^+})
=
\aptpn((q,M^{+x}))$, by an A-PTPN abstract timed transition of type 3,
if and only if $(q,M)\!\movesto{x} (q,M^{+x})$.

The only other case is where $x = 1-\epsilon_{k+1}$ for
some $k \in \{0,\dots, n-1\}$.
Here exactly the tokens in $M_{k+1}$ reach the next higher integer age.
Therefore $\aptpn((q,M)) =  
(q, \tuple{b_{-m} \dots b_{-1}, \zmset, b_1 \dots b_n})
\movesto{}
(q, \tuple{b_{-m}^+ \dots b_{-1}^+ b_0 \dots b_k, b_{k+1}^+, b_{k+1}^+ \dots b_n^+})
=
\aptpn((q,M^{+x}))$, by an A-PTPN abstract timed transition of type 4,
if and only if $(q,M) \movesto{x} (q,M^{+x})$.
\end{proof}

The following Lemma~\ref{lem:cost-PTPN-APTPN}, which follows from
Lemmas~\ref{lem:PTPN_APTPN_disc},\ref{lem:PTPN_APTPN_fast},
and \ref{lem:PTPN_APTPN_slow},
shows the connection between the computation costs in a PTPN and
the corresponding A-PTPN.

\begin{lem}\label{lem:cost-PTPN-APTPN}\ 
\begin{enumerate}[\em(1)]
\item
Let $\conf_0$ be a PTPN configuration where all tokens have integer ages.
For every PTPN computation $\comp = \conf_0 \movesto{} \dots \movesto{} \conf_n$
in detailed form and $\delta$-form s.t. $n*\delta \le 1/5$ there exists a corresponding
A-PTPN computation $\comp' = \aptpn(\conf_0) \movesto{} \dots \movesto{} \aptpn(\conf_n)$
s.t. 
\[
|\costof{\comp}-\costof{\comp'}| \le n*\delta*(\max_{0\le i\le n}|\conf_i|)*
(\max_{\place \in \places}\costof{\place})
\]
\item
Let $\conf_0'$ be an A-PTPN configuration $\tuple{\epsilon,\zmset,\epsilon}$.
For every A-PTPN computation $\comp' = \conf_0' \movesto{} \dots \movesto{} \conf_n'$
and every $0 < \delta \le 1/5$ there exists a PTPN computation
$\comp = \conf_0 \movesto{} \dots \movesto{} \conf_n$ in
detailed form and $\delta$-form s.t. $\conf_i' = \aptpn(\conf_i)$ for $0 \le i \le n$
and 
\[
|\costof{\comp}-\costof{\comp'}| \le n*\delta*(\max_{0\le i\le n}|\conf_i'|)*
(\max_{\place \in \places}\costof{\place})
\]
\end{enumerate}
\end{lem}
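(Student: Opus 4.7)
The plan is to prove both parts simultaneously by induction on $n$, treating each transition of the computation independently and then summing the per-step cost discrepancies. The base case $n=0$ is immediate since both computations have cost zero and $\aptpn(\conf_0) = \conf_0'$ by assumption (in part~(2) we start from $\tuple{\emptystring,\zmset,\emptystring}$, which is exactly the encoding of a marking with only integer-aged tokens).

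For the inductive step in part~(1), consider the $i$-th step $\conf_i \movesto{} \conf_{i+1}$. If it is a discrete transition $t$, Lemma~\ref{lem:PTPN_APTPN_disc} gives a matching A-PTPN discrete step $\aptpn(\conf_i) \ttrans \aptpn(\conf_{i+1})$, and both sides incur cost $\costof{t}$, contributing $0$ to the error. If it is a detailed timed transition with $x \in (0:\delta)$, Lemma~\ref{lem:PTPN_APTPN_fast} gives an A-PTPN abstract timed step of type~1 or~2 with cost $0$; the PTPN cost is $x\cdot \sum_p |M_i(p)|\cdot\costof{p} \le \delta\cdot|\conf_i|\cdot(\max_p \costof{p})$. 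If it is a timed transition with $x \in (1-\delta:1)$, Lemma~\ref{lem:PTPN_APTPN_slow} gives an A-PTPN abstract timed step of type~3 or~4 whose cost is $\sum_p |M_i(p)|\cdot\costof{p}$; the PTPN cost is $x$ times this, so the discrepancy equals $(1-x)\cdot\sum_p |M_i(p)|\cdot\costof{p} \le \delta\cdot|\conf_i|\cdot(\max_p \costof{p})$. Summing the per-step errors over all $n$ transitions gives the stated bound. The hypothesis $n\delta\le 1/5$ guarantees that every intermediate configuration remains in $\delta'$-form with $\delta' \le 2/5$, as required by the preceding three lemmas.

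For part~(2) the construction is dual: for each A-PTPN step in $\comp'$ we invoke the ``if'' direction of the corresponding lemma (Lemma~\ref{lem:PTPN_APTPN_disc} for a discrete step, Lemma~\ref{lem:PTPN_APTPN_fast} for an abstract timed step of type~1 or~2, and Lemma~\ref{lem:PTPN_APTPN_slow} for a step of type~3 or~4). For the two timed cases we are free to pick any $x$ in the appropriate open interval $(0:\delta)$ or $(1-\delta:1)$; this produces a detailed PTPN step of the required delay regime, and a marking $M_i$ with $\aptpn((q,M_i)) = \conf_i'$ and $|M_i| = |\conf_i'|$. The per-step cost discrepancy is bounded exactly as in part~(1), and summation yields the claimed global bound.

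The main obstacle is verifying that the inductive hypothesis can be maintained uniformly, i.e., that after each small or large delay the resulting configuration is still in $\delta$-form (or in a slightly weakened form still admissible by Lemmas~\ref{lem:PTPN_APTPN_disc}--\ref{lem:PTPN_APTPN_slow}). This is exactly the reason for the technical restriction $n\delta\le 1/5$ in part~(1): after $n$ small delays the accumulated fractional shift is at most $n\delta\le 1/5$, so the low/high fractional intervals $[0:\delta')$ and $(1-\delta':1]$ remain disjoint and the preceding lemmas keep applying. A secondary, purely bookkeeping point is that when applying Lemma~\ref{lem:PTPN_APTPN_slow} in part~(2) we must choose $x$ close enough to $1$ so that no token that should remain below an integer boundary crosses it, but this freedom is always available since the A-PTPN step explicitly fixes which tokens cross.
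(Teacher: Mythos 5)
Your part~(1) is essentially the paper's own argument: match each step via Lemmas~\ref{lem:PTPN_APTPN_disc}, \ref{lem:PTPN_APTPN_fast}, \ref{lem:PTPN_APTPN_slow}, observe that discrete steps contribute no error and each timed step contributes at most $\delta\cdot|c_i|\cdot\max_p\costof{p}$, and sum; the hypothesis $n\delta\le 1/5$ keeps all intermediate configurations close enough to integer ages for those lemmas to apply (note they require $\delta$-form for $\delta\le 1/5$, not $2/5$ as you write, but since the drift is at most $n\delta\le 1/5$ this is harmless).

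Part~(2), however, has a genuine gap. You say that for each abstract timed step you ``are free to pick any $x$'' in $(0:\delta)$ or $(1-\delta:1)$, and you bound the per-step cost error accordingly. But part~(2) has no hypothesis like $n\delta\le 1/5$: the length $n$ is arbitrary while $\delta$ is fixed. If at every step you merely choose $x$ somewhere in $(0:\delta)$, the fractional parts drift cumulatively, and after roughly $1/\delta$ steps some token ages are no longer within $\delta$ of an integer. Then (i) the constructed computation is not in $\delta$-form, which is part of what the lemma asserts, and (ii) Lemmas~\ref{lem:PTPN_APTPN_disc}--\ref{lem:PTPN_APTPN_slow} can no longer be invoked at later steps, since they require the current configuration to be in $\delta'$-form for some $\delta'\le 1/5$; so the induction itself breaks, not just the bookkeeping. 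You flag ``maintaining the inductive hypothesis'' as the main obstacle, but you resolve it only by appealing to $n\delta\le 1/5$, which is unavailable in part~(2). The paper's proof closes exactly this hole by shrinking the tolerances geometrically: it sets $\delta_i:=\delta\cdot 2^{i-n}$ and maintains the invariant that the $i$-th constructed configuration is in $\delta_i$-form, choosing each realizing delay inside $(0:\delta_i)$ or $(1-\delta_i:1)$; then the successor is in $(2\delta_i)$-form $=\delta_{i+1}$-form, so every configuration stays in $\delta$-form ($\le 1/5$), the correspondence lemmas remain applicable throughout, and the per-step cost error is still at most $\delta\cdot|c_i'|\cdot\max_p\costof{p}$, giving the stated bound. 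Without this (or an equivalent mechanism controlling the accumulated drift), your construction does not establish the conclusion of part~(2).
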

\begin{proof}
For the first part let $\comp = \conf_0 \movesto{} \dots \movesto{}
\conf_n$ be a PTPN computation in detailed form and $\delta$-form
s.t. $n*\delta \le 1/5$.
So every timed transition
$\movesto{x}$ has either $x \in (0:\delta)$ or $x \in (1-\delta:1)$.
Furthermore, the fractional part of the 
age of every token in any configuration $\conf_i$ is $< i*\delta$ away
from the nearest integer, because $\conf_0$ only contains tokens with integer
ages. Since $i \le n$ these ages are $< n*\delta \le 1/5$ away from the nearest
integer. Moreover, $\comp$ is detailed and thus
Lemmas~\ref{lem:PTPN_APTPN_disc},
\ref{lem:PTPN_APTPN_fast} and \ref{lem:PTPN_APTPN_slow} apply.
Thus there exists a corresponding
A-PTPN computation $\comp' = \aptpn(\conf_0) \movesto{} \dots \movesto{}
\aptpn(\conf_n)$.
By definition of the cost of A-PTPN transitions, for every discrete transition
$\conf_i \movesto{} \conf_{i+1}$ we have
$\costof{\conf_i \movesto{} \conf_{i+1}} = 
\costof{\aptpn(\conf_i) \movesto{} \aptpn(\conf_{i+1})}$.
Moreover, for every timed transition
$\conf_i \movesto{x} \conf_{i+1}$ we have
$|\costof{\conf_i \movesto{x} \conf_{i+1}} - 
\costof{\aptpn(\conf_i) \movesto{} \aptpn(\conf_{i+1})}| \le 
\delta*|\conf_i|*(\max_{\place \in \places}\costof{\place})$, 
because either $x \in (0:\delta)$ or $x \in (1-\delta:1)$.
Therefore $|\costof{\comp}-\costof{\comp'}| \le n*\delta*(\max_{0\le i\le n}|\conf_i|)*
(\max_{\place \in \places}\costof{\place})$ as required.

For the second part let $\conf_0$ be a PTPN configuration
s.t. $\tuple{\epsilon,\zmset,\epsilon} = \conf_0' = \aptpn(\conf_0)$,
i.e., all tokens in $\conf_0$ have integer ages.
We now use Lemmas~\ref{lem:PTPN_APTPN_disc},
\ref{lem:PTPN_APTPN_fast} and \ref{lem:PTPN_APTPN_slow}
to construct the PTPN computation $\comp$.
Let $\delta_i := \delta * 2^{i-n}$ for $0 \le i \le n$.
The construction ensures the following invariants.
(1) $\conf_i' = \aptpn(\conf_i)$, and
(2) $\conf_{i}$ is in $\delta_i$-form.
Condition (1) follows directly from Lemmas~\ref{lem:PTPN_APTPN_disc},
\ref{lem:PTPN_APTPN_fast} and \ref{lem:PTPN_APTPN_slow}.
For the base case $i=0$, condition (2) holds trivially, because
all tokens in $\conf_0$ have integer ages.
Now we consider the step from $i$ to $i+1$.
Since $\conf_i$ is in $\delta_i$-form, we obtain from
Lemmas~\ref{lem:PTPN_APTPN_disc},
\ref{lem:PTPN_APTPN_fast} and \ref{lem:PTPN_APTPN_slow}
that if the $i-th$ transition in this sequence is a timed transition
$\movesto{x}$ then 
either $x \in (0:\delta_i)$ or $x \in (1-\delta_i:1)$.
Therefore, since 
$\conf_i$ is in $\delta_i$-form,
$\conf_{i+1}$ is in $(2*\delta_i)$-form and thus in
$\delta_{i+1}$-form.

Now we consider the cost of the PTPN computation $\comp$.
By definition of the cost of A-PTPN transitions, for every discrete transition
$\conf_i \movesto{} \conf_{i+1}$ we have
$\costof{\conf_i \movesto{} \conf_{i+1}} = 
\costof{\aptpn(\conf_i) \movesto{} \aptpn(\conf_{i+1})}$.
Moreover, for every timed transition
$\conf_i \movesto{x} \conf_{i+1}$ we have
$|\costof{\conf_i \movesto{x} \conf_{i+1}} - 
\costof{\aptpn(\conf_i) \movesto{} \aptpn(\conf_{i+1})}| \le 
\delta_i*|\conf_i'|*(\max_{\place \in \places}\costof{\place})$, 
because either $x \in (0:\delta_i)$ or $x \in (1-\delta_i:1)$.
Therefore $|\costof{\comp}-\costof{\comp'}| \le n*\delta*(\max_{0\le i\le n}|\conf_i'|)*
(\max_{\place \in \places}\costof{\place})$ as required.
\end{proof}

The following theorem summarizes the results of this section.
To compute the optimal cost of a PTPN, it suffices to consider the
corresponding A-PTPN.

\begin{thm}\label{thm:samecost-PTPN-APTPN}
The infimum of the costs in a PTPN coincides with the infimum of the costs in the
corresponding A-PTPN.
\[
\inf\{\costof{\comp} \,|\, \initconf \compto{\comp} \finalconfs\}
= 
\inf\{\costof{\comp'} \,|\, \aptpn(\initconf) \compto{\comp'} \aptpn(\finalconfs)\}
\]
\end{thm}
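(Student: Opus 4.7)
The plan is to establish the two inequalities between the infima separately, both relying on Lemma~\ref{lem:cost-PTPN-APTPN}, which bounds the cost discrepancy between a PTPN computation and its A-PTPN counterpart by a quantity proportional to $\delta$, and hence can be made arbitrarily small after the computation itself has been fixed.

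For the inequality from right to left, I would fix an arbitrary PTPN computation $\initconf \compto{\comp} \finalconfs$ and an $\epsilon > 0$. By the detailing remark in Section~\ref{delta:section}, I may assume without loss of generality that $\comp$ is detailed, at no extra cost. Applying Lemma~\ref{lem:deltaform} then yields a detailed $\delta$-form computation $\comp_\delta$ with $\costof{\comp_\delta} \le \costof{\comp}$, the same length $n$, and the same configuration sizes as $\comp$. Setting $K := (\max_i |\conf_i|)\,(\max_{\place \in \places} \costof{\place})$ and choosing $\delta$ with $\delta \le 1/(5n)$ and $n\delta K < \epsilon$, Lemma~\ref{lem:cost-PTPN-APTPN}(1) (whose precondition is trivially met because $\initconf$ is token-free) produces an A-PTPN computation $\comp'_\delta$ from $\aptpn(\initconf)$ to $\aptpn(\finalconfs)$ with $\costof{\comp'_\delta} \le \costof{\comp_\delta} + n\delta K \le \costof{\comp} + \epsilon$. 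Taking the infimum over $\comp$ and letting $\epsilon \to 0$ yields the required bound.

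The reverse direction is entirely symmetric. Given any A-PTPN computation $\aptpn(\initconf) \compto{\comp'} \aptpn(\finalconfs)$ and $\epsilon > 0$, I would apply Lemma~\ref{lem:cost-PTPN-APTPN}(2) with a sufficiently small $\delta$ to obtain a detailed PTPN computation $\comp = \conf_0 \movesto{} \dots \movesto{} \conf_n$ whose configuration abstractions match those of $\comp'$ step by step. Since $\aptpn(\initconf)$ represents the unique token-free marking, the initial configuration of $\comp$ is forced to equal $\initconf$; since $\aptpn$ preserves the control state and $\finalconfs$ is defined only by the control state, $\conf_n$ lies in $\finalconfs$. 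The cost bound of Lemma~\ref{lem:cost-PTPN-APTPN}(2) then gives $\costof{\comp} \le \costof{\comp'} + \epsilon$ for small enough $\delta$, from which the second inequality follows by taking infima.

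The only mild subtlety is that the error bound of Lemma~\ref{lem:cost-PTPN-APTPN} depends on the length and maximum configuration size of the computation, so $\delta$ must be chosen \emph{after} fixing the computation in question. Since we are comparing infima rather than constructing cost-optimal computations, this per-computation approximation argument is exactly what is needed, and no uniform bound on computation length is required.
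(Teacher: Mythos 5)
Your proposal is correct and follows essentially the same route as the paper's own proof: both directions are handled by the per\nobreakdash-computation approximation argument via Lemma~\ref{lem:cost-PTPN-APTPN}, with Lemma~\ref{lem:deltaform} used as the extra ingredient in the PTPN\nobreakdash-to\nobreakdash-A\nobreakdash-PTPN direction and $\delta$ chosen after fixing the witness computation. The clarifying sentence about why the endpoints of the reconstructed PTPN computation land in $\initconf$ and $\finalconfs$ is a nice touch that the paper leaves implicit.
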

\begin{proof}
We show that neither of the two costs for PTPN and A-PTPN can be larger than
the other.\\
Let $I := \inf\{\costof{\comp} \,|\, \initconf 
\compto{\comp} \finalconfs\}$
and
$I' := \inf\{\costof{\comp'} \,|\, \aptpn(\initconf) \compto{\comp'}
\aptpn(\finalconfs)\}$.

First we show that $I' \not > I$.
By definition of $I$, for every $\lambda > 0$ there is a 
computation $\initconf \compto{\comp_\lambda} \finalconfs$,
s.t. $\costof{\comp_\lambda} - I \le \lambda$.
Without restriction we can assume that $\comp_\lambda$
is also in detailed form.
Let $n_\lambda := |\comp_\lambda|$ be the length of $\comp_\lambda$
and
$\comp_\lambda = \conf_0 \movesto{} \dots \movesto{}
\conf_{n_\lambda}$.
Let $\delta_\lambda := \min\{1/(5n_\lambda), \lambda/(n_\lambda * (\max_{0\le i\le n_\lambda}|\conf_i|)*
(\max_{\place \in \places}\costof{\place}))\}$.

By Lemma~\ref{lem:deltaform} there exists a computation 
$\initconf \compto{\comp''_\lambda} \finalconfs$
in detailed form and
$\delta_\lambda$-form where $|\comp''_\lambda| = |\comp_\lambda|$ and
$\comp''_\lambda = \conf''_0 \movesto{} \dots \movesto{}
\conf''_{n_\lambda}$ s.t. $|\conf''_i| = |\conf_i|$
and $\costof{\comp''_\lambda} \le \costof{\comp_\lambda}$.
It follows that $\costof{\comp''_\lambda} - I \le \lambda$.

By part (1) of Lemma~\ref{lem:cost-PTPN-APTPN}, there exists a  
corresponding A-PTPN computation 
$\comp'_\lambda = \aptpn(\conf_0'') \movesto{} \dots \movesto{} \aptpn(\conf_{n_\lambda}'')$
s.t. $|\costof{\comp''_\lambda}-\costof{\comp'_\lambda}| \le
n_\lambda*\delta_\lambda*(\max_{0\le i\le
  n_\lambda}|\conf''_i|)*(\max_{\place \in \places}\costof{\place}) \le \lambda$.
Thus we obtain
$\costof{\comp'_\lambda} - I \le 2\lambda$.
Since this holds for every $\lambda >0$ we get
$I' \not > I$.

Now we show that $I \not > I'$.
By definition of $I'$, for every $\lambda > 0$ there is a 
A-PTPN computation $\initconf \compto{\comp'_\lambda} \finalconfs$,
s.t. $\costof{\comp'_\lambda} - I' \le \lambda$.
Let $n_\lambda := |\comp'_\lambda|$ be the length of $\comp'_\lambda$
and
$\comp'_\lambda = \conf'_0 \movesto{} \dots \movesto{}
\conf'_{n_\lambda}$.
Let $\delta_\lambda := \min\{1/(5n_\lambda), \lambda/(n_\lambda * (\max_{0\le i\le n_\lambda}|\conf'_i|)*
(\max_{\place \in \places}\costof{\place}))\}$.

By Lemma~\ref{lem:cost-PTPN-APTPN} (2), there exists a  
corresponding PTPN computation 
$\comp_\lambda = \conf_0 \movesto{} \dots \movesto{} \conf_{n_\lambda}$
in detailed form and $\delta_\lambda$-form 
s.t. $\conf_i' = \aptpn(\conf_i)$ and
$|\costof{\comp_\lambda}-\costof{\comp'_\lambda}| \le
n_\lambda*\delta_\lambda*(\max_{0\le i\le
  n_\lambda}|\conf'_i|)*(\max_{\place \in \places}\costof{\place}) \le \lambda$.
Thus we obtain
$\costof{\comp_\lambda} - I' \le 2\lambda$.
Since this holds for every $\lambda >0$ 
we get $I \not > I'$.

By combining $I' \not > I$ with $I \not > I'$
we obtain $I=I'$ as required.
\end{proof}

\subsection{The Running Example (cont.)}\label{subsec:example2}

We consider the running example from Subsection~\ref{subsec:example1} again.

\begin{figure}[htbp]
\begin{center}
\begin{tikzpicture}[show background rectangle,label distance=-1.2mm]

\node[transition,label=above:{\tiny $\transition_1$},label=right:{\tiny $1$}](t1){}; %
\node[place,,minimum size=10mm,below  = 9mm of t1,label=right:{\tiny $\place_2$},label=left:{\tiny $2$}] (p2){}
[children are tokens, token distance=5mm]
child {node [token] {$6.5$}};

\node[place,draw=blue,fill=blue,left  = 7mm of p2,label=left:{\tiny $\state_1$}] (q1){};
\node[place,minimum size=10mm,left  = 7mm of q1,label=left:{\tiny $\place_1$},label=above:{\tiny $3$}] (p1){}
[children are tokens, token distance=5mm]
child {node [token] {$3.1$}}
child {node [token] {$3.1$}}
child {node [token] {$2.5$}};

\node[place,draw=blue,fill=blue,right  = 7mm of p2,label=right:{\tiny $\state_2$}] (q2){};
\node[place,minimum size=10mm,right  = 7mm of q2,label=right:{\tiny $\place_3$},label=below:{\tiny $0$}] (p3){}
[children are tokens, token distance=5mm]
child {node [token] {$0.1$}}
child {node [token] {$0.1$}};
;

\node[transition,below  =  9mm of p2,label=below:{\tiny $\transition_2$},label=left:{\tiny $3$}](t2){};

\draw[->,thick] (t1) -- (p2);
\draw[->,thick] (q1) -- (t1);
\draw[->,thick] (t1) -- (q2);
\draw[->,thick] (p1) -- node[sloped,above=-1mm] {\tiny $(0,3]$} (t1);
\draw[->,thick] (t1) --  node[sloped,above=-1mm] {\tiny $[1,5)$} (p2);
\draw[->,thick] (t1) -- node[sloped,above=-1mm] {\tiny $(2,\infty)$} (p3);
\draw[->,thick] (t2) -- node[sloped,below=-1mm] {\tiny $[0,\infty)$}(p1);
\draw[<->,thick] (t2) -- node[sloped,below=-1mm] {\tiny $[2,2]$}(p2);
\draw[->,thick] (p3) -- node[sloped,below=-1mm] {\tiny $[1,4)$} (t2);
\draw[->,thick] (q2) -- (t2);
\draw[->,thick] (t2) -- (q1);

\end{tikzpicture}
\end{center}
\caption{A simple example of a PTPN.
}
\end{figure}

\paragraph{\bf Abstract Markings}
Fix $\delta=0.2$. 
Then the configuration
\[
\conf=
\begin{array}[t]{l}
[\tuple{\place_1,2.1},\tuple{\place_1,1.0},\tuple{\place_1,2.85},\tuple{\place_1,3.9},
\\
\tuple{\place_2,1.1},\tuple{\place_2,9.1},\tuple{\place_2,1.0},\tuple{\place_2,9.85},
\\
\tuple{\place_3,8.1},\tuple{\place_3,0.85},\tuple{\place_3,2.9},\tuple{\place_3,4.9},\tuple{\place_3,9.0}]
\end{array}
\]
is in $\delta$-form.
We have 
\[
\begin{array}{l}
\conf_1=\aptpnof\conf=
\\
\left(
\state_1,
\left(
\left[
\begin{array}{c}
\tuple{\place_1,2} \\,\\ \tuple{\place_2,6}  \\,\\   \tuple{\place_3,0}
\end{array}
\right]
\left[
\begin{array}{c}
\tuple{\place_1,3} \\,\\ \tuple{\place_3,2} \\,\\ \tuple{\place_3,4} 
\end{array}
\right]

,

\left[
\begin{array}{c}
\tuple{\place_1,1} \\,\\ \tuple{\place_2,1} \\,\\ \tuple{\place_3,6} 
\end{array}
\right]

,

\left[
\begin{array}{c}
\tuple{\place_1,2} \\,\\ \tuple{\place_2,1}\\,\\ \tuple{\place_2,6} \\,\\ \tuple{\place_3,6} 
\end{array}
\right]
\right)
\right)
\end{array}
\]

Note that token ages $> \maxval$ are abstracted as $\maxval+1$.
Since here $\maxval = 5$, all token ages $> 5$ are abstracted as $6$.

Below we describe four examples of abstract computation steps
(not the same as in Subsection~\ref{subsec:example1}).

(i) A type $1$ transition from $\conf_1$ leads to
\[
\begin{array}{l}
\conf_2=
\\
\left(
\state_1
\left(
\left[
\begin{array}{c}
\tuple{\place_1,2} \\,\\ \tuple{\place_2,6}  \\,\\   \tuple{\place_3,0}
\end{array}
\right]
\left[
\begin{array}{c}
\tuple{\place_1,3} \\,\\ \tuple{\place_3,2} \\,\\ \tuple{\place_3,4} 
\end{array}
\right]

,

\emptyset

,
\left[
\begin{array}{c}
\tuple{\place_1,1} \\,\\ \tuple{\place_2,1} \\,\\ \tuple{\place_3,6} 
\end{array}
\right]

\left[
\begin{array}{c}
\tuple{\place_1,2} \\,\\ \tuple{\place_2,1}\\,\\ \tuple{\place_2,6} \\,\\ \tuple{\place_3,6} 
\end{array}
\right]
\right)
\right)
\end{array}
\]

(ii) A type $2$ transition from $\conf_2$ leads to
\[
\begin{array}{l}
\conf_3=
\\
\left(
\state_1,
\left(
\left[
\begin{array}{c}
\tuple{\place_1,2} \\,\\ \tuple{\place_2,6}  \\,\\   \tuple{\place_3,0}
\end{array}
\right]

,

\left[
\begin{array}{c}
\tuple{\place_1,4} \\,\\ \tuple{\place_3,3} \\,\\ \tuple{\place_3,5} 
\end{array}
\right]

,

\left[
\begin{array}{c}
\tuple{\place_1,1} \\,\\ \tuple{\place_2,1} \\,\\ \tuple{\place_3,6} 
\end{array}
\right]

\left[
\begin{array}{c}
\tuple{\place_1,2} \\,\\ \tuple{\place_2,1}\\,\\ \tuple{\place_2,6} \\,\\ \tuple{\place_3,6} 
\end{array}
\right]
\right)
\right)
\end{array}
\]

(iii) A type $3$ transition from $\conf_3$ leads to
\[
\begin{array}{l}
\conf_4=
\\
\left(
\state_1,
\left(
\left[
\begin{array}{c}
\tuple{\place_1,3} \\,\\ \tuple{\place_2,6}  \\,\\   \tuple{\place_3,1}
\end{array}
\right]

\left[
\begin{array}{c}
\tuple{\place_1,4} \\,\\ \tuple{\place_3,3} \\,\\ \tuple{\place_3,5} 
\end{array}
\right]

\left[
\begin{array}{c}
\tuple{\place_1,1} \\,\\ \tuple{\place_2,1} \\,\\ \tuple{\place_3,6} 
\end{array}
\right]

,

\emptyset

,

\left[
\begin{array}{c}
\tuple{\place_1,3} \\,\\ \tuple{\place_2,2}\\,\\ \tuple{\place_2,6} \\,\\ \tuple{\place_3,6} 
\end{array}
\right]
\right)
\right)
\end{array}
\]

(iv) A type $4$ transition from $\conf_3$ leads to
\[
\begin{array}{l}
\conf_5=
\\
\left(
\state_1,
\left(
\left[
\begin{array}{c}
\tuple{\place_1,3} \\,\\ \tuple{\place_2,6}  \\,\\   \tuple{\place_3,1}
\end{array}
\right]

\left[
\begin{array}{c}
\tuple{\place_1,4} \\,\\ \tuple{\place_3,3} \\,\\ \tuple{\place_3,5} 
\end{array}
\right]

,

\left[
\begin{array}{c}
\tuple{\place_1,2} \\,\\ \tuple{\place_2,2} \\,\\ \tuple{\place_3,6} 
\end{array}
\right]

,

\left[
\begin{array}{c}
\tuple{\place_1,3} \\,\\ \tuple{\place_2,2}\\,\\ \tuple{\place_2,6} \\,\\ \tuple{\place_3,6} 
\end{array}
\right]
\right)
\right)
\end{array}
\]

Below, we give three concrete timed transitions that correspond to the 
abstract steps (i)-(iii) described above.

\[
\begin{array}[t]{l}
[\tuple{\place_1,2.1},\tuple{\place_1,1.0},\tuple{\place_1,2.85},\tuple{\place_1,3.9},
\\
\tuple{\place_2,1.1},\tuple{\place_2,9.1},\tuple{\place_2,1.0},\tuple{\place_2,9.85},
\\
\tuple{\place_3,8.1},\tuple{\place_3,0.85},\tuple{\place_3,2.9},\tuple{\place_3,4.9},\tuple{\place_3,9.0}]

\\ \\\xtimedmovesto{0.01}\\ \\

[\tuple{\place_1,2.11},\tuple{\place_1,1.01},\tuple{\place_1,2.86},\tuple{\place_1,3.91},
\\
\tuple{\place_2,1.11},\tuple{\place_2,9.11},\tuple{\place_2,1.01},\tuple{\place_2,9.86},
\\
\tuple{\place_3,8.11},\tuple{\place_3,0.86},\tuple{\place_3,2.91},\tuple{\place_3,4.91},\tuple{\place_3,9.01}]

\\ \\\xtimedmovesto{0.09}\\ \\

[\tuple{\place_1,2.2},\tuple{\place_1,1.1},\tuple{\place_1,2.95},\tuple{\place_1,4.0},
\\
\tuple{\place_2,1.2},\tuple{\place_2,9.2},\tuple{\place_2,1.1},\tuple{\place_2,9.95},
\\
\tuple{\place_3,8.2},\tuple{\place_3,0.95},\tuple{\place_3,3.0},\tuple{\place_3,5.0},\tuple{\place_3,9.1}]

\\ \\\xtimedmovesto{0.85}\\ \\

[\tuple{\place_1,3.05},\tuple{\place_1,1.95},\tuple{\place_1,3.8},\tuple{\place_1,4.85},
\\
\tuple{\place_2,2.05},\tuple{\place_2,10.05},\tuple{\place_2,1.95},\tuple{\place_2,10.8},
\\
\tuple{\place_3,9.05},\tuple{\place_3,1.8},\tuple{\place_3,3.85},\tuple{\place_3,5.85},\tuple{\place_3,9.95}]

\end{array}
\]

A concrete timed transitions that correspond to the 
abstract step  (iv) is the following

\[
\begin{array}[t]{l}

[\tuple{\place_1,2.2},\tuple{\place_1,1.1},\tuple{\place_1,2.95},\tuple{\place_1,4.0},
\\
\tuple{\place_2,1.2},\tuple{\place_2,9.2},\tuple{\place_2,1.1},\tuple{\place_2,9.95},
\\
\tuple{\place_3,8.2},\tuple{\place_3,0.95},\tuple{\place_3,3.0},\tuple{\place_3,5.0},\tuple{\place_3,9.1}]

\\ \\\xtimedmovesto{0.9}\\ \\

[\tuple{\place_1,3.1},\tuple{\place_1,2.0},\tuple{\place_1,3.85},\tuple{\place_1,4.9},
\\
\tuple{\place_2,2.1},\tuple{\place_2,10.1},\tuple{\place_2,2.0},\tuple{\place_2,10.85},
\\
\tuple{\place_3,9.1},\tuple{\place_3,1.85},\tuple{\place_3,3.9},\tuple{\place_3,5.9},\tuple{\place_3,10.0}]

\end{array}
\]

\section{Abstracting Costs in A-PTPN}\label{sec:cost-abstraction}
\noindent
Given an A-PTPN, the cost-threshold problem is whether
there exists a computation 
$\aptpn(\initconf) \compto{\comp} \aptpn(\finalconfs)$
s.t. $\costof{\comp} \le \threshold$ for a given threshold $\threshold$.

We now reduce this question to a question about simple coverability
in a new model called AC-PTPN. The idea is to encode the cost of the
computation into a part of the control-state.
For every A-PTPN and cost threshold $\threshold \in \nat$ 
there is a corresponding AC-PTPN that is defined as follows.

For every A-PTPN configuration $(q, b_{-m} \dots b_{-1}, b_0, b_1 \dots b_n)$
there are AC-PTPN configurations 
$((q,y), b_{-m} \dots b_{-1}, b_0, b_1 \dots b_n)$
for all integers $y: 0 \le y \le \threshold$, where $y$ represents the remaining
allowed cost of the computation.
We define a finite set of functions $\ac_y$ for $0 \le y \le \threshold$
that map A-PTPN configurations to AC-PTPN configurations
s.t. $\ac_y((q, b_{-m} \dots b_{-1}, b_0, b_1 \dots b_n)) =
((q,y), b_{-m} \dots b_{-1}, b_0, b_1 \dots b_n)$.

For every discrete transition 
$\transition = (q_1,q_2,\inputs,\reads,\outputs) \in \transitions$
with 
\[
(q_1, b_{-m} \dots b_{-1}, b_0, b_1 \dots b_n) \ttrans
(q_2, c_{-m'} \dots c_{-1}, c_0, c_1 \dots c_{n'})
\]
in the A-PTPN,
we have instead
\[
((q_1,y), b_{-m} \dots b_{-1}, b_0, b_1 \dots b_n) \ttrans
((q_2,y-\costof{\transition}, c_{-m'} \dots c_{-1}, c_0, c_1 \dots c_{n'})
\]
in the AC-PTPN for all nonnegative integers $y$ s.t. $\threshold \ge y \ge \costof{\transition}$.
I.e., we deduct the cost of the transition from the remaining allowed cost 
of the computation.

For every A-PTPN abstract timed transition of the types 1 and 2
$(q_1, \dots) \movesto{} (q_1, \dots)$
we have corresponding AC-PTPN abstract timed transitions of the same types 1 and 2
where $((q_1,y), \dots) \movesto{} ((q_1,y), \dots)$ for all 
$0 \le y \le \threshold$. I.e., infinitesimally small delays do not cost
anything, and thus no cost is deducted from the remaining allowed cost.

For every A-PTPN abstract timed transition of type 3
(for some $k \in \{0,\dots,n\}$)
\[
(q_1, b_{-m} \dots b_{-1}, b_0, b_1 \dots b_n) \movesto{} 
(q_1, b_{-m}^{+} \dots b_{-2}^{+} b_{-1}^{+} b_0 \dots b_k, \emptyset,
b_{k+1}^{+} \dots b_n^{+})
\] 
we have corresponding AC-PTPN abstract timed
transitions of type 3 with
\[
((q_1,y), b_{-m} \dots b_{-1}, b_0, b_1 \dots b_n) \movesto{} 
((q_1,y-z), b_{-m}^{+} \dots b_{-2}^{+} b_{-1}^{+} b_0 \dots b_k, \emptyset,
b_{k+1}^{+} \dots b_n^{+})
\]
for all nonnegative integers $y$ s.t. $\threshold \ge y \ge z$
where $z = \sum_{i=-m}^n \sum_{\place \in \places}
|b_i(\place)|*\costof{\place}$. I.e., we deduct the cost of the timed step
(whose length is infinitesimally close to 1) from the remaining allowed cost 
of the computation.

\noindent
Similarly, for every A-PTPN abstract timed transition of type 4
(for some $k \in \{0,\dots,n-1\}$) with
\[
(q_1, b_{-m} \dots b_{-1}, b_0, b_1 \dots b_n) \movesto{} 
(q_1, b_{-m}^{+} \dots b_{-2}^{+} b_{-1}^{+} b_0 \dots b_k, b_{k+1}^{+},
b_{k+2}^{+} \dots b_n^{+})
\] 
we have corresponding AC-PTPN abstract timed
transitions of type 4 with
\[
((q_1,y), b_{-m} \dots b_{-1}, b_0, b_1 \dots b_n) \movesto{} 
((q_1,y-z), b_{-m}^{+} \dots b_{-2}^{+} b_{-1}^{+} b_0 \dots b_k, b_{k+1}^{+},
b_{k+2}^{+} \dots b_n^{+})
\]
for all nonnegative integers $y$ s.t. $\threshold \ge y \ge z$
where $z = \sum_{i=-m}^n \sum_{\place \in \places}
|b_i(\place)|*\costof{\place}$. I.e., we deduct the cost of the timed step
(whose length is infinitesimally close to 1) from the remaining allowed cost 
of the computation.

The following lemma summarizes the connection between A-PTPN and
AC-PTPN.

\begin{lem}\label{lem:ac-ptpn-comp}
There exists an A-PTPN computation 
$\aptpn(\initconf) \compto{\comp} \aptpn(\finalconfs)$ with
$\costof{\comp} \le \threshold$ iff
there exists a corresponding AC-PTPN computation 
$\ac_\threshold(\aptpn(\initconf)) \compto{\comp'} 
\bigcup_{0 \le y \le \threshold} 
\ac_y(\aptpn(\finalconfs))$
\end{lem}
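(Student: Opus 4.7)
The plan is to prove both directions of the equivalence by induction on the length of the computation, exploiting the fact that the AC-PTPN transition relation is defined step-by-step in direct correspondence with the A-PTPN transition relation, with the sole addition that the cost of each step is deducted from the counter $y$ stored in the control-state.

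For the forward direction, suppose $\comp = \conf_0 \movesto{} \dots \movesto{} \conf_n$ is an A-PTPN computation with $\conf_0 = \aptpn(\initconf)$, $\conf_n \in \aptpn(\finalconfs)$, and $\costof{\comp} \le \threshold$. I would inductively define $y_0 := \threshold$ and $y_{i+1} := y_i - \costof{\conf_i \movesto{} \conf_{i+1}}$, and take the $i$-th AC-PTPN configuration to be $\ac_{y_i}(\conf_i)$. Since each step cost is nonnegative and $\costof{\comp} = \threshold - y_n \le \threshold$, we have $0 \le y_i \le \threshold$ throughout, so every $\ac_{y_i}$ is well-defined. For each of the five transition kinds (discrete, and timed of types 1--4), the AC-PTPN rule is obtained from the A-PTPN rule by acting identically on the $(b_{-m}\dots b_{-1},b_0,b_1\dots b_n)$ part and subtracting exactly the corresponding cost from $y$; enabledness of the AC-PTPN step is then equivalent to $y_i \ge \costof{\conf_i \movesto{} \conf_{i+1}}$, which is precisely $y_{i+1} \ge 0$. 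The resulting AC-PTPN computation ends in $\ac_{y_n}(\conf_n) \in \ac_{y_n}(\aptpn(\finalconfs))$, which lies in the target union.

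For the reverse direction, given an AC-PTPN computation $\comp' = \conf'_0 \movesto{} \dots \movesto{} \conf'_n$ with $\conf'_0 = \ac_\threshold(\aptpn(\initconf))$ and $\conf'_n \in \ac_{y_n}(\aptpn(\finalconfs))$ for some $0 \le y_n \le \threshold$, I would project each $\conf'_i$ to its A-PTPN part $\conf_i$ by forgetting the $y$-component of the control-state. By the definition of AC-PTPN, every rule is derived from a unique A-PTPN rule acting identically on the marking part, so the projected sequence $\conf_0 \movesto{} \dots \movesto{} \conf_n$ is a valid A-PTPN computation from $\aptpn(\initconf)$ to some element of $\aptpn(\finalconfs)$. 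A telescoping sum over the counter values gives $\costof{\comp} = \sum_{i=0}^{n-1}(y_i - y_{i+1}) = \threshold - y_n \le \threshold$.

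The argument is essentially routine once the correspondence between transition kinds is made explicit. The only delicate point is the enabledness bookkeeping: in the forward direction, staying nonnegative after each deduction is exactly what the hypothesis $\costof{\comp} \le \threshold$ buys us together with cost nonnegativity; in the reverse direction, the fact that the final counter $y_n$ is nonnegative provides the threshold bound. I do not expect a genuine obstacle beyond a careful case analysis over the discrete and the four timed transition types, each of which unfolds directly from the definition.
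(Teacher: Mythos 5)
Your proposal is correct and follows the same route as the paper, which proves this lemma simply by appealing to the definition of AC-PTPN; your induction on the computation length with the counter invariant $y_i = \threshold - \sum_{j<i}\costof{\conf_j \movesto{} \conf_{j+1}}$ merely makes explicit the step-by-step correspondence (identical marking part, cost deducted from the control-state component) that the paper leaves implicit. No gap: nonnegativity of all step costs plus $\costof{\comp}\le\threshold$ indeed gives enabledness in the forward direction, and the telescoping sum gives the bound in the reverse direction.
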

\begin{proof}
Directly from the definition of AC-PTPN.
\end{proof}

Moreover, the connection between A-PTPN and AC-PTPN is constructive, i.e., the
cost-threshold problem for A-PTPN can be reduced to the coverability problem
for AC-PTPN described in Lemma~\ref{lem:ac-ptpn-comp}.

% Any AC-PTPN that uses only discrete transitions and abstract timed
% transitions of types 1/2 can easily be encoded into an equivalent 
% (w.r.t. reachability) A-PTPN.
% However, AC-PTPN timed transitions of types 3/4 cannot be encoded
% into A-PTPN steps. 
Note that, unlike A-PTPN, AC-PTPN are not monotone.
This is because steps of types 3 and 4 with
more tokens on cost-places cost more, and thus
cost-constraints might block such transitions from larger configurations.
Thus one cannot directly reduce the AC-PTPN coverability problem to
an A-PTPN coverability problem.

\section{The Abstract Coverability Problem}\label{sec:phase}
\noindent
We describe a general construction for solving reachability/coverability problems
for infinite-state systems under some abstract conditions.
In particular, we do {\em not} assume that the behavior of these systems is
fully monotone w.r.t. some well-quasi-order 
(unlike in many related works \cite{Parosh:Bengt:Karlis:Tsay:general:IC,Finkel:Schnoebelen:everywhere:TCS}).

In subsequent sections we will show how this construction can be applied to
solve AC-PTPN coverability, and thus the A-PTPN and PTPN cost-threshold problems.

\subsection{The Generalized Valk-Jantzen Construction}

\begin{thm}\label{thm:ValkJantzen}(Valk \& Jantzen
  \cite{ValkJantzen})\quad
Given an upward-closed set $V \subseteq \nat^k$, the finite set 
$V_{\it min}$ of minimal elements of $V$ is
effectively computable iff for any vector $\vecdef u \in \nat_\omega^k$ the
predicate $\vecdef u\!\downarrow \cap \;V \neq \emptyset$ is decidable.
\end{thm}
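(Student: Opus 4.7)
The plan is to handle the two implications separately. The forward direction is immediate: given $V_{\it min}$ explicitly, I check $u\!\downarrow \cap V \neq \emptyset$ for $u \in \nat^k_\omega$ by testing whether any of the finitely many $v \in V_{\it min}$ satisfies $v \le u$ componentwise (with $n \le \omega$ for all $n \in \nat$); since $V = V_{\it min}\!\uparrow$, such a $v$ exists iff $u\!\downarrow \cap V$ is nonempty. This is a decidable conjunction of finitely many coordinate comparisons.

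For the backward direction, the crucial first observation is that ordinary membership in $V$ is already decidable from the oracle: since $V$ is upward closed, for $v \in \nat^k \subseteq \nat^k_\omega$ we have $v \in V$ iff $v\!\downarrow \cap V \neq \emptyset$. This lets me check, for any given $v \in \nat^k$, whether $v$ is a minimal element of $V$: test that $v \in V$ and that $v - e_i \notin V$ for every coordinate $i$ with $v_i \ge 1$, where $e_i$ is the $i$-th unit vector. The overall algorithm then enumerates $v \in \nat^k$ in some systematic order (e.g.\ by increasing coordinate sum), collects all minimal elements it discovers into a growing finite set $S$, and after each addition performs a termination test asking whether $S$ is already all of $V_{\it min}$.

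The heart of the argument is implementing this termination test, i.e.\ deciding whether $V \subseteq S\!\uparrow$. The trick is to write the complement $\nat^k \setminus S\!\uparrow$ explicitly as a finite union of $\omega$-cones. Since $\nat^k \setminus v\!\uparrow = \bigcup_{i=1}^k \{u : u_i \le v_i - 1\}$ (omitting indices with $v_i = 0$), distributing intersection over union gives
\[
\nat^k \setminus S\!\uparrow \;=\; \bigcap_{v \in S} \bigcup_{i} \{u : u_i \le v_i - 1\} \;=\; \bigcup_{f \colon S \to \{1,\ldots,k\}} w_f\!\downarrow,
\]
where $(w_f)_i = \min\{\, v_i - 1 : v \in S,\ f(v) = i\,\}$, with the convention $\min \emptyset = \omega$, so each $w_f \in \nat^k_\omega$ is explicit. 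Hence $V \subseteq S\!\uparrow$ iff $w_f\!\downarrow \cap V = \emptyset$ for every such $f$, which is a finite conjunction of oracle queries. Dickson's lemma guarantees that $V_{\it min}$ is finite, so the enumeration eventually exhausts it and the termination test succeeds.

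The main obstacle is isolating the right termination condition and recognizing that $\nat^k \setminus S\!\uparrow$ admits the explicit finite $\omega$-cone decomposition above; once that is in place the reduction to oracle queries is mechanical, and termination follows from well-quasi-ordering of $(\nat^k, \le)$. The remaining ingredients — checking minimality of a candidate $v$ and enumerating $\nat^k$ — are entirely routine.
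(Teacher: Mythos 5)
Your proof is correct, and it takes a more direct route than the paper does. The paper never proves Theorem~\ref{thm:ValkJantzen} from scratch: it cites Valk--Jantzen and instead proves the generalization Theorem~\ref{thm:GVJ} (for an arbitrary decidable wqo, with an oracle answering ``$V \cap \overline{\uc{X}} \neq \emptyset$?'' for finite $X$ and $V$ assumed recursively enumerable), remarking that this implies the vector version. Its algorithm grows a finite set $X \subseteq V$ by repeatedly asking the oracle whether some $v \in V$ lies outside $\uc{X}$, finding such a $v$ by enumerating $V$, and terminating by the wqo property; minimality is only enforced at the end. Your argument is the classical one specialized to $\nat^k$: you first note membership in $V$ is decidable (oracle on ordinary vectors), enumerate $\nat^k$ rather than $V$, filter for minimal elements directly, and implement the termination test $V \subseteq S\!\uparrow$ via the explicit decomposition of $\nat^k \setminus S\!\uparrow$ into finitely many $\omega$-cones $w_f\!\downarrow$. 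That decomposition is precisely the bridge the paper leaves implicit when it claims Theorem~\ref{thm:GVJ} implies Theorem~\ref{thm:ValkJantzen} (its stated oracle is the finite-set one, not the $\omega$-vector one), so your write-up actually supplies the missing translation; conversely, the paper's approach buys generality (any decidable wqo, $V$ merely r.e.), which is what it later needs for AC-PTPN configurations. One small repair: run your termination test also for the initial $S = \emptyset$ (the single empty function $f$ gives $w_f = (\omega,\dots,\omega)$, i.e.\ the query ``$V = \emptyset$?''); otherwise your loop never tests anything, and never halts, when $V$ is empty.
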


We now show a generalization of this result.

\begin{thm}\label{thm:GVJ}
Let $(\Omega,\le)$ be a set 
with a decidable well-quasi-order (wqo) $\le$, and let $V \subseteq \Omega$ be
upward-closed and recursively enumerable.  Then the finite set $V_{\it min}$ of minimal elements 
of $V$ is effectively constructible if and only if for every finite subset 
$X \subseteq \Omega$ it is decidable if $V \cap \overline{\uc{X}} \neq
\emptyset$ (i.e., if $\exists v \in V.\, v \notin \uc{X}$).
\end{thm}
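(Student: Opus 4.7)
The plan is to adapt the original Valk--Jantzen strategy, replacing the role of $\omega$-vectors by arbitrary finite antichains. Throughout I will use two consequences of $\le$ being a wqo: every upward-closed subset of $\Omega$ has finitely many minimal elements, and every antichain in $\Omega$ is finite.

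For the ($\Rightarrow$) direction, assume $V_{\it min}$ has already been computed. Given a finite $X\subseteq\Omega$, I would decide $V\cap\overline{\uc{X}}\neq\emptyset$ via the following observation. Since $\uc{X}$ is upward closed, as soon as some $w\in V_{\it min}$ lies in $\uc{X}$, every element above $w$ also lies in $\uc{X}$. Hence any candidate witness $v\in V\setminus\uc{X}$ must lie above a minimal element $w\in V_{\it min}$ that is itself not in $\uc{X}$, and conversely such a $w$ is its own witness. So $V\cap\overline{\uc{X}}\neq\emptyset$ iff some $w\in V_{\it min}$ satisfies $w\not\ge x$ for all $x\in X$, which is decidable since $V_{\it min}$ and $X$ are finite and $\le$ is decidable.

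For the ($\Leftarrow$) direction, assume the oracle is available. I would enumerate $V$ as $v_1,v_2,\dots$ using its r.e.\ presentation. At stage $n$, let $X_n$ be the antichain of minimal elements of $\{v_1,\dots,v_n\}$, which is computable from the decidability of $\le$. Then query the oracle on $X_n$: if $V\cap\overline{\uc{X_n}}=\emptyset$, return $X_n$ and halt; otherwise proceed to stage $n+1$. Upon halting we have $V\subseteq\uc{X_n}$ by the oracle, while $\uc{X_n}\subseteq V$ because $X_n\subseteq V$ and $V$ is upward closed, so $V=\uc{X_n}$. Since $X_n$ is an antichain contained in $V$ that generates $V$, a short minimality argument shows $X_n=V_{\it min}$: each $v\in V_{\it min}$ lies above some $x\in X_n\subseteq V$, forcing $v=x$; conversely, if $x\in X_n$ were not minimal in $V$, then some $y<x$ in $V$ would lie above some $x'\in X_n$, contradicting the antichain property. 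Termination follows because $V_{\it min}$ is finite by the wqo, each of its elements eventually appears in the enumeration, and at the first stage $N$ where this has happened, $X_N=V_{\it min}$ and hence $\uc{X_N}=V$, so the oracle must return negative.

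The main subtlety, relative to the classical theorem on $\nat^k$, is that an arbitrary wqo domain need not admit a compact encoding of downward-closed sets analogous to $\nat_\omega^k$. The formulation chosen here sidesteps this: by letting the oracle take an \emph{arbitrary} finite antichain as input, the antichain $X_n$ produced by the enumeration itself serves as the finite certificate for $V=\uc{X_n}$, and no explicit ideal decomposition of the complement of $V$ is ever required. The only place where termination could fail would be if infinitely many distinct minimal elements had to be discovered, but this is ruled out by the wqo hypothesis.
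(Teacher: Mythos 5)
Your proposal is correct and follows essentially the same route as the paper: the ($\Rightarrow$) direction is identical (test each element of $V_{\it min}$ for membership in $\uc{X}$), and the ($\Leftarrow$) direction is the same saturation scheme -- enumerate $V$, maintain a finite generating antichain, use the oracle as the stopping test, and invoke the wqo for termination. The only cosmetic differences are that you query the oracle at every enumeration stage with the minimal elements of the prefix (and argue termination via finiteness of $V_{\it min}$), whereas the paper adds one witness per positive oracle answer and argues termination via the fact that the added elements form a bad sequence; both arguments are sound.
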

\begin{proof}
$V_{\it min}$ is finite, since $\le$ is a wqo.
For the only-if part, since $\uc{X}$ is upward-closed,
it suffices to check for each of the finitely many elements of
$V_{\it min}$ if it is not in $\uc{X}$. This is possible, because $X$ is
finite and $\le$ is decidable.

For the if-part, we start with $X=\emptyset$ and keep adding elements to
$X$ until $\uc{X} = V$. In every step we do the 
(by assumption decidable) check if 
$\exists v \in V.\, v \notin \uc{X}$. If no, we stop. 
If yes, we enumerate $V$ and check for every element $v$ if
$v \notin \uc{X}$ (this is possible since $X$ is finite and $\le$
is decidable). Eventually, we will find such a $v$, add it to the set $X$,
and do the next step.
Consider the sequence of elements $v_1,v_2,\dots$ which are added to $X$ in
this way. By our construction $v_j \not\ge v_i$ for $j > i$. Thus the sequence
is finite, because $\le$ is a wqo. Therefore the algorithm terminates
and the final set $X$ satisfies $\not\exists v \in V.\, v \notin \uc{X}$, 
i.e., $V \subseteq \uc{X}$. Furthermore, by our construction 
$X \subseteq V$ and thus $\uc{X} \subseteq V\!\uparrow = V$.
Thus $\uc{X} = V$. Finally, we remove all non-minimal elements
from $X$ (this is possible since $X$ is finite and $\le$ is decidable)
and obtain $V_{\it min}$. 
\end{proof}

% The following corollary is not used in our construction, but it is interesting
% in its own right.

\begin{cor}\label{cor:GVJ_string}
Let $\Sigma$ be a finite alphabet and $V \subseteq \Sigma^*$
a recursively enumerable set that is 
upward-closed w.r.t. the substring ordering $\le$.
The following three properties are equivalent.
\begin{enumerate}
\item
The finite set $V_{\it min}$ of minimal elements 
of $V$ is effectively constructible.
\item
For every finite subset $X \subseteq \Sigma^*$ it is decidable if
$\exists v \in V.\, v \notin \uc{X}$.
\item
For every regular language $R \subseteq \Sigma^*$ it is decidable
if $R \cap V = \emptyset$.
\end{enumerate}
\end{cor}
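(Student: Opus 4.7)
The plan is to establish the cycle (1) $\Leftrightarrow$ (2) $\Rightarrow$ (3) $\Rightarrow$ (2), which yields mutual equivalence of all three properties. The link (1) $\Leftrightarrow$ (2) will come directly from Theorem~\ref{thm:GVJ}, while the remaining implications will exploit the effective regularity of upward closures under the subsequence ordering.

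For (1) $\Leftrightarrow$ (2) I would simply instantiate Theorem~\ref{thm:GVJ} with $\Omega = \Sigma^*$ equipped with the substring (i.e., subsequence) ordering. By Higman's lemma, this ordering is a well-quasi-order on $\Sigma^*$ whenever $\Sigma$ is finite, and deciding whether one finite word is a subsequence of another is routine. Thus the hypotheses of Theorem~\ref{thm:GVJ} are met and the equivalence is immediate.

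For (3) $\Rightarrow$ (2), the key observation is that for each word $x = a_1 \cdots a_n \in \Sigma^*$ the set $\{x\}\!\uparrow = \{w \in \Sigma^* \mid x \le w\}$ is regular, accepted by an $(n+1)$-state DFA that tracks the longest prefix of $x$ matched so far as a subsequence of the input. Consequently $\uc{X}$ is regular for every finite $X$ as a finite union of such languages, and $R := \overline{\uc{X}}$ is regular as its complement, with a description that is effectively computable from $X$. An application of (3) to this $R$ decides exactly the predicate required in (2).

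For (1) $\Rightarrow$ (3), once $V_{\it min}$ has been effectively constructed one has $V = \bigcup_{v \in V_{\it min}} \{v\}\!\uparrow$, a finite union of regular languages, so $V$ itself is regular with a representation effectively extractable from $V_{\it min}$. Given any regular $R$, the intersection $R \cap V$ is then regular and testing its emptiness is standard. I do not anticipate any genuine obstacle in this argument: the entire proof is really the observation that the decidable wqo $(\Sigma^*, \le)$ places Theorem~\ref{thm:GVJ} in force, together with the classical fact that subsequence upward closures of finite sets are effectively regular.
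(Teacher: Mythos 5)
Your proposal is correct and follows essentially the same route as the paper: (1)$\Leftrightarrow$(2) by instantiating Theorem~\ref{thm:GVJ} via Higman's Lemma, (1)$\Rightarrow$(3) because $V$ is then an effectively constructible regular language, and (3)$\Rightarrow$(2) because $\overline{\uc{X}}$ is effectively regular for finite $X$. The only difference is that you spell out the finite-automaton constructions in slightly more detail than the paper does.
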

\begin{proof}
By Higman's Lemma \cite{Higman:divisibility}, 
the substring order $\le$ is a wqo on $\Sigma^*$ and thus
$V_{\it min}$ is finite.
Therefore the equivalence of (1) and (2) follows from Theorem~\ref{thm:GVJ}.
Property (1) implies that $V$ is an effectively constructible
regular language, which implies property (3). 
Property (2) is equivalent to checking whether 
$V \cap \overline{\uc{X}} \neq \emptyset$ and 
$\overline{\uc{X}}$ is effectively regular because $X$ is finite.
Therefore, (3) implies (2) and thus (1). 
\end{proof}

Another interpretation of Corollary~\ref{cor:GVJ_string} is the following.
An upward-closed set of strings $V$ is always regular, but this regular language 
is effectively constructible if and only if regular queries to $V$ are
decidable.

Note that Theorem~\ref{thm:GVJ} 
(and even Corollary~\ref{cor:GVJ_string}, via an encoding of vectors
into strings) imply Theorem~\ref{thm:ValkJantzen}.

\subsection{The Abstract Phase Construction}

We define some sufficient abstract conditions on infinite-state
transition systems under which a general
reachability/coverability problem is decidable. 
Intuitively, we have two different types of transition relations.
The first relation is monotone (w.r.t. a given quasi-order) on the whole
state space, while the second relation is only defined/enabled
on an upward-closed subspace. The quasi-order is not a well quasi-order 
on the entire space, but only on the subspace. In particular, this is not
a well-quasi-ordered transition system in the sense of
\cite{Parosh:Bengt:Karlis:Tsay:general:IC,Finkel:Schnoebelen:everywhere:TCS}, but more general.

We call the following algorithm the {\em abstract phase construction},
because we divide sequences of transitions into phases, separated
by occurrences of transitions of the second kind.

\begin{defi}\label{def:phase}
We say that a structure $(S,C,\le,\rightarrow, \rightarrow_A, \rightarrow_B,{\it init}, F)$
satisfies the {\em abstract phase construction requirements} 
iff the following conditions hold.
\begin{enumerate}[\hbox to8 pt{\hfill}]                                        

\item\noindent{\hskip-12 pt\bf 1.:}\
$S$ is a (possibly infinite) set of states, $C \subseteq S$ is a finite
subset, ${\it init} \in S$ is the initial state and $F \subseteq S$ is a 
(possibly infinite) set of final states.
\item\noindent{\hskip-12 pt\bf 2.:}\
$\le$ is a decidable quasi-order on $S$.
Moreover, $\le$ is a well-quasi-order on the subset $\uc{C}$
(where $\uc{C} = \{s \in S\,|\, \exists c \in C.\, s \ge c\}$).
\item\noindent{\hskip-12 pt\bf 3.:}\
$\rightarrow \, =\, \rightarrow_A \cup \rightarrow_B$
\item\noindent{\hskip-12 pt\bf 4.:}\
$\rightarrow_A \,\subseteq\, S \times S$ is a monotone (w.r.t. $\le$) 
transition relation on $S$.
\item\noindent{\hskip-12 pt\bf 5.a.:}\
$\rightarrow_B \,\subseteq\, \uc{C} \times \uc{C}$ 
is a monotone (w.r.t. $\le$) transition relation on $\uc{C}$.
\item\noindent{\hskip-12 pt\bf 5.b.:}\ 
For every finite set $X \subseteq \uc{C}$ we have that the finitely many
minimal elements of the upward-closed set ${\it Pre}_{\rightarrow_B}(\uc{X})$ are effectively 
constructible.
\item\noindent{\hskip-12 pt\bf 6.a.:}\
${\it Pre}_{\rightarrow_A}^*(F)$ is upward-closed and decidable.
\item\noindent{\hskip-12 pt\bf 6.b.:}\
The finitely many minimal elements of
${\it Pre}_{\rightarrow_A}^*(F) \cap \uc{C}$ are effectively
constructible.
\item\noindent{\hskip-12 pt\bf 7.a.:}\
For any finite set $U \subseteq \uc{C}$, 
the set ${\it Pre}_{\rightarrow_A}^*(\uc{U})$ is decidable.
\item\noindent{\hskip-12 pt\bf 7.b.:}\
For any finite sets $U,X \subseteq \uc{C}$,
it is decidable if
$\overline{\uc{X}} \cap {\it Pre}_{\rightarrow_A}^*(\uc{U}) 
\cap \uc{C} \neq \emptyset$.
(In other words, it is decidable if
$\exists z \in (\overline{\uc{X}} \cap \uc{C}).\, z \rightarrow_A^* \uc{U}$.)
\end{enumerate}
\end{defi}

\noindent Note that in 7.a and 7.b the set 
${\it Pre}_{\rightarrow_A}^*(\uc{U})$ is not necessarily
constructible, because $\le$ is not a well-quasi-order on $S$.
Note also that $F$ is not necessarily upward-closed.

\begin{thm}\label{thm:phase}
If $(S,C,\le,\rightarrow, \rightarrow_A, \rightarrow_B,{\it init}, F)$
satisfies the abstract phase construction requirements of
Def.~\ref{def:phase},
then the problem ${\it init} \rightarrow^* F$ is decidable.
\end{thm}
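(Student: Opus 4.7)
The plan is to decide ${\it init} \rightarrow^* F$ by a backward saturation that lives entirely inside the wqo subspace $\uc{C}$. By condition~5.a every $\rightarrow_B$-step has source and target in $\uc{C}$, so any computation from ${\it init}$ to $F$ decomposes into $\rightarrow_A^*$-segments separated by individual $\rightarrow_B$-edges whose endpoints all lie in $\uc{C}$. This lets us index by the number of $\rightarrow_B$-steps and saturate backward one phase at a time, keeping the minimal-element descriptions finite throughout.

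Concretely, I would define for $i \ge 0$ the set $V_i \subseteq \uc{C}$ of states in $\uc{C}$ from which $F$ is reachable using at most $i$ firings of $\rightarrow_B$, so that
\[
V_0 \;=\; {\it Pre}^*_{\rightarrow_A}(F) \cap \uc{C}, \qquad
V_{i+1} \;=\; V_i \,\cup\, \bigl({\it Pre}^*_{\rightarrow_A}(\uc{U_i}) \cap \uc{C}\bigr),
\]
where $U_i$ denotes the set of minimal elements of ${\it Pre}_{\rightarrow_B}(V_i)$. Each $V_i$ is upward-closed in $\uc{C}$ by monotonicity of $\rightarrow_A$ (condition~4) and $\rightarrow_B$ (condition~5.a), and the finite set of minimal elements of $V_0$ is supplied directly by condition~6.b. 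For the inductive step, condition~5.b produces the finite set $U_i$ from the minimal elements of $V_i$, while $V_{i+1}$ is decidable on $\uc{C}$ by combining a finite check against the minimal elements of $V_i$ with condition~7.a applied to $\uc{U_i}$; in particular $V_{i+1}$ is recursively enumerable.

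To compute the minimal elements of $V_{i+1}$ I would invoke the generalized Valk-Jantzen theorem (Theorem~\ref{thm:GVJ}) on the wqo $(\uc{C},\le)$. Its only nontrivial hypothesis is that, for every finite $X \subseteq \uc{C}$, one can decide whether $V_{i+1} \cap \overline{\uc{X}} \neq \emptyset$; this splits into a trivial check of the minimal elements of $V_i$ against $\uc{X}$ plus the decision problem $\overline{\uc{X}} \cap {\it Pre}^*_{\rightarrow_A}(\uc{U_i}) \cap \uc{C} \neq \emptyset$, which is exactly condition~7.b. The chain $V_0 \subseteq V_1 \subseteq \dots$ then stabilizes by the wqo property on $\uc{C}$ (condition~2), and I would detect stabilization by comparing successive finite minimal-element sets under the decidable order $\le$. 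Writing $V^*$ for the fixed point and $U^*$ for the minimal elements of ${\it Pre}_{\rightarrow_B}(V^*)$, one obtains ${\it init} \rightarrow^* F$ iff either ${\it init} \in {\it Pre}^*_{\rightarrow_A}(F)$, decidable by condition~6.a, or the computation must fire at least one $\rightarrow_B$-step, in which case ${\it init} \in {\it Pre}^*_{\rightarrow_A}(\uc{U^*})$, decidable by condition~7.a.

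The hard part is the inductive computation of the minimal elements of $V_{i+1}$: because $\le$ is not a wqo on all of $S$, we cannot perform a naive backward fixpoint over $S$ to obtain the minimal elements of ${\it Pre}^*_{\rightarrow_A}(\uc{U_i}) \cap \uc{C}$. The generalized Valk-Jantzen theorem is exactly what bridges this gap, and the asymmetric pair of conditions~7.a and~7.b in Definition~\ref{def:phase} is engineered to supply precisely the decidability oracles it needs. Checking that conditions~5.a--7.b interlock cleanly so that Theorem~\ref{thm:GVJ} applies at every induction step is the real bookkeeping work of the proof.
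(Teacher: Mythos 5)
Your proposal is correct and follows essentially the same route as the paper's proof: decompose computations by the number of $\rightarrow_B$-steps, iterate backward maintaining finite sets of minimal elements via condition~5.b and the generalized Valk--Jantzen theorem (using conditions~7.a and~7.b), detect convergence through the wqo on $\uc{C}$, and conclude with conditions~6.a and~7.a. The only difference is inessential bookkeeping (your $V_i$ track ``at most $i$ B-firings'' from states of $\uc{C}$, while the paper's $U_k$ track states from which $F$ is reached starting with a $\rightarrow_B$-step), so the argument is fine as it stands.
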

\begin{proof}
By Def.~\ref{def:phase} (cond. 3), 
we have ${\it init} \rightarrow^* F$ iff
(1) ${\it init} \rightarrow_A^* F$, or
(2) ${\it init} \rightarrow_A^* (\rightarrow_B \rightarrow_A^*)^+ F$.

Condition (1) can be checked directly, by Def.~\ref{def:phase} (cond. 6.a).

In order to check condition (2), we first construct a sequence of 
minimal finite sets $U_k \subseteq \uc{C}$ for $k=1,2,\dots$ such that
$\uc{U_k} = \{s \in S \,|\, \exists j:1 \le j \le k.\ s (\rightarrow_B \rightarrow_A^*)^{j} F\}$
and show that this sequence converges.

First we construct the minimal finite set $U_1' \subseteq \uc{C}$ s.t.
$\uc{U_1'} = {\it Pre}_{\rightarrow_A}^*(F) \cap \uc{C}$.
This is possible by conditions 6.a and 6.b of Def.~\ref{def:phase}.
Then we construct the minimal finite set $U_1 \subseteq \uc{C}$ s.t.
$\uc{U_1} = {\it Pre}_{\rightarrow_B}(\uc{U_1'})$.
This is possible by conditions 5.a and 5.b of Def.~\ref{def:phase}.

For $k=1,2,\dots$ we repeat the following steps.
\begin{iteMize}{$\bullet$}
\item
Given the finite set $U_k \subseteq \uc{C}$, we construct 
the minimal finite set $U_{k+1}' \subseteq \uc{C}$ s.t.
$\uc{U_{k+1}'} = {\it Pre}_{\rightarrow_A}^*(\uc{U_k}) \cap \uc{C}$.
This is possible because of Theorem~\ref{thm:GVJ}, which
we instantiate as follows.
Let $\Omega = \uc{C}$ and $V = {\it Pre}_{\rightarrow_A}^*(\uc{U_k}) \cap \uc{C}$.
Using the conditions from Def.~\ref{def:phase} we have the following:
By condition 2, $\le$ is a decidable well-quasi-order
on $\uc{C}$. By condition 4, $V = {\it Pre}_{\rightarrow_A}^*(\uc{U_k}) \cap \uc{C}$
is upward-closed, since $\rightarrow_A$ is monotone.
By conditions 7.a and 2, $V$ is decidable,
and by condition 7.b the question $\overline{\uc{X}} \cap V \neq \emptyset$
is decidable. Thus, by Theorem~\ref{thm:GVJ}, the finitely many
minimal elements of $V$, i.e., the set $U_{k+1}'$, are effectively 
constructible. 
\item
Given $U_{k+1}'$, we construct the minimal finite set 
$U_{k+1}'' \subseteq \uc{C}$ s.t.
$\uc{U_{k+1}''} = {\it Pre}_{\rightarrow_B}(\uc{U_{k+1}'})$.
This is possible by conditions 5.a and 5.b of Def.~\ref{def:phase}.

Then let $U_{k+1}$ be the finite set of minimal elements
of $U_{k+1}'' \cup U_k$.
\end{iteMize}
The sequence $\uc{U_1}, \uc{U_2}, \dots$ is a monotone-increasing
sequence of upward-closed subsets of $\uc{C}$, where $U_k$ is the finite set of minimal elements
of $\uc{U_k}$. This sequence converges, because $\le$ is a well-quasi-order
on $\uc{C}$ by condition 2 of Def.~\ref{def:phase}.
Therefore, we get $\uc{U_n} = \uc{U_{n+1}}$ for some finite index $n$.
Since $\le$ is only assumed to be a quasi-order (instead of an order) the
finite sets of minimal elements $U_n$ and $U_{n+1}$ representing $\uc{U_n}$ 
and $\uc{U_{n+1}}$ are not necessarily the same. However, we can still check
whether $\uc{U_n} = \uc{U_{n+1}}$, and thus detect
convergence, because $U_n$ and $U_{n+1}$ are finite and
$\le$ is decidable by condition 2 of Def.~\ref{def:phase}.

We obtain $\uc{U_n} = \{s \in S \,|\, s (\rightarrow_B \rightarrow_A^*)^* F\}$,
because transition $\rightarrow_B$ is only enabled in $\uc{C}$
by Def.~\ref{def:phase} (cond. 5.a).

Finally, by Def.~\ref{def:phase} (cond. 7.a) we can do the final
check whether ${\it init} \in {\it Pre}_{\rightarrow_A}^*(\uc{U_n})$
and thus decide condition (2). 
\end{proof}

In the following section we use
Theorem~\ref{thm:phase} to solve the optimal cost problem for PTPN.
However, it also has many other applications, when used with different 
instantiations.

\begin{remark}
Theorem~\ref{thm:phase} can be used
to obtain a simple proof of decidability of the coverability problem for Petri nets with one
inhibitor arc. Normal Petri net transitions are described by $\trans{}_A$,
while the inhibited transition is described by $\trans{}_B$.
(This uses the decidability of the normal Petri net reachability problem 
\cite{Mayr:SIAM84} to prove conditions 7.a and 7.b).

A different instantiation could be used to show the decidability of the
reachability problem for generalized classes of lossy FIFO-channel systems \cite{AbJo:lossy:IC,CeFiPu:unreliable:IC}
where, e.g., an extra type of transition $\trans{}_B$ is only enabled when 
some particular channel is empty.
\end{remark}

\section{The Main Result}\label{sec:main}

In this section we state the main results on the decidability 
and complexity of the cost-threshold problem.

\subsection{The Lower Bound}

We show that the cost-threshold problem for PTPN is computationally
at least as hard as two other known problems.
\begin{enumerate}[(1)]
\item
The reachability problem for Petri nets with one inhibitor arc.
\item
The control-state reachability problem for timed Petri nets (TPN)
without any cost model.
\end{enumerate}

The first item establishes a connection between
PTPN and Petri nets with one inhibitor arc. 
This is interesting in itself, but it only yields a relatively weak
EXPSPACE lower bound (via the EXPSPACE lower bound for reachability 
in ordinary Petri nets \cite{Lipton:YALE76}).
The second item shows that the problem is 
$F_{\omega^{\omega^\omega}}$-hard in the fast growing hierarchy,
by using a recent result from \cite{HSS:LICS2012}.
In particular this means that the cost-threshold problem for PTPN
is non primitive recursive.

\begin{defi}\label{def:ian}
Petri nets with one inhibitor arc \cite{Reinhardt:inhibitor,Bonnet-mfcs11}
are an extension of Petri nets. 
They contain a special {\em inhibitor arc} that prevents a 
certain transition from
firing if a certain place is nonempty.

Formally, a Petri net with an inhibitor arc is described by a tuple
$N=\iantuple$ where $\inhibitor$ describes a modified firing rule
for transition $\inhibitortransition$: it can fire only if 
$\inhibitorplace$ is empty.
\begin{enumerate}[$\bullet$]
\item $\states$ is a finite set of control-states
\item $\places$ is a finite set of places
\item $\transitions$ is a finite set of transitions.
Every transition $\transition \in \transitions$ has the form
$\transition = (q_1,q_2,I,O)$ where $q_1,q_2 \in \states$ and
$I,O \in \msetsover{P}$.
\end{enumerate}
Let $(q,\marking) \in \states\times\msetsover{\places}$ be a configuration of $N$. 
\begin{iteMize}{$\bullet$}
\item
If $\transition \in \transitions - \{\inhibitortransition\}$
then $\transition = (q_1,q_2,I,O) \in \transitions$ is enabled at
configuration $(q,\marking)$ iff
$q=q_1$ and $\marking \ge I$.
\item
If $\transition = \inhibitortransition$
then $\transition = (q_1,q_2,I,O) \in \transitions$ is enabled at
configuration $(q,\marking)$ iff $q=q_1$ and $\marking \ge I$
and $\marking(\inhibitorplace)=0$.
\end{iteMize}
Firing $\transition$ yields the new
configuration $(q_2,\marking')$ where
$\marking' = \marking - I + O$.

The reachability problem for Petri nets with 
one inhibitor arc is decidable \cite{Reinhardt:inhibitor,Bonnet-mfcs11}.
\end{defi}

Now we describe a polynomial time reduction from the 
reachability problem for Petri nets with 
one inhibitor arc
to the cost-threshold problem for PTPN.

\begin{lem}\label{thm:lowerbound}
Let $\iantuple$ be a Petri net with one inhibitor arc
with initial configuration $(\initstate, \lst{})$
and final configuration $(\finalstate, \lst{})$.

One can construct in polynomial time 
a PTPN $\tuple{\states',\places',\transitions',\costfun}$
with initial configuration $\initconf = \tuple{\initstate,\lst{}}$
and set of final configurations $\finalconfs = \{(q'_{\it fin}, M)\ |\ M \in
\msetsover{(\places\times\nnreals)}\}$ s.t. 
$(\initstate, \lst{}) \trans{*} (\finalstate, \lst{})$ iff
$\optcostof{\initconf, \finalconfs}=0$.
\end{lem}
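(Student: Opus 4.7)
The plan is a direct polynomial-time simulation of the inhibitor net by a PTPN, using the cost mechanism to veto every illegal inhibitor firing. I keep the finite-state skeleton of the inhibitor net, turn every place $p$ into a PTPN place $p'$, and simulate each ordinary transition $(q_1,q_2,I,O)$ by a single discrete PTPN transition of cost $0$ whose arcs carry the trivial interval $[0,\infty)$. The inhibitor place $\inhibitorplace$ gives rise to a PTPN cost-place $\inhibitorplace'$ with per-time-unit cost $1$; all other simulation places are free.

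The heart of the construction is the encoding of $\inhibitortransition$: it is replaced by a short gadget that passes through two fresh intermediate control states and one fresh free place $p_\tau$. The first step of the gadget consumes the inputs of $\inhibitortransition$ and produces a ``timer'' token on $p_\tau$ with age forced to $0$ by the output interval $[0,0]$; the second step consumes that timer token via the point interval $[1,1]$; and the third step produces the outputs of $\inhibitortransition$. Because the only way to leave the first intermediate state is to consume a timer token of age exactly $1$, exactly one time unit must elapse inside the gadget. During that unit, every token sitting on $\inhibitorplace'$ contributes exactly $1$ to the cost. Hence a gadget firing is free iff $\inhibitorplace$ is empty at the moment of the firing, and otherwise it costs at least $1$. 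A final cost-$0$ transition from the simulated $\finalstate$ to a new control state $q'_{\it fin}$ closes off the PTPN, and $\costof{\transition}=0$ is set for every constructed PTPN transition.

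The forward implication is then immediate: replay any computation $(\initstate,\lst{}) \trans{*} (\finalstate,\lst{})$ step-by-step, using the gadget for each inhibitor firing, and the total cost is $0$. For the converse, use the fact that every illegal gadget firing costs at least $1$ and costs are nonnegative: since $\optcostof{\initconf,\finalconfs}=0$ there must exist a PTPN computation of cost strictly less than $1$, so every gadget firing in it respects the emptiness of $\inhibitorplace$, and collapsing the gadgets recovers a computation of the inhibitor net ending in the simulated $\finalstate$. The main obstacle I foresee is arranging that this projected computation actually ends in the empty marking $\lst{}$ rather than some arbitrary marking on $\finalstate$; my plan is to prefix the transition to $q'_{\it fin}$ with a ``final check'' phase, obtained by turning every simulation place into a cost-place and providing, for each non-inhibitor place, free shadow copies together with explicit stash/recovery transitions, so that (i) the inhibitor gadget's $1$-unit wait can still be arranged to cost $0$ by stashing everything that is not on $\inhibitorplace$, and (ii) the final $1$-unit wait, executed only after all shadows have been recovered, costs exactly the total number of tokens still in the simulation, thereby certifying emptiness at cost $0$. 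Making those stash/recovery transitions only executable at the appropriate points (via dedicated intermediate control states) is the technically delicate bookkeeping, but it is polynomial and leaves the cost argument above intact.
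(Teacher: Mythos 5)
Your mechanism for policing the inhibitor firings themselves is sound and genuinely different from the paper's: you force a full one-time-unit wait inside the gadget and make $\inhibitorplace$ the only cost-place, so an unfaithful firing is charged at least $1$ immediately, whereas the paper forces only an arbitrarily small nonzero delay and instead requires that tokens consumed from $\inhibitorplace$ have age exactly $0$, so that an unfaithful firing leaves a permanently stuck witness token. The gap is exactly where you suspected it: certifying that the run ends in the \emph{empty} marking $\lst{}$, not just in control-state $\finalstate$. Your fix (make every simulation place a cost-place, add free shadow places with stash/recovery transitions, and perform the final one-unit wait ``only after all shadows have been recovered'') cannot be implemented: the condition ``all shadows have been recovered'' is an emptiness test on unboundedly marked free places, i.e.\ precisely the inhibitor-arc capability you are trying to simulate, and the cost mechanism cannot enforce it either, since the shadow places must be free (so the gadget waits cost $0$) and free places never incur cost during the final wait. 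Consequently the construction is unsound in the direction ``$\optcostof{\initconf,\finalconfs}=0 \Rightarrow (\initstate,\lst{})\trans{*}(\finalstate,\lst{})$'': a PTPN run can stash tokens into the shadows, never recover them, and reach $q'_{\it fin}$ at cost $0$ even though the projected inhibitor-net run ends in a nonempty marking (and $(\finalstate,\lst{})$ may be unreachable altogether). Note also the intrinsic tension in your revised scheme: the full one-unit gadget wait forces you to hide all non-$\inhibitorplace$ tokens to keep faithful runs cheap, and that hiding is exactly what defeats the final emptiness check.

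The paper resolves this tension differently: all original places get cost $1$, the inhibitor gadget forces only an arbitrarily small positive delay (so tokens legitimately present elsewhere contribute arbitrarily little, and the infimum over faithful runs is still $0$), unfaithfulness is detected not by cost at firing time but by the age-$0$ input constraint on $\inhibitorplace$, which turns any token present during that delay into a permanent resident of $\inhibitorplace$, and a single forced one-unit delay just before $q'_{\it fin}$ then simultaneously certifies both faithfulness and emptiness of the final marking, since any surviving token on any original place costs at least $1$ there. If you want to salvage your gadget-based detection, you would still need a device of this ``permanent witness'' kind (or some other enforceable mechanism) for the final emptiness test; the stash/recovery bookkeeping as described does not provide one.
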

\begin{proof}
Let $\iantuple$ be a Petri net with one inhibitor arc
with initial configuration $(\initstate, \lst{})$
and final configuration $(\finalstate, \lst{})$.
We construct in polynomial time a PTPN $\tuple{\states',\places',\transitions',\costfun}$
with initial configuration $\initconf = \tuple{\initstate,\lst{}}$
and set of final configurations $\finalconfs = \{(q'_{\it fin}, M)\ |\ M \in
\msetsover{(\places\times\nnreals)}\}$ s.t. 
$(\initstate, \lst{}) \trans{*} (\finalstate, \lst{})$ iff
$\inf\{\costof{\comp} \,|\, \initconf \compto{\comp} \finalconfs\}=0$.

Let $\states' = \states \cup \{q'_{\it fin}, q^1_{\it wait}, 
q^2_{\it wait}\}$.
Let $\places' = \places \cup \{p^1_{\it wait}, p^2_{\it wait}\}$.
We define $\costfun(\place)=1$ for every $\place \in \places$,
$\costfun(\place)=0$ for $\place \in \places' - \places$,
and
$\costfun(\transition')=0$ for $\transition' \in \transitions'$.
In order to define the transitions, we need a function that 
transforms multisets of places into multisets over $\places \times\Intervals$
by annotating them with time intervals.
Let $\lst{\place_1, \dots, \place_n} \in \msetsover\places$
and $\interval \in \Intervals$. Then
$\annotate(\lst{\place_1, \dots, \place_n}, \interval) =
\lst{(\place_1,\interval), \dots, (\place_n,\interval)} \in \msetsover{(\places\times\Intervals)}$.

For every transition $\transition \in \transitions - \{\inhibitortransition\}$
with $\transition = (q_1,q_2,I,O)$ we have a transition
$\transition' = (q_1,q_2,I',O') \in \transitions'$ where
$I' = \annotate(I \cap \msetsover{(\places -\{\inhibitorplace\})},[0:\infty))
+ \annotate(I \cap \msetsover{\{\inhibitorplace\}}, [0:0])$
and 
$O' = \annotate(O, [0:0])$. I.e., the age
of the input tokens from $\inhibitorplace$ must be zero and for the other
input places the age does not matter. The transitions always output tokens of age
zero.
Instead of $\inhibitortransition=(q_1^i, q_2^i, I^i, O^i) \in \transitions$
with the inhibitor arc $\inhibitor$, 
we have the following transitions in $\transitions'$:
$(q_1^i, q^1_{\it wait}, \annotate(I^i, [0:\infty)), \lst{(p^1_{\it wait},[0:0])})$
and 
$(q^1_{\it wait}, q_2^i, \lst{(p^1_{\it wait},(0:1])}, \annotate(O, [0:0]))$.
This simulates $\inhibitortransition$ in two steps while enforcing an
arbitrarily small, but nonzero, delay. This is because the token on place
$p^1_{\it wait}$ needs to age from age zero to an age $>0$.
If $\inhibitorplace$ is empty then this yields a faithful simulation of a step
of the Petri net with one inhibitor arc. Otherwise, the tokens on
$\inhibitorplace$ will age to a nonzero age and can never be consumed in the
future. I.e., a token with nonzero age on $\inhibitorplace$ will always stay
there and indicate an unfaithful simulation.

To reach the set of final configurations $\finalconfs$, we add the
following two transitions:
$(q_{\it fin}, q^2_{\it wait}, \lst{}, \lst{(p^2_{\it wait},[0:0])})$
and
$(q^2_{\it wait}, q_{\it fin}', \lst{(p^2_{\it wait},[1:1])}, \lst{})$.
This enforces a delay of exactly one time unit at the end of the computation,
i.e., just before reaching $\finalconfs$.

If $(\initstate, \lst{}) \trans{*} (\finalstate, \lst{})$ in the Petri net
with one inhibitor arc, then for every $\epsilon >0$ there is a computation
$\initconf \compto{\comp} (q_{\it fin}, \lst{})$ in the PTPN which
faithfully simulates it and has $\costof{\comp} < \epsilon$, 
because the enforced delays can be made
arbitrarily small. The final step to $\finalconfs = \{(q'_{\it fin}, M)\ |\ M \in
\msetsover{(\places\times\nnreals)}\}$ takes one time unit, but
costs nothing, because there are no tokens on cost-places.
Thus $\optcostof{\initconf, \finalconfs} =
\inf\{\costof{\comp} \,|\, \initconf \compto{\comp} \finalconfs\}=0$.

On the other hand, if $\optcostof{\initconf, \finalconfs} 
=\inf\{\costof{\comp} \,|\, \initconf \compto{\comp} \finalconfs\}=0$ 
then the last step from $q_{\it fin}$ to $q_{\it fin}'$ must have
taken place with no tokens on places in $\places$. In particular,
$\inhibitorplace$ must have been empty. Therefore, the PTPN did a faithful
simulation of a computation 
$(\initstate, \lst{}) \trans{*} (\finalstate, \lst{})$ in the Petri net
with one inhibitor arc, i.e., the transition $\inhibitortransition$ was
only taken when $\inhibitorplace$ was empty.
Thus $(\initstate, \lst{}) \trans{*} (\finalstate, \lst{})$.
\end{proof}

Lemma~\ref{thm:lowerbound} implies that
even a special case of the cost-threshold problem for PTPN,
namely the question $\optcostof{\initconf, \finalconfs} = 0$,
is at least as hard as the 
reachability problem for Petri nets with one inhibitor arc.
The exact complexity of this problem is not known, but it is
at least as hard as reachability in standard Petri nets 
(without any inhibitor arc).
The exact complexity of the reachability problem for 
standard Petri nets is not known either, but an
EXPSPACE lower bound has been shown in \cite{Lipton:YALE76}.

The connection between the cost-threshold problem for PTPN
and the control-state reachability problem for TPN is very easy to show.

\begin{lem}\label{lem:lower2}
The control-state reachability problem for a timed Petri net (TPN)
can be reduced in polynomial time to the
question $\optcostof{\initconf, \finalconfs} = 0$ for 
a PTPN.
\end{lem}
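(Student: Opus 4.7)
The plan is to observe that a TPN is essentially a PTPN whose cost function is trivial, so the reduction is almost the identity map. Concretely, given a TPN $N = \tuple{\states,\places,\transitions}$ (with the same syntax for places, transitions, arcs and time intervals as a PTPN but without any costs), together with an initial control-state $\initstate$ and a target control-state $\finalstate$, I would construct the PTPN $N' = \tuple{\states,\places,\transitions,\costfun}$ with the \emph{same} control-states, places, transitions, and timed arcs as $N$, and define $\costfun(p) := 0$ for every $p \in \places$ and $\costfun(\transition) := 0$ for every $\transition \in \transitions$. This construction is clearly polynomial-time (indeed, linear-time) in the size of the input TPN.

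Next I would take $\initconf := \tuple{\initstate,\lst{}}$ and $\finalconfs := \{\tuple{\finalstate,\marking}\,|\,\marking \in \msetsover{(\places\times\nnreals)}\}$, exactly as in the Cost-Optimality problem for PTPN. The key observation is that the transition relation $\movesto{}$ on configurations of $N'$ coincides with the underlying TPN transition relation of $N$, since the cost function plays no role in defining enabledness or the effect of firing transitions; it enters only through $\costof{\cdot}$.

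The correctness of the reduction then reduces to showing that $\optcostof{\initconf,\finalconfs} = 0$ if and only if the control-state $\finalstate$ is reachable from $\initconf$ in $N$. For the forward direction, if $\optcostof{\initconf,\finalconfs} = 0$ (in particular, the infimum is finite), then the set $\{\costnum \,|\, \initconf \movesto{\costnum} \finalconfs\}$ is nonempty, so there exists at least one computation $\comp$ with $\initconf \compto{\comp} \finalconfs$, witnessing control-state reachability in $N$. For the converse, if $\finalstate$ is reachable in $N$, then there is a computation $\initconf \compto{\comp} \finalconfs$ in $N'$; since every place and every transition has cost $0$, every term in the sum defining $\costof{\comp}$ (both the discrete firing costs $\costof{t}$ and the timed storage costs $x \cdot \sum_{p \in \places} |\marking(p)| \cdot \costof{p}$) vanishes, giving $\costof{\comp} = 0$ and hence $\optcostof{\initconf,\finalconfs} = 0$.

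There is no real obstacle here: the reduction is immediate once one observes that PTPN are syntactically a conservative extension of TPN, and that with an identically zero cost function the quantitative question $\optcostof{\initconf,\finalconfs} = 0$ collapses to the qualitative reachability question. The only small point to mention is the convention that the infimum over the empty set is $+\infty$, so that the equality $\optcostof{\initconf,\finalconfs} = 0$ does force the existence of a witnessing computation.
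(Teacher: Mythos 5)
Your proposal is correct and matches the paper's proof: the paper likewise takes the TPN verbatim, assigns cost zero to all places and transitions, and observes that $\optcostof{\initconf,\finalconfs}=0$ exactly when the target control-state is reachable (the paper calls the optimum ``undefined'' where you use the infimum-over-empty-set-is-$\infty$ convention). Your write-up just spells out the two directions in slightly more detail than the paper does.
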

\begin{proof}
We construct the PTPN by extending the TPN with a cost 
function that assigns cost zero to all transitions and places.
We define $\initconf$ to be the initial configuration of the TPN
and the set $\finalconfs$ by the target control-state of the TPN.
If the target control-state is reachable in the TPN
then $\optcostof{\initconf, \finalconfs} = 0$ in the PTPN,
otherwise $\optcostof{\initconf, \finalconfs}$ is undefined.
\end{proof}

It has been shown in \cite{HSS:LICS2012} (Corollary 6.) that the
control-state reachability problem for timed Petri nets (TPN)
is $F_{\omega^{\omega^\omega}}$-hard in the fast growing hierarchy.
By Lemma~\ref{lem:lower2}, we obtain the following theorem.

\begin{thm}
The cost-threshold problem for PTPN is 
$F_{\omega^{\omega^\omega}}$-hard in the fast growing hierarchy,
and thus non primitive recursive.
\end{thm}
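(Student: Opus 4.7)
The plan is to obtain the theorem as an immediate consequence of two ingredients already in place: the hardness result of \cite{HSS:LICS2012} (Corollary~6) for TPN control-state reachability, and the polynomial-time reduction of Lemma~\ref{lem:lower2} from that problem to the cost-threshold problem for PTPN. So no new construction is needed; the proof is a short compositional argument, and the main thing to check is that the reduction preserves the complexity class in the sense of the fast growing hierarchy.

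First, I would recall the cited fact: the control-state reachability problem for TPN is $F_{\omega^{\omega^\omega}}$-hard, where $F_{\omega^{\omega^\omega}}$ is the corresponding level of the fast growing hierarchy; in particular, this class strictly contains all primitive recursive problems. Next, I would invoke Lemma~\ref{lem:lower2}, which produces in polynomial time, from any TPN instance with a designated target control-state, a PTPN (with all costs set to zero) together with $\initconf$ and $\finalconfs$ such that the target control-state is reachable in the TPN iff $\optcostof{\initconf,\finalconfs}=0$. Since $\optcostof{\initconf,\finalconfs}=0$ is a special instance of the cost-threshold problem (take the threshold $\threshold = 0$), this gives a polynomial-time many-one reduction from TPN control-state reachability to the cost-threshold problem for PTPN.

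Finally, I would conclude by standard closure properties of $F_{\omega^{\omega^\omega}}$-hardness under polynomial-time reductions: since the class $F_{\omega^{\omega^\omega}}$ is closed under polynomial-time (indeed, under primitive recursive) reductions, the $F_{\omega^{\omega^\omega}}$-hardness of TPN control-state reachability transfers to the cost-threshold problem for PTPN. Non primitive recursiveness is an immediate corollary, since every primitive recursive function is dominated by some function at a finite level of the fast growing hierarchy, all of which sit strictly below $F_{\omega^{\omega^\omega}}$.

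I do not anticipate a real obstacle here; the only point to be careful about is that the reduction in Lemma~\ref{lem:lower2} lands in the zero-cost case (rather than an arbitrary threshold), but this is a strengthening of the hardness result rather than a weakening, and it is already captured by the cost-threshold problem as stated. Hence the theorem follows in one step once both prior results are cited.
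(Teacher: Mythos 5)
Your proposal is correct and matches the paper's argument exactly: the paper also obtains the theorem by combining the $F_{\omega^{\omega^\omega}}$-hardness of TPN control-state reachability from \cite{HSS:LICS2012} (Corollary 6) with the polynomial-time reduction of Lemma~\ref{lem:lower2}, observing that the zero-cost question is an instance of the cost-threshold problem. No further comment is needed.
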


\subsection{The Upper Bound}

Here we state the main computability result of the paper. Its proof
refers to several auxiliary lemmas that will be shown in the following sections.

\begin{thm}
Consider a PTPN $\ptpn=\ptpntuple$
with initial configuration $\initconf = \tuple{\initstate,\lst{}}$
and set of final configurations 
$\finalconfs = \{(q_{\it fin}, M)\ |\ M \in
\msetsover{(\places\times\nnreals)}\}$.\\
Then 
$\optcostof{\initconf, \finalconfs}$
is computable.
\end{thm}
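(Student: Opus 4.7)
The plan is to chain together the reductions of the preceding sections and then reduce to a decidable coverability question. First, by Theorem~\ref{thm:samecost-PTPN-APTPN} the infimum $\optcostof{\initconf, \finalconfs}$ in the PTPN equals the infimum of the costs of computations in the corresponding A-PTPN. In the A-PTPN every transition cost is an integer: discrete transitions have integer cost by definition, abstract timed transitions of types 1 and 2 cost zero, and types 3 and 4 have cost $\sum_{i,p} |b_i(p)|\cdot \costof{p} \in \nat$. Hence the infimum, if finite, is a natural number, and it suffices to decide the cost-threshold problem ``is $\optcostof{\initconf, \finalconfs} \le \threshold$?'' for $\threshold = 0,1,2,\dots$. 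In parallel, one decides whether $\finalstate$ is reachable at all in the PTPN (this is the special case of the cost-threshold problem in which all place and transition costs are replaced by $0$ and $\threshold = 0$); if the reachability check fails, the answer is $+\infty$, otherwise some threshold eventually succeeds and is returned. Hence everything reduces to deciding the cost-threshold problem.

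By Lemma~\ref{lem:ac-ptpn-comp}, for a fixed $\threshold$ the cost-threshold problem reduces to the coverability question $\ac_\threshold(\aptpn(\initconf)) \compto{*} \bigcup_{0 \le y \le \threshold} \ac_y(\aptpn(\finalconfs))$ in the associated AC-PTPN. To solve this, I would instantiate the abstract phase construction of Theorem~\ref{thm:phase} as follows. Take $S$ to be the set of AC-PTPN configurations; order two configurations by requiring equality on the control-plus-budget component together with the natural substring-on-multisets order of Section~\ref{delta:section} on the low/zero/high blocks; let $F$ be the set of AC-PTPN final configurations. Split the transition relation into $\rightarrow_A$, containing all transitions that are monotone (discrete transitions, whose cost is configuration-independent, and the abstract timed transitions of types~1 and~2, which have cost $0$), and $\rightarrow_B$, containing the non-monotone transitions of types~3 and~4 whose cost $\sum_{i,p} |b_i(p)|\cdot \costof{p}$ depends on the configuration size. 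The set $C$ is chosen as a finite family of small witnesses whose upward closure captures exactly the configurations on which some $\rightarrow_B$-step is affordable under the current budget. On $\uc{C}$ the budget bounds the size of the cost-place content, so the restricted order is a well-quasi-order by Dickson/Higman, giving condition 2, and the monotonicity conditions (4, 5.a) as well as effectiveness of minimal $\rightarrow_B$-predecessors (5.b) follow from the local form of the A-PTPN transition rules.

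The heart of the proof, and the step I expect to be the main obstacle, is verifying the remaining conditions~6 and~7 of Definition~\ref{def:phase}: computability of the minimal $\rightarrow_A$-predecessors of $F$ inside $\uc{C}$, decidability of $\rightarrow_A^*$-predecessor sets of finite upward-closed targets, and, most delicate, the ``avoidance'' question 7.b asking whether there exists a configuration in $\uc{C}\setminus \uc{X}$ which $\rightarrow_A$-reaches a given $\uc{U}$. The $\rightarrow_A$-fragment still has the full complexity of a timed Petri net enriched with sequence-of-multisets structure, so the plan is to use the generalized Valk--Jantzen theorem (Theorem~\ref{thm:GVJ}) to reduce 6.b and 7.b to yes/no reachability queries, and then to settle those queries by encoding them into reachability for Petri nets with one inhibitor arc via the intermediate class of simultaneous-disjoint-transfer nets; these encodings are exactly what will be developed in Sections~\ref{sec:sdtn} and~\ref{sec:oracle}, and the final appeal is to the decidability result of Reinhardt and Bonnet~\cite{Reinhardt:inhibitor,Bonnet-mfcs11}. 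Once those oracles are in place, Theorem~\ref{thm:phase} yields decidability of AC-PTPN coverability, hence of cost-threshold, hence, by the iteration described in the first paragraph, computability of $\optcostof{\initconf, \finalconfs}$.
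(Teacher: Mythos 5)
Your overall route is the paper's: reduce to the A-PTPN via Theorem~\ref{thm:samecost-PTPN-APTPN}, observe that the infimum (if finite) is a natural number and iterate thresholds after a plain reachability check, pass to the AC-PTPN by Lemma~\ref{lem:ac-ptpn-comp}, and decide AC-PTPN coverability by instantiating Theorem~\ref{thm:phase}, with the generalized Valk--Jantzen theorem and the SD-TN/inhibitor-arc encodings supplying the oracles. The gap sits exactly where the instantiation is delicate: the choice of the quasi-order. As you state it --- equality of the control-plus-budget component together with the embedding order on the blocks of multisets --- your order is the paper's $\lessall$, i.e.\ the larger configuration may have extra tokens on \emph{any} places. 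Under that order condition 5.a of Def.~\ref{def:phase} fails: if $\beta\lessall\gamma$ with extra tokens on cost-places, a type~3/4 step that is affordable from $\beta$ (cost $z\le y$) need not be affordable from $\gamma$, so $\rightarrow_B$ is not monotone; moreover the set of configurations from which some type~3/4 step is affordable is not $\lessall$-upward-closed, so no finite $C$ with the property you ascribe to it (``upward closure captures exactly the affordable configurations'') can exist. The paper's fix is the order $\lessfree$ of Def.~\ref{def:lessfree}, which permits adding tokens only on free-places and keeps the cost-place content identical. With $\lessfree$, affordability of types~3/4 is upward-closed, $\rightarrow_B$ is monotone on $\uc{C}$ (condition 5.a), and $\lessfree$ is a wqo on $\uc{C}$ precisely because the cost-place content is bounded by $\threshold$ there --- it is \emph{not} a wqo on all of $S$, which is why the non-WSTS phase construction is needed in the first place. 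Your remark that the budget bounds the cost-place content on $\uc{C}$ suggests you sense this, but the order you actually define is the one for which the argument breaks, and the claim that 5.a ``follows from the local form of the transition rules'' hides exactly this point.

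A minor further remark: condition 6 does not need the Valk--Jantzen/SD-TN machinery. Since $F$ is upward-closed w.r.t.\ $\lessall$, which \emph{is} a decidable wqo on all AC-PTPN configurations, and $\rightarrow_A$ is monotone w.r.t.\ $\lessall$, the paper computes ${\it Pre}^*_{\rightarrow_A}(F)$ by the standard backward coverability iteration and then extracts the finitely many $\lessfree$-minimal elements of its intersection with $\uc{C}$ by a finite enumeration (again using that cost content there is bounded by $\threshold$). Note that $F$ is not the $\lessfree$-upward closure of any finite set, so the oracle of Lemma~\ref{lem:automata_oracle}, stated for targets $\uc{U}$ with finite $U\subseteq\uc{C}$, does not directly apply to it; the heavy SD-TN reduction is reserved for conditions 7.a and 7.b, as you anticipate.
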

\begin{proof}
$\optcostof{\initconf, \finalconfs} 
= \inf\{\costof{\comp} \,|\, \initconf \compto{\comp} \finalconfs\}
= \inf\{\costof{\comp'} \,|\, \aptpn(\initconf) \compto{\comp'}
\aptpn(\finalconfs)\}$,
by Theorem~\ref{thm:samecost-PTPN-APTPN}.
Thus it suffices to
consider the computations 
$\aptpn(\initconf) \compto{\comp'} \aptpn(\finalconfs)$
of the corresponding A-PTPN. In particular,
$\optcostof{\initconf, \finalconfs}
\in \nat$, provided that it exists.

To compute this value, it suffices to solve the cost-threshold problem 
for any given threshold $\threshold\in\nat$, i.e., to 
decide if $\aptpn(\initconf) \compto{\comp} \aptpn(\finalconfs)$ for
some $\comp$ with $\costof{\comp} \le \threshold$.

To show this, we first decide if 
$\aptpn(\initconf) \compto{\comp} \aptpn(\finalconfs)$ for
any $\comp$ (i.e., reachability). 
This can be reduced to the cost-threshold problem by setting all place and
transition costs to zero and solving the cost-threshold problem for
$\threshold=0$. If no, then no final state is reachable and we represent this
by $\inf\{\costof{\comp} \,|\, \initconf \compto{\comp} \finalconfs\}=\infty$.
If yes, then we can find the optimal cost $\threshold$ by solving the
cost-threshold problem for threshold $\threshold=0,1,2,3,\dots$ until the answer is yes.

Now we show how to solve the cost-threshold problem.
By Lemma~\ref{lem:ac-ptpn-comp}, this question is equivalent to
a reachability problem 
$\ac_\threshold(\aptpn(\initconf)) \compto{*} 
\bigcup_{0 \le y \le \threshold} 
\ac_y(\aptpn(\finalconfs))$
in the corresponding AC-PTPN.
This reachability problem is decidable by Lemma~\ref{lem:AC-PTPN-reach}.
\end{proof}

Now we prove the remaining Lemma~\ref{lem:AC-PTPN-reach}.
For this we need some auxiliary definitions.

\begin{defi}\label{def:lessfree}
We define a partial order $\lessfree$ on AC-PTPN configurations.
Given two such configurations 
$\beta = (q_\beta, (b_{-m} \dots b_{-1}, b_0, b_1 \dots b_n))$
and
$\gamma = (q_\gamma, (c_{-m'} \dots c_{-1}, c_0, c_1 \dots c_{n'}))$
we have
$\beta \lessfree \gamma$ iff
$q_\beta = q_\gamma$
and there exists a strictly monotone function
$f: \{-m,\dots,n\} \mapsto \{-m', \dots, n'\}$ where $f(0) = 0$ s.t.
\begin{enumerate}[(1)]
\item
$c_{f(i)} - b_i \in \msetsover{(\freeplaces\times \zeroto{\maxval +1})}$,
for $-m \le i \le n$.
\item
$c_j \in \msetsover{(\freeplaces\times \zeroto{\maxval +1})}$, if
$\not\exists i \in \{-m,\dots,n\}.\, f(i)=j$.
\end{enumerate}
(Intuitively, $\gamma$ is obtained from $\beta$ by adding tokens on
free-places, while the tokens on cost-places are unchanged.)
In this case, if 
$\alpha = (q_\beta, (c_{-m'}-b_{f^{-1}(-m')}, \dots, c_{-1} - b_{f^{-1}(-1)},
c_0 - b_0, c_1-b_{f^{-1}(1)}, \dots, c_{n'}-b_{f^{-1}(n')}))$ 
then we write $\alpha \lessplus \beta = \gamma$.
(Note that $\alpha$ is not uniquely defined, because it depends on the choice
of the function $f$. However one such $\alpha$ always exists and only contains 
tokens on $\freeplaces$.)

The partial order $\lesscost$ on configurations of AC-PTPN is defined
analogously with $\costplaces$ instead of $\freeplaces$, i.e., 
$\gamma$ is obtained from $\beta$ by adding tokens on cost-places.

The partial order $\lessall$ on configurations of AC-PTPN is defined
analogously with $\places$ instead of $\freeplaces$, i.e., 
$\gamma$ is obtained from $\beta$ by adding tokens on any places,
and $\lessall = \lesscost \cup \lessfree$.
\end{defi}

\begin{lem}\label{lem:lessfree}
The relations of Def.~\ref{def:lessfree} have the following properties.
\begin{enumerate}[\em(1)]
\item
$\lessfree$, $\lesscost$ and $\lessall$
are decidable quasi-orders on the set of all AC-PTPN configurations.
\item
For every AC-PTPN configuration $c$, $\lessfree$ 
is a well-quasi-order on 
the set $\uc{\{c\}} = \{s\,|\, c \lessfree s\}$ 
(i.e., here $\uparrow$ denotes the upward-closure w.r.t.
$\lessfree$).
\item
$\lessall$ is a well-quasi-order on
the set of all AC-PTPN configurations.
\end{enumerate}
\end{lem}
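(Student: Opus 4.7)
Part (1) is routine. Reflexivity in each case is witnessed by the identity map (strictly monotone, fixing $0$, with zero differences). For transitivity: given $\beta \lessfree \gamma$ via $f$ and $\gamma \lessfree \delta$ via $g$, the composition $g\circ f$ is strictly monotone with $(g\circ f)(0)=0$, and the witnessing differences add up; a sum of two multisets supported on $\freeplaces$ is supported on $\freeplaces$, and analogously for $\lesscost$ and $\lessall$. Decidability follows because for any two fixed configurations the strictly monotone functions $f$ fixing $0$ between the two finite index sets form a finite, effectively enumerable collection; for each candidate the two conditions of Def.~\ref{def:lessfree} amount to finitely many multiset containment checks, which are decidable.

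For part (3) I would identify each configuration with a four-tuple (control-state, high word, zero-level multiset, low word). The constraint $f(0)=0$ together with strict monotonicity forces $f$ to restrict to a strictly monotone map from negative indices to negative indices and from positive to positive. Hence $\beta \lessall \gamma$ factorises as equality of control-states, multiset containment $b_0 \le c_0$, and two independent Higman-style substring embeddings on the high and low words, where the letter alphabet $\msetsover{(\places\times\zeroto{\maxval +1})}$ is ordered by multiset containment. Dickson's lemma makes the letter alphabet a wqo, Higman's lemma lifts this to a wqo on words, the finite control-state set is a wqo under equality (by pigeonhole), and a finite product of wqo's is a wqo; part (3) follows.

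Part (2) is the delicate one. Observe first that $\lessfree$ is \emph{not} a wqo on all configurations (moving a single cost-place token through arbitrarily many positions yields pairwise $\lessfree$-incomparable elements), so the restriction to $\uc{\{c\}}$ is essential: it freezes the ``cost-place skeleton''. Fix $c=(q_c,\,c_{-m_c}\cdots c_{-1},\,c_0,\,c_1\cdots c_{n_c})$. For each $s\in\uc{\{c\}}$ pick a canonical witness $f_s$ of $c \lessfree s$ (say, the lexicographically smallest) and define
\[
\phi(s)\;=\;\bigl(q_s,\;(\beta_i)_{i=-m_c}^{n_c},\;(w_j)_{j=-m_c-1}^{n_c}\bigr),
\]
where $\beta_i = b_{f_s(i)} - c_i \in \msetsover{(\freeplaces\times\zeroto{\maxval +1})}$ is the excess at the $i$-th aligned position and $w_j$ is the word of positions of $s$ strictly between $f_s(j)$ and $f_s(j+1)$ (each letter of $w_j$ is a free-place multiset, by Def.~\ref{def:lessfree}). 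By Dickson's lemma each $\beta_i$ lives in a wqo, by Higman's lemma each $w_j$ does too, and the codomain of $\phi$ is therefore a finite product of wqo's.

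The main step is a transport lemma: if $\phi(s_1) \le \phi(s_2)$ componentwise, then $s_1 \lessfree s_2$. One assembles the required witness $g$ by sending each $c$-aligned position $f_{s_1}(i)$ of $s_1$ to $f_{s_2}(i)$ of $s_2$ and using the Higman embedding of $w_j^{(1)}$ into $w_j^{(2)}$ on each gap. Strict monotonicity is preserved because the gap embeddings live in disjoint intervals of the index set of $s_2$, and $g(0)=f_{s_2}(0)=0$. Matched aligned positions differ by $\beta_i^{(2)}-\beta_i^{(1)}$, a free-place multiset; gap positions differ by free-place letter differences from the Higman embedding; and unmatched positions of $s_2$ are free-place by construction. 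Any infinite sequence in $\uc{\{c\}}$ then produces an infinite sequence in the wqo codomain of $\phi$, which has a comparable pair, yielding $s_i \lessfree s_j$ for some $i<j$. The main obstacle is precisely this transport step; the non-uniqueness of $f_s$ is harmless, since we need only one direction of the implication starting from the canonical choice.
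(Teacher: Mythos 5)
Your proposal is correct, and for parts (1) and (3) it coincides with the paper's argument (finitely many candidate witnesses plus effective multiset checks; Dickson's Lemma on the letter alphabet, Higman's Lemma on the two words, and a finite product of wqos). The interesting difference is in part (2). The paper's proof writes each $s\in\uc{\{c\}}$ as $c$ plus a residual configuration $\alpha_s$ containing only free-place tokens, applies Dickson/Higman to the bare residuals to find $\alpha_i\lessfree\alpha_j$, and then asserts $c\lessplus\alpha_i\lessfree c\lessplus\alpha_j$. That last inference silently uses the alignment of $c$ inside $s_i$ and $s_j$, which the residual alone does not record: for instance, with $c$ consisting of a single cost token $[P]$ in the low word, the configurations $s_1$ with one low multiset $[P,F]$ and $s_2$ with low word $[F]\,[P]$ ($F$ a free-place token) have residuals $[F]$ and $[F]\,\emptyset$, which are $\lessfree$-comparable, while $s_1\not\lessfree s_2$. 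Your proof replaces the bare residual by the alignment-indexed data $\phi(s)$ (the per-position excesses at the $c$-aligned positions plus the gap words of free-place multisets between consecutive aligned positions), and your transport lemma shows componentwise domination of this data does yield $\lessfree$ on the full configurations, since the assembled witness maps aligned positions to aligned positions and embeds each gap into the corresponding gap. This is exactly the refinement needed to make the Higman-style argument airtight; it costs a slightly heavier bookkeeping (fixing a witness $f_s$ per configuration, which, as you note, is harmless since only one direction of the transport is needed), and it buys a fully rigorous version of the step that the paper's shorter proof glosses over. The side remark that $\lessfree$ is not a wqo on all of $S$, and hence that the restriction to $\uc{\{c\}}$ is essential, is also accurate.
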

\newpage

\proof\hfill
\begin{enumerate}[(1)]
\item
For the decidability we note that if 
\[\beta = (q_\beta, (b_{-m} \dots b_{-1}, b_0, b_1 \dots b_n))
\qquad\hbox{and}\qquad
\gamma = (q_\gamma, (c_{-m'} \dots c_{-1}, c_0, c_1 \dots c_{n'}))\]
then  only finitely many strictly monotone functions
$f: \{-m,\dots,n\} \mapsto \{-m', \dots, n'\}$ exist with $f(0)=0$,
which need to be explored. Since addition/subtraction/inclusion on finite 
multisets are computable, the result follows.

Moreover, $\lessfree$, $\lesscost$ and $\lessall$ are quasi-orders in
the set of all AC-PTPN configurations.
Reflexivity holds trivially, and transitivity can easily be shown by composing
the respective functions $f$.
\item
Now we show that $\lessfree$ is a well-quasi-order on 
the set $\uc{\{c\}} = \{s\,|\, c \lessfree s\}$ for every AC-PTPN
configuration $c$.
Consider an infinite sequence $\beta_0, \beta_1, \dots$ of AC-PTPN
configurations where $\beta_i \in \uc{\{c\}}$ for every $i$.
It follows that there exists an infinite sequence of AC-PTPN
configurations $\alpha_0, \alpha_1, \dots$ s.t. $\alpha_i$ only contains
tokens on $\freeplaces$ and $\beta_i = c \lessplus \alpha_i$ for all $i$.
Since $\freeplaces\times \zeroto{\maxval +1}$ is finite, multiset-inclusion
is a wqo on $\msetsover{(\freeplaces\times \zeroto{\maxval +1})}$, by 
Dickson's Lemma \cite{Dickson:lemma}.
Any AC-PTPN configuration $\alpha_i$ consists of 4 parts: A control-state (out of a 
finite domain), a finite sequence over
$\msetsover{(\freeplaces\times \zeroto{\maxval +1})}$, 
an element of $\msetsover{(\freeplaces\times \zeroto{\maxval +1})}$,
and another finite sequence over $\msetsover{(\freeplaces\times \zeroto{\maxval +1})}$.
Thus, by applying Higman's Lemma \cite{Higman:divisibility} to each part,
we obtain that there must exist indices $i < j$ s.t.
$\alpha_i \lessfree \alpha_j$. 
Therefore $\beta_i = c \lessplus \alpha_i \lessfree c \lessplus \alpha_j
= \beta_j$, and thus $\lessfree$ is a wqo on $\uc{\{c\}}$.
\item
Now we show that $\lessall$ is a well-quasi-order on the set of all AC-PTPN configurations.
Consider an infinite sequence $\beta_0, \beta_1, \dots$ of AC-PTPN configurations.
Since $\places\times \zeroto{\maxval +1}$ is finite, multiset-inclusion
is a wqo on $\msetsover{(\places\times \zeroto{\maxval +1})}$, by 
Dickson's Lemma \cite{Dickson:lemma}.
Any AC-PTPN configuration consists of 4 parts: A control-state (out of a 
finite domain), a finite sequence over
$\msetsover{(\places\times \zeroto{\maxval +1})}$, 
an element of $\msetsover{(\places\times \zeroto{\maxval +1})}$,
and another finite sequence over $\msetsover{(\places\times \zeroto{\maxval +1})}$.
Thus, by applying Higman's Lemma \cite{Higman:divisibility} to each part,
we obtain that there must exist indices $i < j$ s.t.
$\beta_i \lessall \beta_j$. Thus $\lessall$ is a wqo. 
\qed
\end{enumerate}

\noindent Now we prove the required decidability of 
the AC-PTPN reachability/coverability problem. 

\begin{lem}\label{lem:AC-PTPN-reach}
Given an instance of the PTPN cost problem and a threshold 
$\threshold\in\nat$, the reachability question
$\ac_\threshold(\aptpn(\initconf)) \compto{*} 
\bigcup_{0 \le y \le \threshold} 
\ac_y(\aptpn(\finalconfs))$ in the corresponding AC-PTPN is decidable.
\end{lem}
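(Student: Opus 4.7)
The plan is to instantiate the abstract phase construction of Theorem~\ref{thm:phase} for AC-PTPN and combine it with the decidability of reachability for Petri nets with one inhibitor arc~\cite{Reinhardt:inhibitor,Bonnet-mfcs11}. Take $S$ to be the set of all AC-PTPN configurations, ${\it init} = \ac_\threshold(\aptpn(\initconf))$, and $F = \bigcup_{0 \le y \le \threshold} \ac_y(\aptpn(\finalconfs))$; since $F$ depends only on the control state, it is upward-closed under $\lessall$. As the quasi-order I use $\lessfree$ from Def.~\ref{def:lessfree}: by Lemma~\ref{lem:lessfree}(2) it is a decidable quasi-order that becomes a well-quasi-order on the upward-closure of any finite set. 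The finite set $C$ is chosen to cover the configurations relevant to computations of cost at most $\threshold$; a natural candidate is the set of configurations whose total number of cost-place tokens is at most $\threshold$ and which carry no free-place tokens, a finite set because such tokens live in the finite domain $\costplaces \times \{0, \dots, \maxval+1\}$.

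The split of transitions is into $\rightarrow_A$ (discrete transitions and the small-delay type 1 steps, both monotone under $\lessfree$ on all of $S$) and $\rightarrow_B$ (the genuinely restructuring timed steps of types 2, 3, 4, which either shift positions or deduct cost proportional to the current cost-place content and are therefore only compatible with $\lessfree$-monotonicity on the restricted region $\uc{C}$). Conditions 1--5 of Def.~\ref{def:phase} then follow from the above setup, Lemma~\ref{lem:lessfree}, and the fact that each B-transition type is a local reshuffle of the configuration sequence whose inverse image of upward-closed sets can be computed directly. Conditions 6.a and 6.b reduce to coverability questions in $\rightarrow_A$ targeting $F$, which are amenable to Petri-net-style analysis because $\rightarrow_A$ is monotone and its action on a finite positional vocabulary is effectively describable.

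The main obstacle is condition 7.b, which demands an algorithm to decide, given finite $U,X \subseteq \uc{C}$, whether some configuration \emph{outside} $\uc{X}$ can reach $\uc{U}$ via a sequence of A-transitions. This ``coverability with an avoid constraint'' is strictly harder than plain Petri-net coverability and is the reason the paper develops auxiliary machinery beyond the abstract framework. The plan is to reduce condition 7.b to a reachability problem for the class of simultaneous-disjoint-transfer nets (to be performed in Section~\ref{sec:oracle}), and then to invoke the reduction of SD-TN reachability to reachability in Petri nets with one inhibitor arc (Section~\ref{sec:sdtn}), the latter being decidable by~\cite{Reinhardt:inhibitor,Bonnet-mfcs11}. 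Once all seven conditions of Def.~\ref{def:phase} are verified, an appeal to Theorem~\ref{thm:phase} yields the decidability of the AC-PTPN reachability question.
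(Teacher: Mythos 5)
Your overall plan is the right one and matches the paper's strategy: instantiate
Theorem~\ref{thm:phase} with $\lessfree$, with $C$ the finite set of small
configurations carrying only cost-place tokens, and discharge conditions 7.a/7.b by a
reduction through SD-TN and Petri nets with one inhibitor arc. However, there is a
genuine error in the way you split the transitions into $\rightarrow_A$ and
$\rightarrow_B$.

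You place the abstract timed steps of \emph{types $2$, $3$, and $4$} into
$\rightarrow_B$, reasoning that they ``restructure'' the configuration. The correct
criterion is not structural restructuring but whether the step \emph{deducts cost}.
Type $1$ and type $2$ steps are free (cost $0$): they model infinitesimally small
delays, and in AC-PTPN they leave the remaining-cost component $y$ unchanged. They are
therefore enabled on \emph{all} of $S$, including configurations with arbitrarily
many cost-place tokens. Only type $3$ and type $4$ steps deduct $z=\sum_{i,\place}
|b_i(\place)|\cdot\costof{\place}$ from $y$, and since $y\le\threshold$, these are
enabled only when the number of cost-place tokens is at most $\threshold$, i.e., only
on $\uc{C}$. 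So the correct split is: $\rightarrow_A$ is discrete plus types $1$
\emph{and} $2$; $\rightarrow_B$ is types $3$ and $4$ only.

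With your split, the framework of Def.~\ref{def:phase} breaks: since type $2$ steps
are defined on configurations outside $\uc{C}$, either you restrict them to $\uc{C}$
(violating condition 3, $\rightarrow=\rightarrow_A\cup\rightarrow_B$, so the algorithm
would miss legitimate computations that pass through large configurations and fire a
type $2$ step there) or you keep them unrestricted (violating condition 5.a,
$\rightarrow_B\subseteq\uc{C}\times\uc{C}$, so Theorem~\ref{thm:phase} does not
apply). Either way the argument does not go through. Also note that type $1$ is
itself a ``restructuring'' step (it moves $b_0$ into the low-fraction sequence), so
the restructuring criterion you used would not actually distinguish type $1$ from
type $2$ anyway. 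Fix the split and the rest of your sketch lines up with the paper's
proof, which then proceeds to verify conditions 6.a/6.b via $\lessall$-backward
coverability and discharges 7.a/7.b through Lemma~\ref{lem:7ab}.
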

\begin{proof}
We instantiate a structure 
$(S,C,\le,\rightarrow,\rightarrow_A,\rightarrow_B,{\it init},F)$,
show that it satisfies the requirements of Def.~\ref{def:phase},
and then apply Theorem~\ref{thm:phase}.

\noindent
Let $S$ be the set of all AC-PTPN configurations of the form
$((q,y), b_{-m} \dots b_{-1}, b_0, b_1 \dots b_n)$ where $y \le \threshold$.

\noindent
Let $C$ be the set of all AC-PTPN configurations of the form
$((q,y), b_{-m'} \dots b_{-1}, b_0, b_1 \dots b_{n'})$ where $y \le \threshold$,
and $b_i \in \msetsover{(\costplaces\times \zeroto{\maxval +1})}$
and $\sum_{j=-m'}^{n'} |b_j| \le \threshold$.
In other words, the configurations in $C$ only contain tokens on cost-places
and the size of these configurations is limited by $\threshold$.
$C$ is finite, because $\costplaces$, $\maxval$ and $\threshold$ are finite.

Let $\le := \lessfree$ of Def.~\ref{def:lessfree}, i.e., in this proof
$\uparrow$ denotes the upward-closure w.r.t.
$\lessfree$.
By Lemma~\ref{lem:lessfree},
$\le$ is decidable, $\le$ is a quasi-order on $S$, and 
$\le$ is a well-quasi-order on 
$\uc{\{c\}}$ for every AC-PTPN configuration $c$.
Therefore $\lessfree$ is a well-quasi-order on $\uc{C}$, because $C$
is finite.

Let ${\it init} := \ac_\threshold(\aptpn(\initconf))$
and $F := \bigcup_{0 \le y \le \threshold} 
\ac_y(\aptpn(\finalconfs))$. In particular, $F$ is upward-closed w.r.t.
$\lessfree$ and w.r.t. $\lessall$.
Thus conditions 1 and 2 of Def.~\ref{def:phase} are satisfied.

Let $\rightarrow_A$ be the transition relation induced by the discrete 
AC-PTPN transitions and the abstract timed AC-PTPN transitions of types 1 and 2.
These are monotone w.r.t. $\lessfree$.
Thus condition 4 of Def.~\ref{def:phase} is satisfied.

Let $\rightarrow_B$ be the transition relation induced by
abstract timed AC-PTPN transitions of types 3 and 4.
These are monotone w.r.t. $\lessfree$, but only enabled
in $\uc{C}$, because otherwise the cost would be too high.
(Remember that every AC-PTPN configuration stores the remaining allowed cost,
which must be non-negative.)
Moreover, timed AC-PTPN transitions of types 3 and 4 do not change the
number or type of the tokens in a configuration, and thus 
$\rightarrow_B \subseteq \uc{C} \times \uc{C}$.
So we have condition 5.a of Def.~\ref{def:phase}.
Condition 5.b is satisfied, because there are only finitely many token ages
$\le \maxval$ and the number and type of tokens is unchanged.

Condition 3 is satisfied, because 
$\rightarrow = \rightarrow_A \cup \rightarrow_B$ by the definition of AC-PTPN.

Now we show the conditions 6.a and 6.b.
$F$ is upward-closed w.r.t. $\lessall$ and $\rightarrow_A$
is monotone w.r.t. $\lessall$ (not only w.r.t $\lessfree$).
By Lemma~\ref{lem:lessfree}, $\lessall$ is a decidable wqo on the set of 
AC-PTPN configurations. 
Therefore, ${\it Pre}^*_{\rightarrow_A}(F)$ is upward-closed w.r.t. $\lessall$
and effectively constructible (i.e., its finitely many minimal elements 
w.r.t. $\lessall$), because the sequence
${\it Pre}^{\le i}_{\rightarrow_A}(F)$ for $i=1,2,\dots$ converges.
Let $K$ be this finite set of minimal (w.r.t. $\lessall$) 
elements of ${\it Pre}^*_{\rightarrow_A}(F)$.
We obtain condition 6.a., because $K$ is finite and $\lessall$ is decidable.
Moreover, ${\it Pre}^{*}_{\rightarrow_A}(F)$ is also 
upward-closed w.r.t. $\lessfree$. The set $C$ is a finite set of AC-PTPN 
configurations and $\uc{C}$ is the upward-closure of $C$ w.r.t. $\lessfree$.
Therefore ${\it Pre}^{*}_{\rightarrow_A}(F) \cap \uc{C}$ is upward closed
w.r.t. $\lessfree$. Now we show how to construct the finitely many minimal
(w.r.t. $\lessfree$) elements of 
${\it Pre}^{*}_{\rightarrow_A}(F) \cap \uc{C}$.
For every $k \in K$ let 
$\alpha(k) := \{k'\ |\ k' \in \uc{C}, k \lesscost k'\}$, i.e.,
those configurations which have the right control-state for $\uc{C}$,
but whose number of tokens on cost-places is bounded by $\threshold$,
and who are larger (w.r.t. $\lesscost$) than some base element in $K$.
In particular, $\alpha(k)$ is finite and constructible, because 
$\threshold$ is finite, and $\lesscost$ and $\lessfree$ are decidable.
Note that $\alpha(k)$ can be empty (if $k$ has the wrong control-state or too
many tokens on cost-places).
Let $K' := \bigcup_{k \in K} \alpha(k)$, which is finite and constructible.
We show that ${\it Pre}^{*}_{\rightarrow_A}(F) \cap \uc{C} = \uc{K'}$.
Consider the first inclusion. If $x \in \uc{K'}$ then 
$\exists k' \in K', k \in K.\, k \lesscost k' \lessfree x, k' \in \uc{C}$.
Therefore $k \lessall x$ and $x \in {\it Pre}^{*}_{\rightarrow_A}(F)$.
Also $k' \in \uc{C}$ and $k' \lessfree x$ and thus $x \in \uc{C}$.
Now we consider the other inclusion.
If $x \in {\it Pre}^{*}_{\rightarrow_A}(F) \cap \uc{C}$ then
there is a $k \in K$ s.t. $k \lessall x$. Moreover, 
the number of tokens on cost-places in $x$ is bounded by $\threshold$
and the control-state is of the form required by $\uc{C}$, because
$x \in \uc{C}$. Since, $k \lessall x$, the same holds for $k$
and thus there is some $k' \in \alpha(k)$ s.t.
$k' \lessfree x$. Therefore $x \in \uc{K'}$.
To summarize, $K'$ is the finite set of minimal (w.r.t. $\lessfree$)
elements of ${\it Pre}^{*}_{\rightarrow_A}(F) \cap \uc{C}$ and thus condition
6.b holds.

Conditions 7.a and 7.b are satisfied by Lemma~\ref{lem:7ab},
(which will be proven in Section~\ref{sec:oracle}).

So we have that our instantiation satisfies the abstract phase
construction requirements of Def.~\ref{def:phase}.
Therefore, Theorem~\ref{thm:phase} yields the decidability of the 
reachability problem ${\it init} \rightarrow^* F$, i.e.,
$\ac_\threshold(\aptpn(\initconf)) \compto{*} 
\bigcup_{0 \le y \le \threshold} 
\ac_y(\aptpn(\finalconfs))$.
\end{proof}

The remaining Lemma~\ref{lem:7ab} will be shown in Section~\ref{sec:oracle}.
Its proof uses the simultaneous-disjoint transfer
nets of Section~\ref{sec:sdtn}. 

\newpage
\section{Simultaneous-Disjoint-Transfer Nets}\label{sec:sdtn}

\noindent
\begin{defi}
Simultaneous-disjoint-transfer nets (SD-TN) \cite{AMM:LMCS2007} 
are a subclass of transfer nets \cite{Ciardo:1994}.
SD-TN subsume ordinary Petri nets.
A SD-TN $N$ is described by a tuple 
$\sdtntuple$.
\begin{enumerate}[$\bullet$]
\item $\states$ is a finite set of control-states
\item $\places$ is a finite set of places
\item $\transitions$ is a finite set of ordinary transitions.
Every transition $\transition \in \transitions$ has the form
$\transition = (q_1,q_2,I,O)$ where $q_1,q_2 \in \states$ and
$I,O \in \msetsover{P}$.
\item ${\it Trans}$ describes the set of simultaneous-disjoint transfer
transitions. Although these transitions can have different control-states 
and input/output places, they all share the {\em same transfer}
(thus the `simultaneous'). 
% in the name of the model
% ).
The transfer is described by the relation 
$\st \subseteq \places \times \places$, which is global for the SD-TN $N$.
Intuitively, for $(p,p') \in \st$, in a transfer every token in $p$ is moved
to $p'$.
The transfer transitions in ${\it Trans}$ have the form
$(q_1,q_2,I, O, \st)$ where 
$q_1,q_2 \in \states$ are the source and target control-state, 
$I,O \in \msetsover{\places}$ are like in a normal Petri net transition, 
and $\st \subseteq \places \times \places$ is the same global transfer
relation for all these transitions.
For every transfer transition $(q_1,q_2,I, O, \st)$ the 
following `disjointness' restrictions must be satisfied:
\begin{enumerate}[-]
\item Let $(sr,tg), (sr',tg') \in \st$. Then either $(sr,tg)=(sr',tg')$
or $|\{sr,sr',tg,tg'\}|=4$. Furthermore, $\{sr,tg\} \cap (I \cup O) = \emptyset$.
\end{enumerate}
\end{enumerate}
Let $(q,\marking) \in \states\times\msetsover{\places}$ be a configuration of $N$. 
The firing of normal transitions
$\transition \in \transitions$ is defined just as for ordinary Petri nets.
A transition $\transition = (q_1,q_2,I,O) \in \transitions$ is enabled at
configuration $(q,\marking)$ iff
$q=q_1$ and $\marking \ge I$.
Firing $\transition$ yields the new
configuration $(q_2,\marking')$ where
$\marking' = \marking - I + O$.

A transfer transition $(q_1,q_2,I, O, \st) \in {\it Trans}$ is 
enabled at $(q,\marking)$ iff
$q=q_1$ and $\marking \ge I$. Firing it yields the new
configuration $(q_2,\marking')$ where 
\[
\begin{array}{ll}
\marking'(p) = \marking(p)-I(p)+O(p)  & \mbox{if $p \in I \cup O$} \\
\marking'(p) = 0       & \mbox{if $\exists p'.\, (p,p') \in \st$}\\
\marking'(p) = \marking(p)+\marking(p')  & \mbox{if $(p',p) \in \st$} \\
\marking'(p) = \marking(p)    & \mbox{otherwise}
\end{array}
\]
The restrictions above ensure that these cases are disjoint.
Note that after firing a transfer transition all source places of transfers are empty,
since, by the restrictions defined above, a place that is a source of a
transfer can neither be the target of another transfer, nor receive any tokens
from the output of this transfer transition.
% In the following, sometimes  we use {\it transfer} transition to mean  
% simultaneous-disjoint transfer transitions.
\end{defi}

\begin{thm}\label{thm:SD-TN-reachability}
The reachability problem for SD-TN is decidable, and has the same complexity
as the reachability problem for Petri nets with one inhibitor arc.
\end{thm}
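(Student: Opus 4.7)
The plan is to establish the theorem by exhibiting polynomial-time reductions in both directions between reachability in SD-TN and reachability in Petri nets with one inhibitor arc. Decidability then follows immediately from \cite{Reinhardt:inhibitor,Bonnet-mfcs11}, and complexity-equivalence from the polynomial bounds.

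For the easy direction, from Petri nets with one inhibitor arc to SD-TN, I would translate the given net $\iantuple$ into an SD-TN by keeping all ordinary transitions unchanged and replacing the single inhibited transition $\inhibitortransition$ by a transfer transition with the same input and output multisets, using the global transfer relation $\st = \{(\inhibitorplace, p_w)\}$, where $p_w$ is a fresh \emph{witness} place. The target marking for the reachability question is extended to require $p_w$ empty. Since no transition in the constructed SD-TN consumes from $p_w$, its token count can only grow, so reaching a marking with $p_w=0$ forces $\inhibitorplace$ to have been empty on every firing of the transfer transition, yielding a faithful simulation. The disjointness restriction $\{sr,tg\} \cap (I \cup O) = \emptyset$ is easily met: the inhibitor semantics forces $I(\inhibitorplace)=0$ on the inhibited transition, and any output to $\inhibitorplace$ can be rerouted via an auxiliary buffer place at polynomial overhead.

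For the harder direction, from SD-TN to Petri nets with one inhibitor arc, the key observation is that all transfer transitions in a fixed SD-TN share a single global relation $\st$. I would introduce one \emph{counter} place $p_{\it tot}$ and maintain the invariant $p_{\it tot} = \sum_{(sr,tg)\in\st} |sr|$ globally, by syntactically rewriting every ordinary transition so that it adjusts $p_{\it tot}$ whenever it adds or removes tokens on any source place of $\st$. The single inhibitor arc is then attached to $p_{\it tot}$. I also add a \emph{running} token that every ordinary transition must consume and restore, so that the simulation of a transfer can atomically lock out all other activity. Each transfer transition $(q_1,q_2,I,O,\st)$ is compiled into a three-phase gadget: (i) consume $I$ and the running token, and enter a fresh intermediate control-state; (ii) in a loop, nondeterministically choose a pair $(sr,tg)\in\st$ and move one token from $sr$ to $tg$ while decrementing $p_{\it tot}$; (iii) fire an exit transition whose only inhibitor-arc test enforces $p_{\it tot}=0$, produces $O$, and restores the running token and control-state $q_2$. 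Disjointness of the pairs in $\st$ ensures the loop transitions do not conflict, and the loop is guaranteed to be able to empty all sources because every token on a source is counted in $p_{\it tot}$.

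The main obstacle I anticipate is exactly the tension between the single available inhibitor arc and the arbitrarily many pairwise disjoint source places whose emptiness must be detected simultaneously at the end of a transfer; the counter $p_{\it tot}$ is introduced precisely to reduce this to one emptiness test, but its correctness hinges on the running-token lock preserving the summation invariant throughout the simulated transfer. Once the two reductions are verified to be polynomial in size and sound with respect to reachability of the specified initial and target markings, both decidability and complexity-equivalence with inhibitor-arc reachability follow.
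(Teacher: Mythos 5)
Your proposal is correct and takes essentially the same route as the paper: for the hard direction the paper likewise introduces a counter place holding $\sum_{sr}|sr|$ over transfer sources (maintained by rewriting every ordinary transition), simulates a transfer by a locked loop moving tokens from each $sr$ to its $tg$ one at a time while decrementing the counter, and exits via the single inhibitor-arc transition testing the counter's emptiness; for the easy direction it also replaces the inhibited transition by a transfer with $\st=\{(\inhibitorplace,\place_x)\}$ into a fresh place required to be empty in the final marking. The only cosmetic differences are that the paper uses the fresh intermediate control-state itself as the lock (no separate running token) and emits $O$ on entering rather than leaving the gadget, neither of which affects correctness.
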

\begin{proof}
We show that the reachability problem for SD-TN is polynomial-time
reducible to the reachability problem for Petri nets with 
one inhibitor arc (see Def.~\ref{def:ian}), and vice-versa.

For the first direction consider an SD-TN $N = \sdtntuple$,
with 
initial configuration $(q_0,\marking_0)$ and
final configuration $(q_f, \marking_f)$.
We construct a Petri net with one inhibitor arc
$N' = {\tuple{\states',\places',\transitions',\inhibitor}}$
with initial configuration $(q_0',\marking_0')$ and
final configuration $(q_f', \marking_f')$ s.t.
$(q_0,\marking_0) \trans{*} (q_f, \marking_f)$ in $N$ iff
$(q_0',\marking_0') \trans{*} (q_f', \marking_f')$ in $N'$.

Let $S := \{sr\ |\ (sr,tg) \in \st\}$ be the set of source-places of
transfers. We add a new place $\inhibitorplace$ to $\places'$ and modify the transitions
to obtain the invariant that for all reachable configurations 
$(q,\marking)$ in $N'$ we have 
$\marking(\inhibitorplace) = \sum_{sr \in S} \marking(sr)$.
Thus for every transition $\transition = (q_1,q_2,I,O) \in \transitions$ in $N$
we have a transition $\transition' = (q_1,q_2,I',O') \in \transitions'$ in $N'$
where $I'(\inhibitorplace) = \sum_{sr \in S} I(sr)$
and $O'(\inhibitorplace) = \sum_{sr \in S} O(sr)$.
For all other places $\place$ we have $I'(\place) = I(\place)$
and $O'(\place)=O(\place)$. This suffices to ensure the invariant, because
no place in $S$ is the target of a transfer.

To simulate a transfer transition $(q_1,q_2,I, O, \st) \in {\it Trans}$,
we add another control-state $q^i$ to $\states'$,
another place $p(q_2)$ to $\places'$ and a transition
$(q_1,q^i,I',O'+\{p(q_2)\})$ to $\transitions'$,  
where $I',O'$ are derived from $I,O$ as above.
Moreover, for every pair $(sr, tg) \in \st$ 
we add a transition $(q^i,q^i,\{sr,\inhibitorplace\}, \{tg\})$.
This allows to simulate the transfer by moving the tokens from the source to
the target step-by-step. The transfer is complete when all source places are
empty, i.e., when $\inhibitorplace$ is empty.
Finally, we add a transition $\inhibitortransition = (q^i, q_2, \{p(q_2)\}, \{\})$
and let the inhibitor arc be $\inhibitor$. I.e., we can only return to $q_2$
when $\inhibitorplace$ is empty and the transfer is complete. We return to
the correct control-state $q_2$ for this transition, because the last step is
only enabled if there is a token on $p(q_2)$.

So we have $\states' = \states \cup \{q^i\}$,
$\places' = \places \cup \{\inhibitorplace\} \cup \{p(q)\,|\, q \in \states\}$
and $\transitions'$ is derived from $\transitions$ as described above.
We let $q_0' = q_0$, $q_f' = q_f$ and 
$\marking_0'(\inhibitorplace) = \sum_{p \in S} \marking_0(p)$,
$\marking_f'(\inhibitorplace) = \sum_{p \in S} \marking_f(p)$
and $\marking_0'(p) = \marking_0(p)$ and $\marking_f'(p) = \marking_f(p)$
for all places $p \in \places$ and $\marking_0'(p(q)) = \marking_f'(p(q)) = 0$.
Note that, by definition of SD-TN, source-places and target-places 
of transfers are disjoint. Therefore, the condition on the inhibitor arc enforces 
that all transfers are done completely (i.e., until $\inhibitorplace$ is 
empty, and thus all places in $S$ are empty) and therefore the simulation is faithful.
Thus we obtain $(q_0,\marking_0) \trans{*} (q_f, \marking_f)$ in $N$ iff
$(q_0',\marking_0') \trans{*} (q_f', \marking_f')$ in $N'$, as required.
Since the reachability problem for Petri nets with one inhibitor arc
is decidable \cite{Reinhardt:inhibitor,Bonnet-mfcs11}, we obtain the decidability of the 
reachability problem for SD-TN.

Now we show the reverse reduction.
Consider a Petri net with one inhibitor arc
$N = \iantuple$
with initial configuration $(q_0,\marking_0)$ and
final configuration $(q_f, \marking_f)$.
We construct an SD-TN
$N' = {\tuple{\states',\places',\transitions',\transfer}}$
with initial configuration $(q_0',\marking_0')$ and
final configuration $(q_f', \marking_f')$ s.t.
$(q_0,\marking_0) \trans{*} (q_f, \marking_f)$ iff
$(q_0',\marking_0') \trans{*} (q_f', \marking_f')$.

Let $\states' = \states$, $\places' = \places \cup \{\place_x\}$
where $\place_x$ is a new place, and
$\transitions' = \transitions - \{\inhibitortransition\}$.
Let $\inhibitortransition = (q_1,q_2,I,O)$.
In $N'$, instead of $\inhibitortransition$, we have the 
${\it Trans} = \{(q_1,q_2,I, O, \st)\}$ where 
$\st = \{(\inhibitorplace, \place_x)\}$.
Unlike in $N$, in $N'$ the inhibited transition can fire even if
$\inhibitorplace$ is nonempty. However, in this case the contents of
$\inhibitorplace$ are moved to $\place_x$ where they stay forever.
I.e., we can detect an unfaithful simulation by the fact that $\place_x$ 
is nonempty.
Let $q_0' = q_0$, $q_f' = q_f$, $\marking_0'(p_x) = 0$,
$\marking_f(p_x) = 0$ and $\marking_0'(p) = \marking_0(p)$
and $\marking_f'(p) = \marking_f(p)$ for all other places $p$.
Thus we get $(q_0,\marking_0) \trans{*} (q_f, \marking_f)$ in $N$ iff
$(q_0',\marking_0') \trans{*} (q_f', \marking_f')$ in $N'$, as required.
Therefore, the reachability problem for SD-TN is polynomially equally hard as the
reachability problem for Petri nets with one inhibitor arc. 
\end{proof}

The following corollary shows decidability of a slightly generalized
reachability problem for SD-TN, which we will need in the proofs of the
following sections.

\begin{cor}\label{cor:SD-TN-reachability}
Let $N$ be an SD-TN and $F$ a set of SD-TN configurations, which is defined by a boolean 
combination of finitely many constraints of the following forms.
\begin{enumerate}[\em(1)]
\item control-state = $q$ (for some state $q \in Q$)
\item exactly $k$ tokens on place $p$ (where $k \in \nat$)
\item at least $k$ tokens on place $p$ (where $k \in \nat$)
\end{enumerate}
Then the generalized reachability problem 
$(q_0,\marking_0) \strans F$ is decidable.
\end{cor}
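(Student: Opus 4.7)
The plan is to reduce generalized reachability to $F$ to a finite disjunction of ordinary single-configuration reachability queries in a mildly extended SD-TN, to which Theorem~\ref{thm:SD-TN-reachability} then applies.

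First I convert the boolean combination describing $F$ to disjunctive normal form. Each negated atomic constraint can be rewritten as a disjunction of positive atomic constraints of the three allowed forms; for instance $\lnot(\marking(p)\ge k)$ becomes $\bigvee_{j<k}(\marking(p)=j)$, and $\lnot(\marking(p)=k)$ becomes $\bigvee_{j<k}(\marking(p)=j)\vee(\marking(p)\ge k+1)$. So $F=\bigcup_{i=1}^{r}F_i$, with each $F_i$ defined by a conjunction of positive atomic constraints, and $(q_0,\marking_0)\strans F$ iff $(q_0,\marking_0)\strans F_i$ for some $i$. It therefore suffices to decide reachability to each $F_i$ separately.

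For a fixed $F_i$, I consolidate the constraints place by place and on the control-state. Inconsistent conjunctions (two distinct target states, or contradictory counts on a single place) yield $F_i=\emptyset$ and a trivial ``no''. Otherwise let $q_f$ be the forced target control-state (if none is forced, enumerate $q_f\in\states$ and take the disjunction), let $E\subseteq\places$ collect the places with a consolidated ``exactly $k_p$'' constraint, let $A\subseteq\places\setminus E$ be those with only an ``at least $k_p$'' constraint, and let $U$ be the remaining unconstrained places. I then extend the SD-TN by a fresh control-state $q_{\mathrm{fin}}$, an ordinary transition $(q_f,q_{\mathrm{fin}},\emptyset,\emptyset)$, and for each $p\in A\cup U$ a drain self-loop $(q_{\mathrm{fin}},q_{\mathrm{fin}},\{p\},\emptyset)$. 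All added transitions are ordinary Petri-net transitions with empty transfer relation, so they fit within the SD-TN syntax and vacuously satisfy the disjointness requirements on ${\it Trans}$.

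Finally I invoke standard reachability for the single target configuration $(q_{\mathrm{fin}},\marking_f)$, where $\marking_f(p)=k_p$ for $p\in E\cup A$ and $\marking_f(p)=0$ for $p\in U$. The drain self-loops allow surplus tokens on $A$-places to be reduced to exactly $k_p$ and tokens on $U$-places to be removed entirely, and no other transition leaves $q_{\mathrm{fin}}$; hence reachability of $(q_{\mathrm{fin}},\marking_f)$ in the extended net coincides with $(q_0,\marking_0)\strans F_i$ in the original. Theorem~\ref{thm:SD-TN-reachability} then yields decidability. There is no serious obstacle: the only step needing mild care is the DNF step, where one must avoid introducing ``at most'' constraints that are not among the three allowed atomic forms; I sidestep this by expanding ``at most $k$'' as the finite disjunction $\bigvee_{j\le k}(\marking(p)=j)$ rather than trying to express it directly.
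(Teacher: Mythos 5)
Your proof is correct and follows essentially the same route as the paper: convert to DNF, handle each clause by adding a fresh final control-state with token-consuming/draining transitions, and reduce to a single SD-TN reachability query decided by Theorem~\ref{thm:SD-TN-reachability}. The only differences are cosmetic (you keep the required counts $k_p$ in the target marking rather than consuming them and aiming at the empty marking, and you spell out the elimination of negated atomic constraints, which the paper leaves implicit).
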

\begin{proof}
First, the boolean formula can be transformed into disjunctive normal form
and solved separately for each clause. Every clause is a conjunction of
constraints of the types above. This problem can then be reduced to the basic
reachability problem for a modified SD-TN $N'$ and then solved by 
Theorem~\ref{thm:SD-TN-reachability}.
One introduces a new final control-state $q'$ and adds a construction that
allows the transition from $F$ to $(q',\{\})$ if and only if the constraints
are satisfied. For type (2) one adds a transition that consumes exactly $k$
tokens from place $p$. For type (3) one adds a transition that consumes exactly $k$
tokens from place $p$, followed by a loop which can consume arbitrarily many
tokens from place $p$.
We obtain $(q_0,\marking_0) \strans F$ in $N$ iff 
$(q_0,\marking_0) \strans (q',\{\})$ in $N'$. 
Decidability follows from Theorem~\ref{thm:SD-TN-reachability}.
\end{proof}

\section{Encoding AC-PTPN Computations by SD-TN}\label{sec:oracle}
\noindent
In this section, we fix
an AC-PTPN $\tpn$, described by the tuple $\ptpntuple$
and the cost-threshold $\threshold$.
We use the partial order $\le := \lessfree$ on AC-PTPN
configurations; see Def.~\ref{def:lessfree}.
We describe an encoding of the configurations of $\tpn$ as words over 
an alphabet $\Sigma$.
We define 
$\Sigma:=
\left(\places\times\zeroto{\maxval +1}\right)
\cup
\left(\states\times\setcomp{y}{0\leq y\leq v}\right)
\cup
\set{\#,\$}
$, i.e., the members of
$\Sigma$ are elements of
$\places\times\zeroto{\maxval +1}$, the control-states of $\tpn$,
and the two ``separator'' symbols $\#$ and $\$$.
For a multiset $\mset=[a_1,\ldots,a_n]\in\msetsover{\left(\places\times\zeroto{\maxval +1}\right)}$,
we define the encoding $\wordencodingof{\mset}$ to be the word
$a_1\cdots a_n\in\left(\places\times\zeroto{\maxval +1}\right)^*$.
For a word $\word=\mset_1\cdots\mset_n\in
\wordsover{\left(\msetsover{\left(\places\times\zeroto{\maxval +1}\right)}\right)}$, 
we define
$\wordencodingof{\word}:=\wordencodingof{\mset_n}\#\cdots\#\wordencodingof{\mset_1}$,
i.e., it consists of the reverse concatenation of the
encodings of the individual multisets, separated by $\#$.
For a  marking $\marking=\tuple{\word_1,\mset,\word_2}$,
we define
$\wordencodingof{\marking}:=\wordencodingof{\word_2}\$\wordencodingof{\mset}\$\wordencodingof{\word_1}$.
In other words, we concatenate the encoding of the components in reverse order:
first $\word_2$ then $\mset$ and finally $\word_1$, separated by $\$$. 
Finally for a configuration $\conf=\tuple{\tuple{\state,y},\marking}$,
we define $\wordencodingof{\conf}:=\tuple{\state,y} \wordencodingof{\marking}$, 
i.e., we append the pair $(\state,y)$ in front of the encoding of $\marking$.
The function $\wordencodingof{}$ is extended from configurations to sets of
configurations in the standard way.
We call a finite automaton $\automaton$ over $\Sigma$ 
a {\em configuration-automaton} 
if whenever $w \in L(\automaton)$ then $w = \wordencoding(\conf)$ for
some AC-PTPN configuration $\conf$.

\begin{lem}\label{lem:build_aut_uc_finite}
Given any finite set $\confs$ of AC-PTPN configurations,
one can construct a configuration-automaton $\automaton$ s.t.
$L(\automaton) = \wordencodingof{\uc{\confs}}$.
\end{lem}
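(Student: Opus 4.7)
The plan is to exploit closure under finite union and reduce to the case of a single configuration. Since $\uc{\confs} = \bigcup_{\conf \in \confs} \uc{\{\conf\}}$ and finite unions of regular languages are regular, it suffices to build, for each individual $\conf \in \confs$, an automaton $\automaton_\conf$ with $L(\automaton_\conf) = \wordencodingof{\uc{\{\conf\}}}$, and then take the union.

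Fix $\conf = (\tuple{q,y}, b_{-m}\cdots b_{-1}, b_0, b_1 \cdots b_n)$. Unfolding Definition~\ref{def:lessfree}, a configuration $\conf' = (\tuple{q',y'}, c_{-m'} \cdots c_{-1}, c_0, c_1 \cdots c_{n'})$ lies in $\uc{\{\conf\}}$ iff $(q',y') = (q,y)$ and there is a strictly monotone $f \colon \{-m,\ldots,n\} \to \{-m',\ldots,n'\}$ with $f(0)=0$ such that (i) $b_i \mleq c_{f(i)}$ and $c_{f(i)} - b_i \in \msetsover{(\freeplaces\times\zeroto{\maxval+1})}$ for every $i$, and (ii) every unmatched $c_j$ is itself in $\msetsover{(\freeplaces\times\zeroto{\maxval+1})}$. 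By the definition of $\wordencoding{\cdot}$ the accepted word has the shape
\[
\tuple{q,y}\;\wordencoding{c_{n'}}\#\cdots\#\wordencoding{c_1}\;\$\;\wordencoding{c_0}\;\$\;\wordencoding{c_{-1}}\#\cdots\#\wordencoding{c_{-m'}},
\]
so as the input is scanned left-to-right the matches appear in the order $c_{f(n)}, c_{f(n-1)},\dots,c_{f(1)}$ in the low block and $c_{f(-1)},\dots,c_{f(-m)}$ in the high block, possibly interleaved with purely-free multisets.

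I would build $\automaton_\conf$ as a concatenation of components. After consuming the literal symbol $\tuple{q,y}$, the low block is scanned by a small controller that remembers which $b_i$ is next to be matched (counting $i$ down from $n$ to $1$); between matches it accepts arbitrarily many ``free'' submultisets terminated by $\#$, i.e.\ submultisets whose every symbol lies in $\freeplaces\times\zeroto{\maxval+1}$. To recognise on the fly whether the current submultiset (delimited by $\#$ or $\$$) is a valid match for $b_i$, the automaton keeps one bounded counter per distinct element of $b_i$, incrementing it whenever that element is seen, and checks at the delimiter that all counters have reached the prescribed multiplicities; any symbol not accounted for by $b_i$ must already lie in $\freeplaces\times\zeroto{\maxval+1}$. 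The middle $\$$-block handling $b_0$ and the high block are treated identically (with indices $-1$ down to $-m$). All counters and mode flags have bounds fixed by the constants in $\conf$, so each component is finite-state; sequencing them yields $\automaton_\conf$.

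The only non-routine point is the on-the-fly multiset inclusion test for matched submultisets: the tokens of $c_{f(i)}$ may appear in any order inside $\wordencoding{c_{f(i)}}$, and we must simultaneously verify $b_i \mleq c_{f(i)}$ and that the residual sits in $\freeplaces\times\zeroto{\maxval+1}$. The counter gadget above resolves this, and it is genuinely finite-state because the multiplicities in $b_i$ are fixed constants of the input. Everything else is closure of regular languages under concatenation, iteration, and union, which closes the argument.
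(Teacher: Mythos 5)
Your proposal is correct and follows essentially the same route as the paper: reduce to a single configuration via finite union, then recognise the $\lessfree$-upward closure block by block, matching the $b_i$ in the order dictated by the (reversed) encoding while allowing arbitrarily many interleaved submultisets over $\freeplaces\times\zeroto{\maxval+1}$. The only difference is low-level: the paper writes each matched block directly as $w_i\Sigma_1^*$ (the tokens of $b_i$ listed first, then free symbols), whereas your bounded-counter gadget also accepts encodings in which the tokens of the matched multiset appear in arbitrary order, a slightly more careful treatment of the non-canonical multiset encoding but the same construction in substance.
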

\begin{proof}
For every $\conf \in \confs$ we construct an automaton 
$\automaton_\conf$ s.t. $L(\automaton_\conf) = \wordencodingof{\uc{\{\conf\}}}$.
Remember that here the upward-closure is taken w.r.t. $\lessfree$.
Let $\conf = ((q,y),b_{-m}\dots b_{-1}, b_0, b_1 \dots b_n)$.
We have $b_i = [b_i^1, \dots, b_i^{j(i)}]$ where
$b_i^k \in \places\times\zeroto{\maxval +1}$.
Let $\Sigma_1 = \freeplaces\times\zeroto{\maxval +1}$, i.e., only tokens on
free-places can be added in the upward-closure.
Let $L_1 = (\Sigma_1^+ \#)^*$. The language $L_1$ describes encodings of sets
of tokens on free-places. Many such sets of tokens can be added during
the upward closure w.r.t. $\lessfree$.
Let $w_i = b_i^1 \dots b_i^{j(i)}$ , i.e.,
$w_i$ is an encoding of $b_i$.
Let $L_2 = L_1 w_{-m}\Sigma_1^* \# L_1
w_2 \Sigma_1^* \# L_1 \dots w_{-1}\Sigma_1^* (\# L_1)^*$.
So $L_2$ encodes the upward closure w.r.t. $\lessfree$ of the
part $b_{-m}\dots b_{-1}$ of the configuration $\conf$.
Let $L_3 = w_{-0}\Sigma_1^*$.
So $L_3$ encodes the upward closure w.r.t. $\lessfree$ of the
part $b_0$ of the configuration $\conf$.
Let $L_4 = L_1 w_{1}\Sigma_1^* \# L_1
w_2 \Sigma_1^* \# L_1 \dots w_{n}\Sigma_1^* (\# L_1)^*$.
So $L_4$ encodes the upward closure w.r.t. $\lessfree$ of the
part $b_{1}\dots b_{n}$ of the configuration $\conf$.
Then $L(\automaton_\conf) = (q,y) L_2 \$ L_3 \$ L_4 = \wordencodingof{\uc{\{\conf\}}}$.

Finally, $L(\automaton) = \bigcup_{\conf \in \confs} L(\automaton_\conf)
= \wordencodingof{\uc{\confs}}$.
\end{proof}

\begin{lem}\label{lem:build_aut_universal}
We can construct a configuration-automaton $\automaton$ s.t.
$L(\automaton) = \wordencoding(S)$, where $S$ is the set of all
configurations of a given AC-PTPN.
\end{lem}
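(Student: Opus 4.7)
The plan is to observe that the set $\wordencoding(S)$ is described by a simple regular expression over the finite alphabet $\Sigma$, so a finite automaton directly accepting it can be written down by inspection of the encoding definition.

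First, I would recall the exact shape of a well-formed encoding. By the definition of $\wordencoding$, a string in $\wordencoding(S)$ must have the form
\[
(q,y)\, \wordencoding(\word_2)\, \$\, \wordencoding(\mset)\, \$\, \wordencoding(\word_1),
\]
where $(q,y) \in \states \times \zeroto{\threshold}$, $\mset \in \msetsover{(\places \times \zeroto{\maxval+1})}$, and $\word_1, \word_2$ are finite sequences of \emph{nonempty} multisets over $\places \times \zeroto{\maxval+1}$ (the nonemptiness coming from the uniqueness requirement of the fractional-part decomposition in Section~\ref{delta:section}, which was carried over to A-PTPN and AC-PTPN configurations). Let $\Gamma := \places \times \zeroto{\maxval+1}$ and $H := Q \times \zeroto{\threshold}$; both are finite subsets of $\Sigma$.

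Next, I would write the regular expression
\[
R \;:=\; H \cdot W \cdot \$ \cdot \Gamma^* \cdot \$ \cdot W,
\qquad \text{where } W \;:=\; \bigl(\varepsilon \;\cup\; \Gamma^+(\#\,\Gamma^+)^*\bigr),
\]
so that $W$ describes the encoding of a (possibly empty) sequence of nonempty multisets, correctly placing $\#$ only between multisets, and $\Gamma^*$ in the middle allows $\mset$ to be empty (corresponding to $\marking_0$). A direct unfolding shows that $L(R) = \wordencoding(S)$: every configuration yields a string matching $R$, and conversely every string matching $R$ uniquely parses (by splitting on the two $\$$ symbols and the interior $\#$'s) into a control state component $(q,y)$, a middle multiset, and two sequences of nonempty multisets, which are exactly the data of an AC-PTPN configuration.

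Finally, I would apply the standard Thompson/Glushkov construction to $R$ to obtain a finite automaton $\automaton$ with $L(\automaton) = L(R) = \wordencoding(S)$. Since $\Sigma$ is finite and all the atomic building blocks ($H$, $\Gamma$, $\{\#\}$, $\{\$\}$) are finite letter-sets, this construction is effective. By construction every accepted word has the form $\wordencoding(\conf)$ for some AC-PTPN configuration $\conf$, so $\automaton$ is a configuration-automaton, as required. There is no real obstacle here; the only point one has to be slightly careful about is to use $W = \varepsilon \cup \Gamma^+(\#\Gamma^+)^*$ rather than $(\Gamma^+\#)^*$, so that $\#$ appears strictly between nonempty multisets and not as a trailing symbol.
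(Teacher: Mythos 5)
Your proposal is correct and takes essentially the same approach as the paper: write down a regular expression for $\wordencoding(S)$ by directly mirroring the definition of the encoding, then appeal to standard automaton constructions. In fact your expression is slightly \emph{more} careful than the paper's: the paper takes $L_2 = \Sigma_2^*(\#\Sigma_2^+)^*$ for the two sequence components, which inadvertently admits words such as $\#a$ (a leading empty block followed by a nonempty one), whereas your $W = \varepsilon \cup \Gamma^+(\#\Gamma^+)^*$ correctly enforces that every multiset in the sequence parts must be nonempty, exactly as required by the uniqueness condition of the fractional-part decomposition. The extra care does not change the decidability argument, but your version matches $\wordencoding(S)$ exactly.
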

\begin{proof}
Let $\Sigma_1 = \{(q,y)\,|\, q\in Q, 0 \le y \le \threshold\}$
and
$\Sigma_2 = \places\times\zeroto{\maxval +1}$.
Let $L_1 = \Sigma_2^*$ and
$L_2 = L_1 (\# \Sigma_2^+)^*$
and 
$L_3 = L_2 \$ L_1 \$ L_2$.
Then the language of $\automaton$ is
$\Sigma_1 L_3$, which is a regular language over $\Sigma$.
\end{proof}

\begin{lem}\label{lem:7ab}
Consider an instance of the PTPN cost problem, a given threshold 
$\threshold\in\nat$, and a structure 
$(S,C,\le,\rightarrow, \rightarrow_A, \rightarrow_B,{\it init}, F)$,
instantiated as in Lemma~\ref{lem:AC-PTPN-reach}. 

Then conditions 7.a and 7.b. of Def.~\ref{def:phase} are decidable.
\end{lem}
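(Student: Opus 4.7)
The plan is to reduce both decidability questions to reachability in simultaneous-disjoint-transfer nets (SD-TN), which is decidable by Corollary~\ref{cor:SD-TN-reachability}.

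\medskip
\noindent\textbf{Step 1: Simulate $\rightarrow_A$ by an SD-TN.}
I will construct an SD-TN $\mathcal{N}$ whose markings encode AC-PTPN configurations and whose transitions faithfully simulate $\rightarrow_A$-steps. Places of $\mathcal{N}$ carry both a token type $(p,x)\in\places\times\zeroto{\maxval+1}$ and a position-tag recording whether the token belongs to the central multiset $b_0$, to one of a bounded number of ``active'' slots flanking $b_0$ on each side, or to a ``deep'' part of the left/right sequence. Discrete AC-PTPN transitions become ordinary SD-TN transitions, using nondeterministic control to choose which slot each input/read/output token comes from or goes to (the multisets $\inputs,\reads,\outputs$ are finite, so only finitely many slots are involved in any step). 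Type 1 transitions are ordinary SD-TN transitions that empty $b_0$ and rotate the active tags. Type 2 transitions, which require simultaneously incrementing the age of every token in $b_{-1}$, are simulated by the single SD-TN transfer relation $\st=\{((p,x,b_{-1}),(p,x+1,b_0))\mid p\in\places,\ 0\le x\le\maxval\}$; disjointness holds since all source and target places are pairwise distinct and can be chosen disjoint from the $I\cup O$ of the accompanying transition.

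\medskip
\noindent\textbf{Step 2: Handle 7.a.} Given $z$ and a finite $U\subseteq\uc{C}$, I take the SD-TN encoding of $z$ as the initial marking of $\mathcal{N}$ and define the target as the set of markings that cover the encoding of some $u\in U$. Since $\lessfree$ ignores tokens on free-places, covering $u$ reduces to a boolean combination of the elementary constraints of Corollary~\ref{cor:SD-TN-reachability} --- a control-state constraint together with ``at least $k$ tokens on place $p'$'' constraints for each cost-place entry of $u$. Corollary~\ref{cor:SD-TN-reachability} then decides $z\in{\it Pre}^*_{\rightarrow_A}(\uc{U})$.

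\medskip
\noindent\textbf{Step 3: Handle 7.b.} To witness a starting configuration $z\in\uc{C}\setminus\uc{X}$, note that (since $C$ and $X$ are finite) this set is a boolean combination of the same kind of elementary constraints: $z\in\uc{C}$ is a finite disjunction over $c\in C$ of the ``at least $k$'' constraints describing $\uc{\{c\}}$, while $z\notin\uc{X}$ is a finite conjunction over $x\in X$ of negations of such constraints, each equivalent to a finite disjunction of ``exactly $k'$'' constraints with $k'<k$. I extend $\mathcal{N}$ with a nondeterministic initial phase that loads an arbitrary marking and asserts this boolean combination before entering the simulation phase, then ask the same reachability question as in Step 2. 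Corollary~\ref{cor:SD-TN-reachability} again applies.

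\medskip
\noindent\textbf{Main obstacle.} The subtle point is soundness and completeness of the bounded-active-slot encoding, since the sequence structure of AC-PTPN configurations is unbounded while $\mathcal{N}$ has finitely many places and only a single global transfer relation. The guiding invariant is that only slots within a bounded distance of $b_0$ are ever touched by a $\rightarrow_A$-step, and that the only shift operation truly requiring a transfer is the age-increment of type 2; all other ``renaming'' operations (rotating active tags for type 1, pushing the oldest active slot into the deep bucket, inserting new slots produced by discrete transitions) can be realised by ordinary SD-TN transitions that move tokens one at a time between tag-indexed places. Showing that this lazy, one-token-at-a-time simulation of type 1 and of deep-bucket insertion faithfully preserves $\lessfree$-reachability of $\uc{U}$ --- and in particular never runs out of fresh position tags or conflates two slots that $\lessfree$ distinguishes --- is the main technical work of the proof.
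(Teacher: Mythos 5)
There is a genuine gap, and it sits exactly where you locate the ``main technical work''. First, Steps 2--3 do not work as stated: you treat membership in $\overline{\uc{X}}\cap\uc{C}$ and covering of $\uc{U}$ as boolean combinations of per-place token-count constraints, but $\lessfree$ is an order on \emph{sequences} of multisets --- $\beta\lessfree\gamma$ requires a strictly monotone injection matching each multiset $b_i$ of $\beta$ to a multiset $c_{f(i)}$ of $\gamma$ whose cost-place content agrees with $b_i$. Two configurations with identical counts of every token $(p,x)$ can be $\lessfree$-incomparable because their tokens are grouped into fractional-part classes in a different order, so neither ``$z\notin\uc{X}$'' nor ``the reached configuration lies in $\uc{U}$'' is expressible by the constraints of Corollary~\ref{cor:SD-TN-reachability} on a counting abstraction. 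This is precisely why the paper's proof goes through word encodings: $\uc{X}$, $\uc{C}$ and the set of all configuration encodings are regular (Lemmas~\ref{lem:build_aut_uc_finite} and~\ref{lem:build_aut_universal}), an automaton for $\wordencodingof{\overline{\uc{X}}\cap\uc{C}}$ is obtained by complementation and intersection, and Lemma~\ref{lem:automata_oracle} then runs this automaton \emph{in parallel} with the SD-TN, producing the initial configuration just-in-time (with a debt scheme for high-fractional tokens needed before they are generated) and checking the $\lessfree$-covering of the target in the correct order via cover flags and a cover index in the control state.

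Second, the bounded-active-slot encoding of Step 1 does not preserve $\rightarrow_A$-reachability. Discrete steps may insert new fractional-part classes at arbitrary positions of the sequence, and every type-2 step singles out the class with the currently highest fractional part; over an unbounded run, unboundedly many distinct classes are touched, so finitely many position tags plus a deep bucket must either run out or conflate classes whose relative order matters (for the initial configuration and for the final covering). In addition, an SD-TN has one fixed global transfer relation $\st$ shared by all transfer transitions, so it cannot be re-targeted as your tag-to-role assignment rotates. The paper does not track the order at all: it keeps only zero/low/high places per $(p,k)$, argues that the order among classes is irrelevant for tokens created during the run (their fractional parts are arbitrary, so the nondeterministic choice of which high tokens reach integer age at a type-2 step is faithful), uses its single transfer for type 1 (all integer-age tokens simultaneously acquire a low fractional part), and confines the only genuine order constraints --- those stemming from the given initial configuration and the final $\lessfree$-covering --- to the parallel automaton and the control-state bookkeeping. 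Without that change of viewpoint, the invariant your proof rests on (``only boundedly many slots are ever touched'') is simply false for $\rightarrow_A$.
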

\proof~
\begin{enumerate}[\hbox to8 pt{\hfill}]  
\item\noindent{\hskip-12 pt\bf 7.a:}\
Consider a configuration $\conf$.
We can trivially
construct a configuration-automaton
$\automaton$ s.t. $L(\automaton) =\{\wordencodingof{\conf}\}$.
Thus the question $\conf\in {\it Pre}_{\rightarrow_A}^*(\uc{U})$
can be decided by applying Lemma~\ref{lem:automata_oracle}
to $\automaton$ and $U$.
\item\noindent{\hskip-12 pt\bf 7.b:}\
Consider finite sets of AC-PTPN configurations 
$U,X \subseteq \uc{C}$. 
By Lemma~\ref{lem:build_aut_uc_finite},
we can construct configuration-automata $\automaton_1, \automaton_2$ with
$L(\automaton_1) = \wordencodingof{\uc{X}}$ and
$L(\automaton_2) = \wordencodingof{\uc{C}}$.
Furthermore, by Lemma~\ref{lem:build_aut_universal},
we can construct a configuration-automaton $\automaton_3$
with $L(\automaton_3) = \wordencodingof{S}$.
Therefore, by elementary operations on finite automata,
we can construct a configuration-automaton $\automaton_4$
with $L(\automaton_4) = \overline{L(\automaton_1)} \cap L(\automaton_3) \cap L(\automaton_2)$,
and we obtain that $L(\automaton_4) = \wordencodingof{\overline{\uc{X}} \cap \uc{C}}$.
Note that the complement operation on words is not the same as the 
complement operation on the set of AC-PTPN configurations.
Thus the need for intersection with $\automaton_3$.
The question 
$\exists z \in (\overline{\uc{X}} \cap \uc{C}).\, z \rightarrow_A^* \uc{U}$ of
7.b can be decided by applying Lemma~\ref{lem:automata_oracle} to 
$\automaton_4$ and $U$.
\qed
\end{enumerate}

\begin{lem}\label{lem:automata_oracle}
Given a configuration-automaton $\automaton$,
$C$ as in Lemma~\ref{lem:AC-PTPN-reach},
and a finite set $U\subseteq \uc{C}$,
it is decidable if there exists some
AC-PTPN configuration 
$\initconf \in \wordencoding^{-1}(L(\automaton))$ s.t. $\initconf \rightarrow_A^* \uc{U}$.
\end{lem}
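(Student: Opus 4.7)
The plan is to reduce the question to a generalized reachability problem in a simultaneous-disjoint-transfer net (SD-TN), to which Corollary~\ref{cor:SD-TN-reachability} applies. Given the configuration-automaton $\automaton$ and the finite set $U \subseteq \uc{C}$, I will construct an SD-TN $N$ together with an initial configuration and a target set $F_N$ of the form admitted by Corollary~\ref{cor:SD-TN-reachability}, such that $N$ has a run from its initial configuration into $F_N$ if and only if some $\initconf \in \wordencoding^{-1}(L(\automaton))$ satisfies $\initconf \rightarrow_A^* \uc{U}$.

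The SD-TN $N$ operates in two phases, separated by a flag carried in its control-state. In the \emph{generation phase}, $N$ nondeterministically traces a run of $\automaton$ while, in parallel, assembling an AC-PTPN configuration into its marking. Its control-state carries the current state of $\automaton$ together with a pointer indicating which region of the encoding (low-fractional sequence, integer-age multiset, or high-fractional sequence) is currently being read; consuming a letter $(p,k)\in\places\times\zeroto{\maxval+1}$ deposits one token on the SD-TN place associated with that (place, age, region) triple, consuming $\#$ advances the position pointer within the current region, consuming $\$$ switches regions, and consuming the leading $(q,y)$ fixes the AC-PTPN control component and the remaining allowed cost $y$. The phase may terminate whenever $\automaton$ enters an accepting state, at which point $N$ switches to the \emph{simulation phase}. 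In this phase, $N$ simulates $\rightarrow_A$-transitions of $\tpn$: discrete AC-PTPN transitions become ordinary SD-TN transitions, exploiting the fact that the $+\epsilon$ perturbation in the $\match$-predicate is uniform across non-zero positions so that only the integer vs.\ non-integer distinction matters for their enabledness; Type 1 abstract timed transitions are realized as simultaneous-disjoint transfers that shift every (place, age) pool from the integer-age region into the low-fractional region; Type 2 abstract timed transitions, enabled only when the integer-age region is empty, are realized as simultaneous-disjoint transfers that promote the pool currently designated as $b_{-1}$ into the integer-age region with age incremented by one. The disjointness condition on transfers is satisfied in every case, because each (place, age) pair appears in at most one source-target pair of the transfer relation.

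The target set $F_N$ expresses ``the current AC-PTPN configuration lies in $\uc{U}$'' as a finite disjunction over $u \in U$ of conjunctions of atomic constraints: the control component matches the one of $u$; the token counts on cost-place SD-TN positions equal those prescribed by $u$ (since $\lessfree$ preserves cost-place tokens exactly); and the token counts on free-place positions meet or exceed those prescribed by $u$. Each disjunct is a boolean combination of control-state, exact-count and lower-bound-count constraints, which is exactly the form admitted by Corollary~\ref{cor:SD-TN-reachability}. The generalized reachability question for $N$ and $F_N$ is therefore decidable, which yields the lemma.

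The principal obstacle is that the high-fractional and low-fractional regions of an AC-PTPN configuration are sequences of multisets of unbounded length, while $N$ has only finitely many places. The encoding must therefore represent the individual sequence positions by shared bucket places whose roles are shifted by SD-TN transfers, and the simulation of Type 2 in particular must correctly identify and promote the ``latest'' high-fractional multiset $b_{-1}$, even though discrete transitions may insert fresh positions anywhere into the sequence. Ensuring that this role-shifting mechanism is both sound and complete with respect to the $\rightarrow_A$-semantics --- so that every $\rightarrow_A$-computation of $\tpn$ from a configuration in $\wordencoding^{-1}(L(\automaton))$ is mirrored by a run of $N$ and vice versa --- is the technically delicate part of the construction, and it is here that the simultaneity and disjointness of SD-TN transfers are essential.
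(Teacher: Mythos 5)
Your overall skeleton is the paper's: run $\automaton$ in parallel inside an SD-TN, pool tokens in bucket places indexed by Petri-net place, integer age and zero/low/high region, realize abstract timed steps with simultaneous-disjoint transfers, and finish with Corollary~\ref{cor:SD-TN-reachability}. But the proposal stops exactly where the difficulty lies, and two of its concrete choices would fail. The first gap is the high-fractional part of the initial configuration. You generate all of $\initconf$ up front and concede that the unboundedly many sequence positions must be represented by ``shared bucket places whose roles are shifted by transfers'', leaving the identification of $b_{-1}$ at type-2 steps as an unresolved ``delicate part''; note also that a ``position pointer'' in the control state cannot exist (positions are unbounded), and an SD-TN cannot test that the integer-age region is empty. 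This cannot be patched locally: once the initial high tokens are pooled, any rule that promotes a nondeterministically chosen subset of them at a type-2 step realizes orderings of their fractional parts that need not be consistent with the one word of $L(\automaton)$ that was read, so the reduction becomes unsound (such nondeterminism is legitimate only for tokens created during the simulation, whose fractional parts are genuinely arbitrary). The paper's solution is structurally different: the high part of $\initconf$ is \emph{not} generated in the initialization phase at all; $\automaton$ emits it on demand, one $\#$-separated group per simulated type-2 step, so the order is enforced by the word itself, and a debt mechanism (places $\inputdebtplace{\place}{k}$ for consumed tokens, bounded counters $\rdebt{\place}{k}$ in the control for read arcs) lets the computation use such tokens before they are emitted, with the final condition requiring all debts to be repaid. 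Nothing in your plan plays this role; likewise, the paper never needs an emptiness test for type 2 because every type-2 step is forced to follow a type-1 transfer, which empties the integer-age places.

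The second gap is the target set. Checking ``cost-place bucket counts equal to those of $u$, free-place bucket counts at least those of $u$'' is not equivalent to membership in $\uc{U}$ w.r.t.\ $\lessfree$: that order demands a strictly monotone injection of the groups of $u$ into the groups of the reached configuration, with cost tokens matched group-wise. For example, two cost-place tokens that $u$ requires in the same fractional-part group are not covered by a computation in which they always lie in distinct groups (distinct groups never merge under $\rightarrow_A$), even though all bucket counts agree; so your $F_N$ accepts too much. The paper certifies covering token-by-token instead, with flags $\fstate{i,j}$ for the finitely many tokens of the fixed target configuration and a $\coverflag$/$\coverindex$ discipline ensuring that each target group is covered from a single generated or simulated group and that the groups are used in strictly monotone order; only then does the reachability query of Corollary~\ref{cor:SD-TN-reachability} (final flags all true, debts zero) decide the question. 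Without these two mechanisms the construction you outline does not establish the lemma.
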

\begin{proof}
The idea is to translate the AC-PTPN into an SD-TN which simulates its
computation. The automaton $\automaton$ is also encoded into the SD-TN and
runs in parallel. $\automaton$ outputs an encoding of $\initconf$, a nondeterministically
chosen initial AC-PTPN configuration from $L(\automaton)$.
Since the SD-TN cannot encode sequences, it cannot store the order information in
the sequences which are AC-PTPN configurations. Instead this is encoded into
the behavior of $\automaton$, which outputs parts of the configuration $\initconf$
`just-in-time' before they are used in the computation (with exceptions;
see below).
Several abstractions are used to unify groups of tokens with different
fractional parts, whenever the PTPN is unable to distinguish them.
AC-PTPN timed transitions of types 1 and 2 are encoded as SD-TN transfer
transitions, e.g., all tokens with integer age advance to an age with a small
fractional part. Since this operation must affect all tokens, it cannot
be done by ordinary Petri net transitions, but requires the
simultaneous-disjoint transfer of SD-TN. Another complication is that the
computation of the AC-PTPN might use tokens (with high fractional part) from
$\initconf$, which the automaton $\automaton$ has not yet produced.
This is handled by encoding a `debt' on future outputs of $\automaton$ in
special SD-TN places. These debts can later be `paid back' by outputs of
$\automaton$ (but not by tokens created during the computation).
At the end, the computation must reach an encoding of a configuration in
$\uc{U}$ and all debts must be paid. This yields a reduction to a reachability
problem for the constructed SD-TN, which is decidable by
Theorem~\ref{thm:SD-TN-reachability}.

We devote the rest of the section to give the details of the proof.

We show the lemma for the case where $U$ is a singleton
$\set{\finalconf}$.
The result follows from the fact that $U$ is finite and that
$\uc{U}=\cup_{\conf\in U}\uc{\conf}$.
We will define an SD-TN 
$\sdtn=\tuple{\states^{\sdtn},\places^{\sdtn},\transitions^{\sdtn},\transfer^{\sdtn}}$,
a finite set $\initsdtnconfs$ of (initial) configurations, 
and a finite set of (final) $\omega$-configurations $\finalsdtnconfs$ such that
$\exists\initsdtnconf\in\initsdtnconfs.\;
\exists\finalsdtnconf\in\finalsdtnconfs.\;
\initsdtnconf\movesto{*}\finalsdtnconf$ in $\sdtn$ iff
there is
a $\initconf \in \wordencoding^{-1}(L(\automaton))$ s.t. $\initconf \rightarrow_A^* \uc{U}$.
The result then follows immediately from Theorem~\ref{thm:SD-TN-reachability}
(and Corollary~\ref{cor:SD-TN-reachability}).
Let
$\finalconf=\tuple{\tuple{\state_{\it fin},y_{\it fin}},\finalmarking}$ where
$\finalmarking$ is of the form
$\tuple{\mset_{-m}\cdots\mset_{-1},\mset_0,\mset_1\cdots\mset_n}$ and
$\mset_i$ is of the form
$\tuple{\tuple{\place_{i1},k_{i1}},\ldots,\tuple{\place_{in_i},k_{in_i}}}$
for $i:-m\leq i\leq n$.
Let the finite-state automaton $\automaton$ be of the form 
$\tuple{\states^{\cal A},\transitions^{\cal A},\state^{\cal A}_0,\fstates^A}$
where $\states^{\cal A}$ is the set of states,
$\transitions^{\cal A}$ is the transition relation,
$\state^{\cal A}_0$ is the initial state, and
$\fstates^{\cal A}$ is the set of final states.
A transition in $\transitions^{\cal A}$ is of the form
$\tuple{\state_1,a,\state_2}$ where
$\state_1,\state_2\in\states^{\cal A}$ and 
$a\in\left(\places\times\zeroto{\maxval +1}\right)
\cup
(\states\times\setcomp{y}{0\leq v\leq y_{\it init}})
\cup
\set{\sep,\$}$.
We write $\state_1\movesto{a}\state_2$ to denote that
$\tuple{\state_1,a,\state_2}\in\transitions^{\cal A}$.
During the operation of $\sdtn$, we will run the automaton
$\automaton$ ``in parallel'' with $\tpn$.
%
%From time to time, we perform one step of $\automaton$.
%
During the course of the simulation, the automaton
$\automaton$ will generate the encoding of a configuration
$\initconf$.
We know that such an encoding consists of 
a control-state $\tuple{\initstate,y_{\it init}}$ followed by the
encoding of a  marking
$\initmarking$, say of the form
$\tuple{\cmset_{-m'}\cdots\cmset_{-1},\cmset_0,\cmset_1\cdots\cmset_{n'}}$.
Notice that $\automaton$ may output the encoding of any marking
in its language, and therefore the values of $m'$ and $n'$
are not a priori known.

To simplify the presentation, we introduce a number of conventions for the description
of $\sdtn$.
First we define a set $\vars$ of variables (defined below), 
where each variable $\var\in\vars$ ranges over a finite domain $\domof{\var}$.
A control-state $\state$ then is a mapping that assigns, to each variable
$\var\in\vars$, a value in $\domof{\var}$, i.e., $\state(\var)\in\domof{\var}$.
Consider a state $\state$, variables $\var_1,\ldots,\var_k$ where $\var_i\neq\var_j$ if $i\neq j$,
and values $\val_1,\ldots,\val_k$ where $\val_i\in\domof{\var_i}$ for all $i:1\leq i\leq k$.
We use $\state[\var_1\assigned\val_1,\ldots,\var_k\assigned\val_k]$ to denote that
state $\state'$ such that $\state'(\var_i)=\val_i$ for all $i:1\leq i\leq k$,
and $\state'(\var)=\state(\var)$ if $\var\not\in\set{\var_1,\ldots,\var_k}$.
Furthermore, we introduce a set of \emph{transition generators}, where each
transition generator $\transgen$ characterizes a (finite) set $\denotationof{\transgen}$
of transitions in $\sdtn$.
A transition generator $\transgen$ is a tuple
$\tuple{\precondof{\transgen},
\postcondof{\transgen},\inputsof{\transgen},\outputsof{\transgen}}$,
where
\begin{iteMize}{$\bullet$}
\item
$\precondof{\transgen}$ is a set 
$\set{\var_1=\val_1,\ldots,\var_k=\val_k}$,
where $\var_i\in\vars$ and $\val_i\in\domof{\var_i}$ for all $i:1\leq i\leq k$.
\item
$\postcondof{\transgen}$ is a set 
$\set{\var'_1\assigned\val'_1,\ldots,\var'_\ell\assigned\val'_\ell}$,
where $\var'_i\in\vars$ and $\val'_i\in\domof{\var'_i}$ for all $i:1\leq i\leq \ell$.
\item
$\inputsof{\transgen},\outputsof{\transgen}\in \msetsover{\left(\places^\sdtn\right)}$.
\end{iteMize}
The set $\denotationof\transgen$ contains all transitions of the form
$\tuple{\state_1,\state_2,I,O}$ where
\begin{iteMize}{$\bullet$}
\item
$\state_1(\var_i)=\val_i$ for all $i:1\leq i\leq k$.
\item
$\state_2=\state_1[\var'_1\assigned\val'_1,\ldots,\var'_\ell\assigned\val'_\ell]$.
\item
$I=\inputsof{\transgen}$, and $O=\outputsof{\transgen}$.
\end{iteMize}
In the constructions we will define a set $\transgens$ of transition generators
and define $\transitions^\sdtn:=\cup_{\transgen\in\transgens}\denotationof\transgen$.

Below we will define the components
$\states^{\sdtn}$, $\places^{\sdtn}$, $\transitions^{\sdtn}$, and $\transfer^{\sdtn}$
in the definition of $\sdtn$, together with
the set $\initsdtnconfs$ and
configuration $\finalsdtnconf$.

\medskip
\noindent{\bf The set $\states^{\sdtn}$} is,
as mentioned above,  defined in terms of a set $\vars$ of variables.
The set $\vars$ contains the following elements:
\begin{iteMize}{$\bullet$}
\item
$\mode$ indicates the \emph{mode} of the simulation.
More precisely, a computation of $\sdtn$ will consist of three phases, namely an
{\it initialization}, a {\it simulation}, and a {\it final phase}.
Each phase is divided into a number of sub-phases referred
to as {\it modes}.
\item
A variable $\nstate$, with $\domof{\nstate}=\states$, that
stores the current control-state $\state_\tpn$.
\item
A variable $\astate$, with $\domof{\astate}=\states_{\automaton}$, that
stores the current state of $\automaton$.
\item
A variable $\fstate{i,j}$ with $\domof{\fstate{i,j}}=\set{\true,\false}$,
for each $i:-m\leq i\leq n$ and $1\leq j \leq n_i$.
During the simulation phase, the systems tries to cover all the tokens
in the multisets of $\finalmarking$.
Intuitively, $\fstate{i,j}$ is a flag that indicates whether the token
$\tuple{\place_{i,j},k_{i,j}}$ has been covered.
\item
A variable $\coverflag$ that has one of the values $\on$ or $\off$.
The covering of tokens in $\finalmarking$ occurs only during certain phases 
of the simulation.
This is controlled by the value of the variable $\coverflag$.

\item
A variable $\coverindex$ with
$-m\leq\coverindex\leq n$ gives the next multiset whose tokens 
are to be covered.
\item
For each $\place\in\places$ and $k:0\leq k\leq\maxval+1$, we have a variable
$\rdebt{\place}{k}$, whose use and domain are explained below.
During the simulation, we will need to use tokens that
have still not been generated by $\automaton$.
To account for these tokens, we will implement a ``debt scheme''
in which tokens are used first, and then ``paid back'' by
tokens that are later generated by $\automaton$.
The variable $\rdebt{\place}{k}$ keeps track of the number of tokens
$\tuple{\place,k}$ that have been used on read arcs
(the debt on tokens consumed in input operations are managed
through specific places described later.)
For a place $\place$ and a transition $\transition$,
let $\maxread(\place,\transition)$ be the number of read arcs
between $\place$ and $\transition$.
Define $\maxread:=\max_{\place\in\places,\transition\in\transitions}\maxread(\place,\transition)$.
Then, $\domof{\rdebt{\place}{k}}=\set{0,\ldots,\maxread}$.
The  definition of the domain reflects the fact the largest amount of debt
that we will generate due to tokens raveling through read arcs is bounded by
$\maxread$.
\end{iteMize}

\medskip
\noindent{\bf The set $\places^{\sdtn}$}
contains the following places:
\begin{iteMize}{$\bullet$}
\item
For each 
$\place\in\places$ and $k:0\leq k\leq\maxval+1$, 
the set $\places^{\sdtn}$ contains the place
$\zeroplace{\place}{k}$.
The number of tokens in $\zeroplace{\place}{k}\in\places^{\sdtn}$ 
reflects (although it may be not exactly equal to)
the number of tokens
in $\place\in\places$ whose ages have zero fractional parts.
\item
For each 
$\place\in\places$ and $k:0\leq k\leq\maxval+1$, 
the set $\places^{\sdtn}$ contains the places
$\lowplace{\place}{k}$ and $\highplace{\place}{k}$.
These places play the same roles as above
for tokens with ages that have \emph{low} (close to $0$) resp. \emph{high}
(close to $1$) fractional parts.
\item
For each $\place\in\places$ and $0\leq k\leq\maxval+1$,
the set $\places^{\sdtn}$ contains the place
$\inputdebtplace{\place}{k}$.
The place represents the amount of debt due to tokens
$\tuple{\place,k}$ traveling through input arcs.
There is a priori no bound on the amount of debt on such tokens.
Hence, this amount is stored in places (rather than in variables
as is the case for read tokens.)
\end{iteMize}

\medskip
\noindent{\bf The Set $\initsdtnconfs$}
contains all configurations $\tuple{\initstate^\sdtn,\initmarking^\sdtn}$
satisfying the following conditions:
\begin{iteMize}{$\bullet$}
\item
$\initstate^\sdtn(\mode)=\initmode$.
The initial mode is $\initmode$.

\item
$\initstate^\sdtn(\astate)=\state^{\cal A}_0$.
The automaton $\automaton$
is simulated starting from its initial state $\state^{\cal A}_0$.
\item
$\initstate^\sdtn(\fstate{i,j})=\false$ 
for all $i:-m\leq i\leq n$ and $1\leq j \leq n_i$.
Initially we have not covered any tokens in $\finalmarking$.
\item
$\initstate^\sdtn(\rdebt{\place}{k})=0$
for all $\place\in\places$ and $k:0\leq k\leq\maxval+1$.
Initially, we do not have any debts due to read tokens.
\item
$\initmarking^\sdtn(\place)=0$ for all places $\place\in\places^\sdtn$.
Initially, all the places of $\sdtn$ are empty.
\end{iteMize}
Notice that the variables $\coverflag$ and $\coverindex$ 
are not restricted so $\coverflag$ may be $\on$ or $\off$
and  $\coverindex$ may have any value $-m\leq\coverindex\leq n$.
Although $\nstate$ is not restricted either, 
its value will be defined in the first step of the simulation
(see below.)

Next, we explain how $\sdtn$ works.
In doing that, we also introduce all the members of the set $\transitions^\sdtn$.

\medskip
\noindent{\bf Initialization}
In the initialization phase
the SD-TN $\sdtn$ reads the initial control-state and then fills in the places according to $\initmarking$.
From the definition of the encoding of a configuration, we know that the automaton
$\automaton$ outputs a pair $\tuple{\state,y}$
 in its first transition.
The first move of $\sdtn$ is to store this pair in its control-state.
Thus, for each transition
$\state_1\movesto{\tuple{\state,y}}\state_2$ in $\automaton$ 
where $\state\in\states$ and $1\leq y \leq y_{\it init}$,
the set $\transgens$ contains $\transgen$ where:
\begin{iteMize}{$\bullet$}
\item
$\precondof{\transgen}=\set{\mode=\initmode,\astate=\state_1}$.
\item
$\postcondof{\transgen}=\set{\mode\assigned\initlowmode,\nstate\assigned\tuple{\state,y},\astate\assigned\state_2}$.
\item
$\inputsof{\transgen}=\emptyset$.
\item
$\outputsof{\transgen}=\set{\lowplace{\place}{k}}$.
\end{iteMize}
In other words, once $\sdtn$ has input the initial control-state, 
it enters a new mode $\initlowmode$.
In mode $\initlowmode$, we read the multisets
$\cmset_{1}\cdots\cmset_{m}$ that represent tokens with low fractional parts.
The system starts running $\automaton$ one step at a time, generating the elements
of $\cmset_m$ (that are provided by $\automaton$.)
When it has finished generating all the tokens in $\cmset_m$,
it moves to the next multiset, generating
the multisets one by one in the reverse order
finishing with $\cmset_{1}$.
We distinguish between two types of such tokens depending on
how they will be used in the construction.
More precisely, such a token is 
either {\it consumed} when firing transitions during the simulation phase
or  used for {\it covering} the multisets in $\finalmarking$.
A token (of the form $\tuple{\place,k}$), used for consumption,
is put in a place $\lowplace{\place}{k}$.
Recall that the relation $\amovesto{}$  in $\tpn$ is insensitive to the order of 
the fractional parts that are small (fractional parts of the tokens
in $\cmset_1,\ldots,\cmset_{n'}$.)
Therefore, tokens in  $\cmset_1,\ldots,\cmset_{n'}$
that have identical places $\place$ and identical integer parts $k$
will all be put in the same place $\lowplace{\place}{k}$.
Formally, for each transition
$\state_1\movesto{\tuple{\place,k}}\state_2$ in $\automaton$,
the set $\transgens$ contains $\transgen$ where:
\begin{iteMize}{$\bullet$}
\item
$\precondof{\transgen}=\set{\mode=\initlowmode,\astate=\state_1}$.
\item
$\postcondof{\transgen}=\set{\astate\assigned\state_2}$.
\item
$\inputsof{\transgen}=\emptyset$.
\item
$\outputsof{\transgen}=\set{\lowplace{\place}{k}}$.
\end{iteMize}
Each time a new multiset $\cmset_j$ is read from $\automaton$, 
the system decides whether 
it may be (partially) used for covering the next multiset $\mset_i$
in $\finalmarking$.
This decision is made by checking the value
of the component $\coverflag$.
if $\coverflag=\off$ then
the tokens are only used for consumption during the simulation phase.
However, if $\coverflag=\on$ then 
the tokens generated by $\automaton$ can also be used to cover those in $\finalmarking$.
The multiset currently covered is given by the value
of the component  $\coverindex$.
More precisely, if $\coverindex=i$ for some $i:1\leq i \leq n$ then
(part of) the multiset $\cmset_j$ that is
currently being generated by $\automaton$ ($j:1\leq j \leq n'$)
may be used to cover (part of) the multiset $\mset_i$.
At this stage, we only cover tokens with low fractional parts (those in the 
multisets $\mset_1,\ldots,\mset_n$.)
When using tokens for covering, the order on the fractional parts 
of tokens \emph{is} relevant.
The construction takes into consideration different aspects
of this order as follows:
\begin{iteMize}{$\bullet$}
\item
According to the definition of the ordering $\lessfree$,
the tokens in a given multiset $\cmset_j$ 
may only be used to cover those in one and the same multiset (say $\mset_i$.)
This also agrees with the observation that
the tokens represented in $\cmset_j$ correspond to tokens in the original TPN 
that have identical fractional parts (the same applies to $\mset_i$.)
In fact, if this was not case, 
then we would be using tokens with identical fractional parts (in $\cmset_j$)
to cover tokens with different fractional parts.
Analogously, the multiset $\mset_i$ can be covered only by the elements
of one multiset $\cmset_j$.
\item
If $i'<i$ then the fractional parts of the tokens represented by 
$\mset_{i'}$ are smaller than those represented by $\mset_{i}$.
The same applies to $\cmset_{j'}$ and $\cmset_{j}$ if $j'<j$.
Therefore, if $\cmset_{j}$ is used to cover $\mset_{i}$ and $j'<j$ then
 $\cmset_{j'}$ should be used to cover $\mset_{i'}$ for some $i'<i$.
Furthermore, a multiset $\cmset_{j}$ is not necessarily used to 
cover any multiset, i.e.,
all the tokens represented by $\cmset_{j}$ 
may be used for consumption during the simulation
(none of them being used for covering.)
Similarly, it can be the case that a given  $\mset_{i}$ is not covered
by any multiset  $\cmset_{j}$ (all its tokens are covered by
tokens that are generated during the simulation.)
Also, a multiset $\cmset_{j}$ may only be partially used to cover
$\mset_{i}$, i.e., some of its tokens may be used for covering
$\mset_{i}$ while some are consumed during the simulation.
Finally,  $\mset_{i}$ may only be partially covered by $\cmset_{j}$, 
i.e., some of its tokens are
covered by $\cmset_{j}$ while the rest of tokens are covered by 
tokens generated during the simulation.
\end{iteMize}
Formally, for each
$\state_1\movesto{\tuple{\place,k}}\state_2$ in $\automaton$,
$1\leq i\leq n$, $1\leq j\leq n_i$ with $\tuple{\place_{i,j},k_{i,j}}=\tuple{\place,k}$,
we add $\transgen$ to $\transgens$, where:
\begin{iteMize}{$\bullet$}
\item
$\precondof{\transgen}=\{\mode=\initlowmode,\astate=\state_1,$ $\coverflag=\on,\coverindex=i\}$.
\item
$\postcondof{\transgen}=\set{\astate\assigned\state_2,\fstate{i,j}\assigned\true}$.
\item
$\inputsof{\transgen}=\emptyset$.
\item
$\outputsof{\transgen}=\emptyset$.
\end{iteMize}
The transition sets the flag $\fstate{i,j}$ to $\true$ 
indicating that the token has now been covered.
A transition
$\state_1\movesto{\sep}\state_2$ in $\automaton$ indicates
that we have finished generating the elements of the current multiset $\cmset_j$.
If $\coverflag=\on$ then we have also finished covering tokens
in the multiset $\mset_i$.
Therefore, we decide the next multiset $i'<i$ in which which to cover tokens.
Recall that not all multisets have to be covered and 
hence $i'$ need not be equal to
$i-1$ (in fact the multisets $\mset_{i''}$ for $i'<i''<i$ 
will not be covered by the multisets in $\initmarking$.)
We also decide whether to use  $\initmset_{j-1}$ to cover  $\mset_{i'}$ or not.
In the former case, we set $\coverflag$ to $\on$, 
while in the latter case we
 set $\coverflag$ equal to $\off$.
Also, if $\coverflag=\off $ then we decide whether to use
$\cmset_{j-1}$ for covering
$\mset_{i}$ or not.
We cover these four possibilities by adding the following transition generators to $\transgens$.

(i) For each transition
$\state_1\movesto{\sep}\state_2$ in $\automaton$,
$i:1\leq i\leq n$, and $i':-m\leq i'<i$,
we add $\transgen$ where:
\begin{iteMize}{$\bullet$}
\item
$\precondof{\transgen}=\{\mode=\initlowmode,\astate=\state_1,\coverflag=\on,\coverindex=i\}$.
\item
$\postcondof{\transgen}=\{\astate\assigned\state_2,\coverindex\assigned i'\}$.
\item
$\inputsof{\transgen}=\emptyset$.
\item
$\outputsof{\transgen}=\emptyset$.
\end{iteMize}
This is the case where $\coverflag$ is $\on$ and continues to be $\on$.
Notice that no covering takes place if $\coverindex\leq 0$,
and that the new value of $\coverindex$ is made strictly smaller than the current one.

(ii)
For each transition
$\state_1\movesto{\sep}\state_2$ in $\automaton$,
and each $i,i':1\leq i'<i\leq n$,
we add $\transgen$ where:
\begin{iteMize}{$\bullet$}
\item
$\precondof{\transgen}=\{\mode=\initlowmode,\astate=\state_1,$ 
$\coverflag=\on,\coverindex=i\}$.
\item
$\postcondof{\transgen}=\{\astate\assigned\state_2,\coverflag\assigned\off,\coverindex\assigned i'\}$.
\item
$\inputsof{\transgen}=\emptyset$.
\item
$\outputsof{\transgen}=\emptyset$.
\end{iteMize}
This is the case where $\coverflag$ is $\on$ but it is turned $\off$
for the next step.

(iii)
For each transition
$\state_1\movesto{\sep}\state_2$ in $\automaton$,
we add $\transgen$ where:
\begin{iteMize}{$\bullet$}
\item
$\precondof{\transgen}=\{\mode=\initlowmode,\astate=\state_1,$ 
$\coverflag=\off\}$.
\item
$\postcondof{\transgen}=\{\astate\assigned\state_2\}$.
\item
$\inputsof{\transgen}=\emptyset$.
\item
$\outputsof{\transgen}=\emptyset$.
\end{iteMize}
This is the case where $\coverflag$ is $\off$ and continues to be $\off$.

(iv) 
For each transition
$\state_1\movesto{\sep}\state_2$ in $\automaton$,
we add $\transgen$ where:
\begin{iteMize}{$\bullet$}
\item
$\precondof{\transgen}=\{\mode=\initlowmode,\astate=\state_1,$ 
$\coverflag=\off\}$.
\item
$\postcondof{\transgen}=\{\astate\assigned\state_2,\coverflag\assigned\on\}$.
\item
$\inputsof{\transgen}=\emptyset$.
\item
$\outputsof{\transgen}=\emptyset$.
\end{iteMize}
This is the case where $\coverflag$ is $\off$ but it is turned $\on$
for the next step.

The process of generating tokens with low fractional parts
continues until we encounter a transition of the form
$\state_1\movesto{\$}\state_2$ in $\automaton$.
According to the encoding of markings, this indicates 
that we have finished generating the elements of the multisets $\cmset_1,\ldots,\cmset_n$.
Therefore, we change mode from $\initlowmode$ to
$\initzeromode$ (where we scan the multiset $\mset_0$.)
We have also to consider changing
the variables $\coverflag$ and $\coverindex$ in the 
same way as above.
Therefore, we add the following transition generators:

(i) 
For each transition $\state_1\movesto{\$}\state_2$ in $\automaton$,
$i:1\leq i\leq n$, and $i':-m\leq i'<i$,
we add $\transgen$ where:
\begin{iteMize}{$\bullet$}
\item
$\precondof{\transgen}=\{\mode=\initlowmode,\astate=\state_1,$ 
$\coverflag=\on,\coverindex=i\}$.
\item
$\postcondof{\transgen}=\{\mode\assigned\initzeromode,\astate\assigned\state_2,\coverindex\assigned i'\}$.
\item
$\inputsof{\transgen}=\emptyset$.
\item
$\outputsof{\transgen}=\emptyset$.
\end{iteMize}

(ii) 
For each transition $\state_1\movesto{\$}\state_2$ in $\automaton$,
$i:1\leq i\leq n$, and $i':-m\leq i'<i$,
we add $\transgen$ where:
\begin{iteMize}{$\bullet$}
\item
$\precondof{\transgen}=\{\mode=\initlowmode,\astate=\state_1,$ 
$\coverflag=\on,\coverindex=i\}$.
\item
$\postcondof{\transgen}=\{\mode\assigned\initzeromode,\astate\assigned\state_2,\coverflag\assigned\off,\coverindex\assigned i'\}$.
\item
$\inputsof{\transgen}=\emptyset$.
\item
$\outputsof{\transgen}=\emptyset$.
\end{iteMize}

(iii)
For each transition
$\state_1\movesto{\$}\state_2$ in $\automaton$,
we add $\transgen$ where:

\begin{iteMize}{$\bullet$}
\item
$\precondof{\transgen}=\{\mode=\initlowmode,\astate=\state_1,$ 
$\coverflag=\off\}$.
\item
$\postcondof{\transgen}=\set{\mode\assigned\initzeromode,\astate\assigned\state_2}$.
\item
$\inputsof{\transgen}=\emptyset$.
\item
$\outputsof{\transgen}=\emptyset$.
\end{iteMize}

(iv)
For each transition
$\state_1\movesto{\$}\state_2$ in $\automaton$,
we add $\transgen$ where:

\begin{iteMize}{$\bullet$}
\item
$\precondof{\transgen}=\{\mode=\initlowmode,\astate=\state_1,$ 
$\coverflag=\off\}$.
\item
$\postcondof{\transgen}=\{\mode\assigned\initzeromode,\astate\assigned\state_2,\coverflag\assigned\on\}$.
\item
$\inputsof{\transgen}=\emptyset$.
\item
$\outputsof{\transgen}=\emptyset$.
\end{iteMize}

In $\initzeromode$ the places are filled according to $\cmset_0$.
The construction is similar to the previous mode.
The only differences are that the tokens to be consumed will
be put in places $\zeroplace{\place}{k}$ 
and that no tokens are covered in $\finalmarking$.

For each transition
$\state_1\movesto{\tuple{\place,k}}\state_2$ in $\automaton$,
the set $\transgens$ contains $\transgen$ where:
\begin{iteMize}{$\bullet$}
\item
$\precondof{\transgen}=\set{\mode=\initzeromode,\astate=\state_1}$.
\item
$\postcondof{\transgen}=\set{\astate\assigned\state_2}$.
\item
$\inputsof{\transgen}=\emptyset$.
\item
$\outputsof{\transgen}=\set{\zeroplace{\place}{k}}$.
\end{iteMize}
Since the tokens are not used at this stage
for covering the multisets of $\finalmarking$,
no transition generators are added for that purpose.
Also, in contrast to tokens belonging to 
$\cmset_0,\ldots,\cmset_{n'}$
we cannot generate tokens 
belonging to $\cmset_{-m'},\ldots,\cmset_{-1}$
during the initialization phase.
The reason is that, in the former case,
we only need to keep track of the order of multisets whose tokens are used for covering
(the ordering of the fractional parts in tokens used for consumption is not relevant.)
Since the number $n$ is given a priori in the construction (the marking
$\finalmarking$ is a parameter of the problem), we need only to keep track
of tokens belonging to at most $n$ different multisets.
This does not hold in the case of the latter tokens, 
since the order of the multisets to which the tokens belong is
relevant also in the case of tokens that will be consumed.
Since $m'$ 
is not a priori bounded, we postpone the generation of 
these tokens to the simulation phase, where we generate these tokens from
$\automaton$ ``on demand'': 
each time we perform a timed transition, we allow the 
$\highplace{\place}{k}$ tokens with the highest 
fractional part to be generated.
This construction is made more 
precise in the description of the simulation phase.

The mode $\initzeromode$ is concluded when we the next transition of 
$\automaton$ is labeled with $\$$.
This means that we have finished inputting the last multiset
$\mset_0$.
We now move on to the simulation phase.

For each transition of the form
$\state_1\movesto{\$}\state_2$ in $\automaton$, we add
$\transgen$ to $\transgens$ where:
\begin{iteMize}{$\bullet$}
\item
$\precondof{\transgen}=\set{\mode=\initzeromode,\astate=\state_1}$.
\item
$\postcondof{\transgen}=\set{\mode\assigned\simmode,\astate\assigned\state_2}$.
\item
$\inputsof{\transgen}=\emptyset$.
\item
$\outputsof{\transgen}=\emptyset$.
\end{iteMize}

\medskip
\noindent{\bf Simulation.}
The simulation phase consists of simulating a sequence of
transitions, each of which is either discrete, of type $1$, or of type $2$.
Each type $2$ transition is preceded by at least one type $1$ transition.
Therefore, from $\simmode$ we next perform a discrete or a type $1$  transition.
The (non-deterministic) choice is made
using the transition generators $\transgen_1$ and $\transgen_2$ 
where:
\begin{iteMize}{$\bullet$}
\item
$\precondof{\transgen_1}=\set{\mode=\simmode}$.
\item
$\postcondof{\transgen_1}=\set{\mode\assigned\discmode}$.
\item
$\inputsof{\transgen_1}=\emptyset$.
\item
$\outputsof{\transgen_1}=\emptyset$.
\end{iteMize}
\begin{iteMize}{$\bullet$}
\item
$\precondof{\transgen_2}=\set{\mode=\simmode}$.
\item
$\postcondof{\transgen_2}=\set{\mode\assigned\typeoneamode}$.
\item
$\inputsof{\transgen_2}=\emptyset$.
\item
$\outputsof{\transgen_2}=\emptyset$.
\end{iteMize}

\medskip
\noindent{\bf Discrete Transitions.}
A discrete transition $\transition=\tuple{\state_1,\state_2,\inputs,\reads,\outputs}$ 
in $\tpn$ is simulated by a set of transitions in $\sdtn$.
In defining this set, we take into consideration several aspects
of the simulation procedure as follows:
\begin{iteMize}{$\bullet$}
\item
Basically, an interval $\interval$ 
on an arc leading from an input place $\place\in\inputs$ to
$\transition$   induces a set of transitions in $\transitions^\sdtn$; namely
transitions where there are arcs from places $\zeroplace{\place}{k}$ with $k\in\interval$, and
from places $\lowplace{\place}{k}$ and  $\highplace{\place}{k}$
with $(k+\epsilon)\in\interval$ for some
$\epsilon:0<\epsilon<1$.
An analogous construction is made for output and read places of $\transition$.
Since a read arc does not remove the token from the place, 
there is both an input arc and output arc to the corresponding transition in $\sdtn$.
\item
We recall that the tokens belonging to $\cmset_{-m'},\ldots,\cmset_{-1}$
are not generated during the initial phase, and that these tokens are gradually
introduced during the simulation phase.
Therefore, a transition may need to be fired before the 
required $\highplace{\place}{k}$-tokens have been produced by $\automaton$.
Such tokens are needed for performing both input and read operations.
In order to cover for tokens that are needed for input arcs, 
we use the set of places $\inputdebtplace{\place}{k}$
for $\place\in\places$ and $0\leq k\leq\maxval+1$.
Then, consuming a token from a place  $\highplace{\place}{k}$
may be replaced by putting a token in $\inputdebtplace{\place}{k}$.
The ``debt'' can be paid back using tokens 
that are later generated by $\automaton$.
When $\sdtn$ terminates, we require all the debt places to be empty
(all the debt has been paid back.)
Also, we need an analogous (but different) scheme for the 
read arcs.
The difference is due to the fact that
the same token may be read several times (without being consumed.)
Hence, once the debt has been introduced by the first read operation, it will not
be increased by the subsequent read operations.
Furthermore, several read operations may be covered by a (single)
input operation (a token in a place may be read several times before 
it is finally consumed through an input operations.)
To implement this, we use the variables $\rdebt{\place}{k}$.
Each time a number $r$ of tokens $\tuple{\place,k}$ are ``borrowed'' for a read operation,
we increase the value of $\rdebt{\place}{k}$ to $r$ (unless it already has a higher value.)
Furthermore, each debt taken on a token $\tuple{\place,k}$ in an input operation
subsumes a debt performed on the same token $\tuple{\place,k}$ in a read operation.
Therefore, the value of an old read debt is decreased by the amount
of the input debt taken during the current transition.
In a similar manner to input debts, the read debt
is later paid back.
When $\sdtn$ terminates, we require all $\rdebt{\place}{k}$
variables to be equal to $0$
(all the read debts have been paid back.)
\item
The transition also changes the control-state of $\tpn$.
\end{iteMize}

To formally define the set of transitions in $\sdtn$ induced by discrete transitions, 
we use a number of definitions.
We define $x\monus y:=\max(y-x,0)$.
For $k\in\nat$ and an interval $\interval$, we write
$k\inn\interval$ to denote that $(k+\epsilon)\in\interval$ for some (equivalently all) $\epsilon:0<\epsilon<1$.
During the simulation phase, there are two mechanisms for simulating
the effect of a token traveling through an (input, read, or output) arc in $\tpn$,
namely, (i)  by letting a token travel from (or to) a corresponding place;  and
(ii) by ``taking debt''.
Therefore, we define a number of ``transformers'' that translate tokens 
in $\tpn$ to corresponding ones in $\sdtn$ as follows:
\begin{iteMize}{$\bullet$}
\item
$\zeroplacetransform{\place,\interval}:=\setcomp{\zeroplace{\place}{k}}{(0\leq k\leq\maxval+1) \wedge(k\in\interval)}$.
The $\tpn$-token is simulated
by a $\sdtn$-token in a
place that represent tokens
with zero fractional parts.

\item
$\lowplacetransform{\place,\interval}:=\setcomp{\lowplace{\place}{k}}{(0\leq k\leq\maxval+1) \wedge (k\inn\interval)}$.
The $\tpn$-token is simulated
by a $\sdtn$-token in a
place that represent tokens
with low fractional parts.
Notice that we use the relation $\inn$ since the fractional part of the token is not zero.
\item
$\highplacetransform{\place,\interval}:=\setcomp{\highplace{\place}{k}}{(0\leq k\leq\maxval+1) \wedge (k\inn\interval)}$.
The $\tpn$-to\-ken is simulated
by a $\sdtn$-token in a
place that represent tokens
with high fractional parts.
\item
$\inputdebttransform{\place,\interval}:=\setcomp{\inputdebtplace{\place}{k}}{(0\leq k\leq\maxval+1) \wedge (k\inn\interval)}$. %
The $\tpn$-token is simulated by taking debt on an input token.
\item
$\readdebttransform{\place,\interval}:=\setcomp{\readdebtplace{\place}{k}}{(0\leq k\leq\maxval+1) \wedge (k\inn\interval)}$.
The $\tpn$-token is simulated by taking debt on a read token.
\end{iteMize}
We extend the transformers to multisets,
so for a multiset $\mset=\lst{\tuple{\place_1,\interval_1},\ldots,\tuple{\place_{\ell},\interval_{\ell}}}$,
let 
$\zeroplacetransform{\mset}:=\!\!
\setcompdiv{\lst{\tuple{\place_1,k_1},\ldots,\tuple{\place_{\ell},k_{\ell}}}\!}
{\!\!\!\forall i:\!1\!\leq\! i\leq\ell\!:\!\tuple{\place_i,k_i}\!\in\!\!}{\zeroplacetransform{\!\place_i,\interval_i}}
$.
We extend the other definitions to  multisets analogously.

\noindent
An $\rdebtz$-mapping $\rdebtmapping$ is a function that maps each
$\rdebt{\place}{k}$ to a value in $\set{0,\ldots,\maxread}$.
In other words, the function describes the state of the debt on read
tokens.

Now, we are ready to define the transitions in $\sdtn$ that are
induced by discrete transitions in $\tpn$.
Each such a transition is induced by a number of objects, namely:
\begin{iteMize}{$\bullet$}
\item
A transition $\transition = \tuple{\state_1,\state_2,\inputs,\reads,\outputs}\in\transitions$.
This is the transition in $\tpn$ that is to be simulated in $\sdtn$.
\item
The current remaining cost 
$y:\costof\transition\leq y\leq y_{\it init}$.
The remaining cost has to be at least as large as the
cost of the transition to be fired.
\item
An $\rdebtz$-mapping $\rdebtmapping$ describing the current debt on read tokens.
\item
Multisets 
$\inputs^{\it Zero},\inputs^{\it Low},\inputs^{\it High},\inputs^{\it Debt}$
where 
\[\inputs=\inputs^{\it Zero}+\inputs^{\it Low}+\inputs^{\it High}+\inputs^{\it Debt}.\]
Intuitively, the tokens traveling through arcs of $\transition$
are covered by four types of tokens:
\begin{iteMize}{$-$}
\item
$\inputs^{\it Zero}$: $\tpn$-tokens that will be transformed
into $\sdtn$-tokens in places encoding ages with zero fractions parts.
\item
$\inputs^{\it Low}$: $\tpn$-tokens that will be transformed
into $\sdtn$-tokens in places encoding ages with low fractions parts.
\item
$\inputs^{\it High}$: $\tpn$-tokens that will be transformed
into $\sdtn$-tokens in places encoding ages with high fractions parts.
\item
$\inputs^{\it Debt}$: $\tpn$-tokens that will be covered by taking debt.
\end{iteMize}
\item{\sloppy
Multisets $\reads^{\it Zero},\reads^{\it Low},\reads^{\it High},\reads^{\it Debt}$
where 
\[\reads=\reads^{\it Zero}+\reads^{\it Low}+
\reads^{\it High}+\reads^{\it Debt}.\]
The roles of these multisets are similar to the above.}
\item
Multisets $\outputs^{\it Zero},\outputs^{\it Low},\outputs^{\it High}$
where 
\[\outputs=\outputs^{\it Zero}+\outputs^{\it Low}+\outputs^{\it High}.\]
The roles of the multisets $\outputs^{\it Zero},\outputs^{\it Low},\outputs^{\it High}$
are similar to their counter-parts above.
\end{iteMize}
For each such collection of objects
(i.e., for each $\transition$, $0\!\leq\! y\!\leq\! y_{\it init}$, $\alpha$,
$\inputs^{\it Zero},\inputs^{\it Low},\inputs^{\it High},\inputs^{\it Debt}$,
$\reads^{\it Zero},\reads^{\it Low},\reads^{\it High},\reads^{\it Debt}$,
$\outputs^{\it Zero},\outputs^{\it Low},\outputs^{\it High}$),
we add the transition generator $\transgen$ where:
\begin{iteMize}{$\bullet$}
\item
$\precondof{\transgen}=\set{\mode=\discmode,\nstate=\tuple{\state_1,y}}\cup\alpha$,
i.e., the current mode is $\discmode$, the current state of 
$\tpn$ is $\tuple{\state_1,y}$, and the current debt on read tokens is given
by $\alpha$.
\item
$\postcondof{\transgen}=\\
\set{\mode\assigned\simmode,\nstate\assigned\tuple{\state_2,y-\costof\transition}}\cup$\\
$\setcompdiv{\rdebt{\place}{k}
\assigned\max(\alpha\monus\inputs^{\it Debt'},\reads^{\it Debt'})(\place,k)}{}
{\!\!\!(\place\in\places)\wedge (0\!\leq\! k\!\leq\!\maxval\!+\!1)}$,
where 
\begin{iteMize}{$-$}
\item
$\inputs^{\it Debt'}=\inputdebttransform{\inputs^{\it Debt}}$.
\item
$\reads^{\it Debt'}=\readdebttransform{\reads^{\it Debt}}$.
\end{iteMize}
In other words,
we change the mode back to $\simmode$, and change the control-state
of $\tpn$ to $\tuple{\state_2,y-\costof\transition}$.
The new read debts are defined as follows:
We reduce the current debt $\alpha$ 
using the new debt on input tokens $\inputs^{\it Debt'}$,
then we update the amount again using the new debt $\reads^{\it Debt'}$.
\item
$\inputsof{\transgen}=
\inputs^{\it Zero'}+\inputs^{\it Low'}+\inputs^{\it High'}+
\reads^{\it Zero'}+\reads^{\it Low'}+\reads^{\it High'}$, where
\begin{iteMize}{$-$}
\item
$\inputs^{\it Zero'}=\zeroplacetransform{\inputs^{\it Zero}}$.
\item
$\inputs^{\it Low'}=\lowplacetransform{\inputs^{\it Low}}$.
\item
$\inputs^{\it High'}=\highplacetransform{\inputs^{\it High}}$.
\item
$\reads^{\it Zero'}=\zeroplacetransform{\reads^{\it Zero}}$.
\item
$\reads^{\it Low'}=\lowplacetransform{\reads^{\it Low}}$.
\item
$\reads^{\it High'}=\highplacetransform{\reads^{\it High}}$.
\end{iteMize}
The multisets 
$\inputs^{\it Zero},\inputs^{\it Low},\inputs^{\it High}$
represent tokens that will be consumed
due to input arcs.
These tokens are distributed among places according to 
whether their fractional parts are zero, low, or high.
A similar reasoning holds for the multisets
$\reads^{\it Zero}$, $\reads^{\it Low}$, $\reads^{\it High}$.
\item
$\outputsof{\transgen}=
\outputs^{\it Zero'}+\outputs^{\it Low'}+\outputs^{\it High'}+\outputs^{\it Debt'}+
\reads^{\it Zero'}+\reads^{\it Low'}+\reads^{\it High'}$, where
\begin{iteMize}{$-$}
\item
$\outputs^{\it Zero'}=\zeroplacetransform{\outputs^{\it Zero}}$.
\item
$\outputs^{\it Low'}=\lowplacetransform{\outputs^{\it Low}}$.
\item
$\outputs^{\it High'}=\highplacetransform{\outputs^{\it High}}$.
\item
$\outputs^{\it Debt'}=\highplacetransform{\inputs^{\it Debt}}$.
\item
$\reads^{\it Zero'}=\zeroplacetransform{\reads^{\it Zero}}$.
\item
$\reads^{\it Low'}=\lowplacetransform{\reads^{\it Low}}$.
\item
$\reads^{\it High'}=\highplacetransform{\reads^{\it High}}$.
\end{iteMize}
The read multisets are defined as in the previous item.
The multisets $\outputs^{\it Zero}$, $\outputs^{\it Low}$, 
and $\outputs^{\it High}$ play the same roles
as their input and read counterparts.
The multiset $\outputs^{\it Debt'}$ represents the increase in the debt on 
read tokens.
\end{iteMize}

\medskip
\noindent{\bf Transitions of Type $1$.}
The simulation of a type $1$ transition is started when the mode is
$\typeoneamode$.
We recall that a type $1$ transition encodes that time passes so that
all  tokens of integer age in $\mset_0$ will now have a positive fractional part,
but no tokens reach an integer age.
This phase is performed in two steps.
First, in $\typeoneamode$ (that is repeated an arbitrary number of times), 
some of these tokens are used for 
covering the multisets of $\finalmarking$
in a similar manner to the previous phases.
In the second step we
change mode to $\typeonebmode$, at the same time
switching $\on$ or $\off$ the component $\coverflag$
in a similar manner to the initialization phase.
In $\typeonebmode$, the (only set) of transfer transitions
encodes the effect of passing time.
More precisely all tokens in a 
place  $\zeroplace{\place}{k}$ will be moved to the place
$\lowplace{\place}{k}$, for $k:1\leq k\leq\maxval+1$.
From $\typeonebmode$ the mode will be changed
to $\typetwoamode$.

To describe $\typeoneamode$ formally we add,
for each $i:1\leq i\leq n$, $j:1\leq j\leq n_i$,
$\place\in\places$, $k:0\leq k\leq \maxval+1$
with $\tuple{\place,k}=\tuple{\place_{i,j},k_{i,j}}$,
a transition generator $\transgen$ where:
\begin{iteMize}{$\bullet$}
\item
$\precondof{\transgen}=\{\mode=\typeoneamode,\coverflag=\on,\coverindex=i\}$.
\item
$\postcondof{\transgen}=\set{\fstate{i,j}\assigned\true}$.
\item
$\inputsof{\transgen}=\set{\zeroplace{\place}{k}}$.
\item
$\outputsof{\transgen}=\emptyset$.
\end{iteMize}

On switching to $\typeonebmode$, we  change
the variables $\coverflag$ and $\coverindex$ in 
a similar manner to the previous phases.
Therefore, we add the following transition generators:

(i) 
For each $i:1\leq i\leq n$, and $i':-m\leq i'<i$,
we add $\transgen$ where:
\begin{iteMize}{$\bullet$}
\item
$\precondof{\transgen}=\{\mode=\typeoneamode,$ 
$\coverflag=\on,\coverindex=i\}$.
\item
$\postcondof{\transgen}=\{\mode\assigned\typeonebmode,\coverflag\assigned\off,\coverindex\assigned i'\}$.
\item
$\inputsof{\transgen}=\emptyset$.
\item
$\outputsof{\transgen}=\emptyset$.
\end{iteMize}

(ii) 
For each 
$i:1\leq i\leq n$, and $i':-m\leq i'<i$,
we add $\transgen$ where
\begin{iteMize}{$\bullet$}
\item
$\precondof{\transgen}=\{\mode=\typeoneamode,$ 
$\coverflag=\on,\coverindex=i\}$.
\item
$\postcondof{\transgen}=\{\mode\assigned\typeonebmode,$
$\coverindex\assigned i'\}$.
\item
$\inputsof{\transgen}=\emptyset$.
\item
$\outputsof{\transgen}=\emptyset$.
\end{iteMize}

(iii)
We add $\transgen$ where:
\begin{iteMize}{$\bullet$}
\item
$\precondof{\transgen}=\{\mode=\typeoneamode,$
$\coverflag=\off\}$.
\item
$\postcondof{\transgen}=\{\mode\assigned\typeonebmode\}$.
\item
$\inputsof{\transgen}=\emptyset$.
\item
$\outputsof{\transgen}=\emptyset$.
\end{iteMize}

(iv)
We add $\transgen$ where:
\begin{iteMize}{$\bullet$}
\item
$\precondof{\transgen}=\{\mode=\typeoneamode,$
$\coverflag=\off\}$.
\item
$\postcondof{\transgen}=\{\mode\assigned\typeonebmode,$
$\coverflag\assigned\on\}$.
\item
$\inputsof{\transgen}=\emptyset$.
\item
$\outputsof{\transgen}=\emptyset$.
\end{iteMize}

The set of transfer transitions is defined by the transfer
transition generator $\transgen$

\begin{iteMize}{$\bullet$}
\item
$\precondof{\transgen}=\set{\mode=\typeonebmode}$.
\item
$\postcondof{\transgen}=\set{\mode\assigned\typetwoamode}$.
\item
$\inputsof{\transgen}=\emptyset$.
\item
$\outputsof{\transgen}=\emptyset$.
\item
$\stof{\transgen}=\setcompdiv{\tuple{\zeroplace{\place}{k},\lowplace{\place}{k}}}{}
{(\place\in\places)\wedge(0\leq k\leq\maxval+1)}
$.
\end{iteMize}

\medskip
\noindent{\bf Transitions of Type $2$.}
Recall that transitions of type $2$ 
encode what happens to tokens with the largest fractional parts
when an amount of time passes sufficient
for making these ages
equal to the next integer (but not larger.)
There are two sources of such tokens.
The generation of tokens according to these two sources 
divides the phase into two steps.
The first source are tokens that are currently in places
of the form $\highplace{\place}{k}$.
In $\typetwoamode$, (some of) these tokens reach the next integer, 
and are therefore moved to the corresponding
places encoding tokens with zero fractional parts.
As mentioned above,
only some (but not all) of these tokens reach the next integer.
The reason is that they are generated during the computation (not
by $\automaton$), and hence they have arbitrary fractional parts.

The second source are tokens that are provided by the automaton $\automaton$
(recall that these tokens are not generated during the initialization phase.)
In $\typetwobmode$, we run the automaton $\automaton$ one step at a time.
At each step we generate the next token
by taking a transition $\state_1\movesto{\tuple{\place,k}}\state_2$.
In fact, such a token $\tuple{\place,k}$ is used in two ways:
either it moves to the place $\zeroplace{\place}{k}$, or
it is used to pay the debt we have taken on tokens.
The debt is paid back either
(i) by removing
a token from $\inputdebtplace{\place}{k}$; or
(ii) by decrementing the value of the variable $\rdebt{\place}{k}$.
A transition $\state_1\movesto{\sep}\state_2$ means that we have 
read the last element of the current multiset.
This finishes simulating the transitions of type $1$ and $2$  and the mode is moved back
to $\simmode$ starting another iteration of the simulation phase.

Formally, we describe the movement of tokens in $\typetwoamode$
by adding,
for each $\place\in\places$ and $k:0\leq k\leq\maxval+1$, 
a transition generator $\transgen$ where:
\begin{iteMize}{$\bullet$}
\item
$\precondof{\transgen}=\set{\mode=\typetwoamode}$.
\item
$\postcondof{\transgen}=\emptyset$.
\item
$\inputsof{\transgen}=\set{\highplace{\place}{k}}$.
\item
$\outputsof{\transgen}=\set{\zeroplace{\place}{\max(k+1,\maxval+1)}}$.
\end{iteMize}
At any time, we can change mode from $\typetwoamode$ to $\typetwobmode$:
\begin{iteMize}{$\bullet$}
\item
$\precondof{\transgen}=\set{\mode=\typetwoamode}$.
\item
$\postcondof{\transgen}=\set{\mode\assigned\typetwobmode}$.
\item
$\inputsof{\transgen}=\emptyset$.
\item
$\outputsof{\transgen}=\emptyset$.
\end{iteMize}
We can also move back from $\typetwoamode$ to
$\simmode$ without letting the automaton generate any tokens:
\begin{iteMize}{$\bullet$}
\item
$\precondof{\transgen}=\set{\mode=\typetwoamode}$.
\item
$\postcondof{\transgen}=\set{\mode\assigned\simmode}$.
\item
$\inputsof{\transgen}=\emptyset$.
\item
$\outputsof{\transgen}=\emptyset$.
\end{iteMize}
We simulate $\typetwobmode$ as follows.
To describe the movement of tokens places representing
tokens with zero fractional parts we add,
for each transition
$\state_1\movesto{\tuple{\place,k}}\state_2$ in $\automaton$,
a transition generator $\transgen$ where:
\begin{iteMize}{$\bullet$}
\item
$\precondof{\transgen}=\set{\mode=\typetwobmode,\astate=\state_1}$.
\item
$\postcondof{\transgen}=\set{\astate\assigned\state_2}$.
\item
$\inputsof{\transgen}=\emptyset$.
\item
$\outputsof{\transgen}=\set{\zeroplace{\place}{k}}$.
\end{iteMize}
To describe the payment of debts on input tokens we add,
for each transition
$\state_1\movesto{\tuple{\place,k}}\state_2$ in $\automaton$,
a transition generator $\transgen$ where:
\begin{iteMize}{$\bullet$}
\item
$\precondof{\transgen}=\set{\mode=\typetwobmode,\astate=\state_1}$.
\item
$\postcondof{\transgen}=\set{\astate\assigned\state_2}$.
\item
$\inputsof{\transgen}=\set{\inputdebtplace{\place}{k}}$.
\item
$\outputsof{\transgen}=\emptyset$.
\end{iteMize}
To describe the payment of debts on read tokens we add,
for each transition
$\state_1\movesto{\tuple{\place,k}}\state_2$ in $\automaton$,
and $r:1\leq r\leq\maxread$,
a transition generator $\transgen$ where:
\begin{iteMize}{$\bullet$}
\item
$\precondof{\transgen}=\set{\mode=\typetwobmode,\astate=\state_1,\rdebt{\place}{k}=r}$.
\item
$\postcondof{\transgen}=\set{\astate\assigned\state_2,\rdebt{\place}{k}\assigned r-1}$.
\item
$\inputsof{\transgen}=\emptyset$.
\item
$\outputsof{\transgen}=\emptyset$.
\end{iteMize}

As usual, transition
$\state_1\movesto{\sep}\state_2$ in $\automaton$ indicates
means that we have 
read the last element of the current multiset.
We can now move back to the mode $\simmode$,
changing
the variables $\coverflag$ and $\coverindex$ in 
a similar manner to the previous phases.

(i)
For each transition of the form
$\state_1\movesto{\sep}\state_2$ in $\automaton$ ,
$i:1\leq i\leq n$, and $i':-m\leq i'<i$,
we add $\transgen$ where:
\begin{iteMize}{$\bullet$}
\item
$\precondof{\transgen}=\{\mode=\typetwobmode,\astate=\state_1,$ 
$\coverflag=\on,\coverindex=i\}$.
\item
$\postcondof{\transgen}=\{\mode\assigned\simmode,\astate\assigned\state_2,$
$\coverflag\assigned\off,\coverindex\assigned i'\}$.
\item
$\inputsof{\transgen}=\emptyset$.
\item
$\outputsof{\transgen}=\emptyset$.
\end{iteMize}

(ii)
For each transition $\state_1\movesto{\sep}\state_2$ in $\automaton$,
$i:1\leq i\leq n$, and $i':-m\leq i'<i$,
we add $\transgen$ where:
\begin{iteMize}{$\bullet$}
\item
$\precondof{\transgen}=\{\mode=\typetwobmode,\astate=\state_1,$ 
$\coverflag=\on,\coverindex=i\}$.
\item
$\postcondof{\transgen}=\{\mode\assigned\simmode,\astate\assigned\state_2,$
$\coverflag\assigned\on,\coverindex\assigned i'\}$.
\item
$\inputsof{\transgen}=\emptyset$.
\item
$\outputsof{\transgen}=\emptyset$.
\end{iteMize}

(iii)
For each transition
$\state_1\movesto{\sep}\state_2$ in $\automaton$,
we add $\transgen$ where:
\begin{iteMize}{$\bullet$}
\item
$\precondof{\transgen}=\{\mode=\typetwobmode,\astate=\state_1,$ 
$\coverflag=\off\}$.
\item
$\postcondof{\transgen}=\set{\mode\assigned\simmode,\astate\assigned\state_2}$.
\item
$\inputsof{\transgen}=\emptyset$.
\item
$\outputsof{\transgen}=\emptyset$.
\end{iteMize}

(iv)
For each transition
$\state_1\movesto{\sep}\state_2$ in $\automaton$,
we add $\transgen$ where:

\begin{iteMize}{$\bullet$}
\item
$\precondof{\transgen}=\{\mode=\typetwobmode,\astate=\state_1,$ 
$\coverflag=\off\}$.
\item
$\postcondof{\transgen}=\{\mode\assigned\simmode,\astate\assigned\state_2,$
$\coverflag\assigned\on\}$.
\item
$\inputsof{\transgen}=\emptyset$.
\item
$\outputsof{\transgen}=\emptyset$.
\end{iteMize}

\medskip
\noindent{\bf The Final Phase.}
From the simulation mode we can at any time enter the final
mode.
\begin{iteMize}{$\bullet$}
\item
$\precondof{\transgen}=\set{\mode=\simmode}$.
\item
$\postcondof{\transgen}=\set{\mode\assigned\finalonemode}$.
\item
$\inputsof{\transgen}=\emptyset$.
\item
$\outputsof{\transgen}=\emptyset$.
\end{iteMize}
The main tasks of the final phase are 
(i) to cover the multisets in $\finalmarking$;
and 
(ii) to continue paying back the \emph{debt} tokens (recall that the debt was partially
paid back in the simulation of type $2$ transitions.)
At the end of the final phase, we expect all tokens in $\finalmarking$ to have been
covered and all debt to have been paid back.
The final phase consists of two modes.
In $\finalonemode$ we cover the multisets in $\finalmarking$
using the tokens that have already been generated.
In $\finaltwomode$, we resume running $\automaton$ one step at a time.
The tokens generated from $\automaton$ are used both
(i) for paying back debt; and (ii) for covering the multisets 
$\mset_{-1},\ldots,\mset_{-m}$ (in that order.)

Formally, we add the following transition generators.
First, 
we continue covering the multisets $\mset_1,\ldots,\mset_n$.
For each $\place\in\places$, $1\leq i\leq n$, and $1\leq j\leq n_i$ with
$\tuple{\place_{i,j},k_{i,j}}=\tuple{\place,k}$, we add $\transgen$ where:
\begin{iteMize}{$\bullet$}
\item
$\precondof{\transgen}=\set{\mode=\finalonemode}$.
\item
$\postcondof{\transgen}=\set{\fstate{i,j}\assigned\true}$.
\item
$\inputsof{\transgen}=\lowplace{\place}{k}$.
\item
$\outputsof{\transgen}=\emptyset$.
\end{iteMize}
We cover the multiset $\mset_0$
by moving tokens from places of the form $\zeroplace{\place}{k}$.
For each $\place\in\places$ and $1\leq j\leq n_0$ with
$\tuple{\place_{0,j},k_{0,j}}=\tuple{\place,k}$, we add $\transgen$ where:
\begin{iteMize}{$\bullet$}
\item
$\precondof{\transgen}=\set{\mode=\finalonemode}$.
\item
$\postcondof{\transgen}=\set{\fstate{0,j}\assigned\true}$.
\item
$\inputsof{\transgen}=\zeroplace{\place}{k}$.
\item
$\outputsof{\transgen}=\emptyset$.
\end{iteMize}
We cover the multisets $\mset_{-1},\ldots,\mset_{-m}$
by moving tokens from places of type $\highplace{\place}{k}$.
For each $\place\in\places$, $-m\leq i\leq -1$, $1\leq j\leq n_i$ with
$\tuple{\place_{i,j},k_{i,j}}=\tuple{\place,k}$, we add $\transgen$ where:
\begin{iteMize}{$\bullet$}
\item
$\precondof{\transgen}=\set{\mode=\finalonemode}$.
\item
$\postcondof{\transgen}=\set{\fstate{i,j}\assigned\true}$.
\item
$\inputsof{\transgen}=\highplace{\place}{k}$.
\item
$\outputsof{\transgen}=\emptyset$.
\end{iteMize}
We can change mode to $\finaltwomode$:
\begin{iteMize}{$\bullet$}
\item
$\precondof{\transgen}=\set{\mode=\finalonemode}$.
\item
$\postcondof{\transgen}=\set{\mode\assigned\finaltwomode}$.
\item
$\inputsof{\transgen}=\emptyset$.
\item
$\outputsof{\transgen}=\emptyset$.
\end{iteMize}

In $\finaltwomode$, we start running $\automaton$.
The tokens can be used for paying input debts.
For each transition
$\state_1\movesto{\tuple{\place,k}}\state_2$ in $\automaton$,
we add $\transgen$ where:
\begin{iteMize}{$\bullet$}
\item
$\precondof{\transgen}=\set{\mode=\finaltwomode,\astate=\state_1}$.
\item
$\postcondof{\transgen}=\set{\astate\assigned\state_2}$.
\item
$\inputsof{\transgen}=\set{\inputdebtplace{\place}{k}}$.
\item
$\outputsof{\transgen}=\emptyset$.
\end{iteMize}
The tokens can also be used for paying read debts.
For each transition
$\state_1\movesto{\tuple{\place,k}}\state_2$ in $\automaton$,
and $k:1\leq r\leq\maxread$,
we add $\transgen$ where:
\begin{iteMize}{$\bullet$}
\item
$\precondof{\transgen}=\set{\mode=\finaltwomode,\astate=\state_1,\rdebt{\place}{k}= r}$.
\item
$\postcondof{\transgen}=\set{\astate\assigned\state_2,\rdebt{\place}{k}\assigned r-1}$.
\item
$\inputsof{\transgen}=\emptyset$.
\item
$\outputsof{\transgen}=\emptyset$.
\end{iteMize}
Finally,
the tokens can  be used for covering.
For each transition
$\state_1\movesto{\tuple{\place,k}}\state_2$ in $\automaton$,
$i:-m\leq i\leq -1$, $j:1\leq j\leq n_i$,
$\place\in\places$, $k:0\leq k\leq \maxval+1$
with $\tuple{\place,k}=\tuple{\place_{i,j},k_{i,j}}$,
we have $\transgen$ where:
\begin{iteMize}{$\bullet$}
\item
$\precondof{\transgen}=\{\mode=\finaltwomode,\coverflag=\on,\coverindex=i\}$.
\item
$\postcondof{\transgen}=\set{\fstate{i,j}\assigned\true}$.
\item
$\inputsof{\transgen}=\emptyset$.
\item
$\outputsof{\transgen}=\emptyset$.
\end{iteMize}
A transition
$\state_1\movesto{\sep}\state_2$ in $\automaton$ indicates
that we have 
read the last element of the current multiset.
We now let $\automaton$ generate the next multiset.
We change
the variables $\coverflag$ and $\coverindex$ in 
a similar manner to the previous phases.

(i)
For each transition of the form
$\state_1\movesto{\sep}\state_2$ in $\automaton$ ,
$i:-m\leq i\leq -1$, and $i':-m\leq i'<i$,
we add $\transgen$ where:
\begin{iteMize}{$\bullet$}
\item
$\precondof{\transgen}=\{\mode=\finaltwomode,\astate=\state_1,$ 
$\coverflag=\on,\coverindex=i\}$.
\item
$\postcondof{\transgen}=\{\astate\assigned\state_2,$
$\coverflag\assigned\off,\coverindex\assigned i'\}$.
\item
$\inputsof{\transgen}=\emptyset$.
\item
$\outputsof{\transgen}=\emptyset$.
\end{iteMize}

(ii)
For each transition $\state_1\movesto{\sep}\state_2$ in $\automaton$,
$i:1\leq i\leq n$, and $i':-m\leq i'<i$,
we add $\transgen$ where:
\begin{iteMize}{$\bullet$}
\item
$\precondof{\transgen}=\{\mode=\finaltwomode,\astate=\state_1,$ 
$\coverflag=\on,\coverindex=i\}$.
\item
$\postcondof{\transgen}=\{\astate\assigned\state_2,\coverindex\assigned i'\}$.
\item
$\inputsof{\transgen}=\emptyset$.
\item
$\outputsof{\transgen}=\emptyset$.
\end{iteMize}

(iii)
For each transition
$\state_1\movesto{\$}\state_2$ in $\automaton$,
we add $\transgen$ where:
\begin{iteMize}{$\bullet$}
\item
$\precondof{\transgen}=\{\mode=\finaltwomode,\astate=\state_1,$ 
$\coverflag=\off\}$.
\item
$\postcondof{\transgen}=\{\astate\assigned\state_2\}$.
\item
$\inputsof{\transgen}=\emptyset$.
\item
$\outputsof{\transgen}=\emptyset$.
\end{iteMize}

(iv)
For each transition
$\state_1\movesto{\$}\state_2$ in $\automaton$,
we add $\transgen$ where:

\begin{iteMize}{$\bullet$}
\item
$\precondof{\transgen}=\{\mode=\finaltwomode,\astate=\state_1,$ 
$\coverflag=\off\}$.
\item
$\postcondof{\transgen}=\{\astate\assigned\state_2,\coverflag\assigned\on\}$.
\item
$\inputsof{\transgen}=\emptyset$.
\item
$\outputsof{\transgen}=\emptyset$.
\end{iteMize}

\medskip
\noindent{\bf The Set $\finalsdtnconfs$}
contains all configurations $\tuple{\finalstate^\sdtn,\finalmarking^\sdtn}$
satisfying the following conditions:
\begin{iteMize}{$\bullet$}
\item
$\finalstate^\sdtn(\nstate)=\finalstate$.
The AC-PTPN is in its final control-state.
\item
$\finalstate^\sdtn(\fstate{i,j})=\true$ 
for all $i:-m\leq i\leq n$ and $1\leq j \leq n_i$.
We have covered all tokens in $\finalmarking$.
\item
$\finalstate^\sdtn(\rdebt{\place}{k})=0$
for all $\place\in\places$ and $k:0\leq k\leq\maxval+1$.
We have paid back all debts on read tokens.
\item
$\finalmarking(\inputdebtplace{\place}{k})=0$
for all $\place\in\places$ and $0\leq k\leq\maxval+1$.
We have paid back all debts on input tokens.\qedhere
\end{iteMize}
% We give an example of a concrete computation that give rise to the above abstract computation.
\end{proof}

\section{Undecidability for Negative Costs}\label{sec:undecidability}

The cost threshold coverability problem 
for PTPN is undecidable if negative
transition costs are allowed.

In fact, as we see from the proof of 
Theorem~\ref{thm:neg_trans_undecidability} below, the undecidability proof holds
even if the costs of  places are restricted to be non-negative
integers, and the costs of transitions are restricted to be non-positive.
Moreover, the undecidability proof does not require real-valued clocks,
but works even if clock values are natural numbers, i.e., in the
discrete-time case.

\begin{thm}\label{thm:neg_trans_undecidability}
The cost threshold problem for PTPN $\ptpn=\ptpntuple$ is undecidable.
\end{thm}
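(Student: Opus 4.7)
The plan is to give a reduction from the halting problem for Minsky 2-counter machines, which is well known to be undecidable. A 2-counter machine consists of a finite set of control-states, two nonnegative counters $c_1,c_2$, and instructions that either increment a counter, decrement-and-branch a counter, or perform a zero test and branch accordingly. The reduction will produce, from such a machine $\mathcal{M}$, a PTPN $\ptpn$ and control-states $q_{\init}, q_{\fin}$ such that $\mathcal{M}$ halts from the configuration $(c_1,c_2)=(0,0)$ if and only if $\optcostof{\tuple{q_{\init},\lst{}},\{\tuple{q_{\fin},M}\}}\le 0$. Since this is a special instance of the cost-threshold problem for PTPN (with threshold $\threshold=0$), the undecidability follows.

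First I would describe the skeleton of the simulation. Each counter $c_i$ is represented by the number of tokens on a dedicated place $p_i$. The control-states of $\mathcal{M}$ are encoded as control-states of $\ptpn$. Increment and decrement-when-positive instructions are translated in the obvious way (adding/removing one token on $p_i$), using transitions of cost $0$ and time intervals $[0:\infty)$ so that token ages are irrelevant for these steps. So far this is essentially a classical Petri-net simulation of a counter machine and, on its own, cannot implement zero tests faithfully.

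The key part is the zero-test gadget, which is where the tradeoff between positive and negative costs enters. For each zero-test instruction ``if $c_i=0$ then goto $q'$'' I would add a transition $t_{\it zt}$ that moves the control-state from $q$ to $q'$ without consulting the tokens on $p_i$, thereby allowing the machine to \emph{guess} that $c_i=0$. The gadget is then arranged so that every such guess triggers a reward of $-K$ and simultaneously exposes $p_i$ to a compensating fee of $K$ per token still sitting on it, in such a way that a correct guess (empty $p_i$) contributes net cost exactly $0$, whereas any incorrect guess contributes a strictly positive net cost that cannot be offset anywhere else in the computation. Concretely, one can combine a small detour on a distinguished ``penalty'' place whose storage cost is nonzero with transitions carrying negative firing cost, and set the costs so that the arithmetic cancels iff $p_i$ was empty at the moment of the guess. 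The remaining instructions (the ``nonzero'' branch of a decrement, the increments, and the accepting transition into $q_{\fin}$) are assigned cost $0$ and do not interact with the penalty place.

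Soundness and completeness would then be verified separately. For completeness, given any halting computation of $\mathcal{M}$ one executes the corresponding transitions in $\ptpn$; since every zero test used by $\mathcal{M}$ is truthful, each gadget contributes cost $0$ and the total cost is $0$, witnessing $\optcostof{\initconf,\finalconfs}\le 0$. For soundness, given any computation of $\ptpn$ reaching $q_{\fin}$ with cost $\le 0$, one argues that every use of $t_{\it zt}$ must have occurred with $p_i$ empty (otherwise the net cost contribution would be strictly positive and could not be compensated, since all other transitions carry cost $\ge 0$ on the relevant places), and hence the induced sequence of instructions is a valid halting run of $\mathcal{M}$. The main obstacle is precisely the design and correctness proof of this zero-test gadget: one has to verify that no clever use of timing, token creation, or alternative transition orderings can generate extra negative cost that would ``pay off'' an unfaithful zero test. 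Finally, I would observe that the entire construction uses only integer delays and integer token ages, so all time intervals can be taken of the form $[k:k]$ or $[0:\infty)$; hence the reduction already works in the discrete-time fragment, giving the stronger form of the theorem mentioned in the introduction.
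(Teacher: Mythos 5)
Your high-level strategy is the same as the paper's: a direct reduction from Minsky 2-counter machines in which nonnegative storage costs on places are played off against nonpositive transition costs, so that exactly the faithful simulations reach the final control-state with total cost $0$, and the construction is claimed to work already in discrete time. The problem is that the whole technical content of the theorem lies in the zero-test gadget, and your proposal does not actually construct it; as sketched, it does not work. The arithmetic is already inconsistent: a reward of $-K$ combined with a fee of $K$ per token still on $p_i$ gives net cost $-K$ when $p_i$ is empty and $0$ when it holds one token, not ``$0$ versus strictly positive'' as you claim. This is not a cosmetic slip: with negative costs available, a gadget that yields a surplus of $-K$ on an honest zero test lets a run bank that surplus and use it later to pay for a dishonest zero test (up to $K$ tokens), so soundness collapses unless the cancellation is exact.

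More fundamentally, in a PTPN the storage cost is a fixed constant attached to a place, charged per token and per elapsed time unit; it cannot be ``exposed'' or switched on only at the moment of a guess (the paper explicitly lists control-state-dependent storage costs as an extension beyond the model). The only way to charge a fee proportional to the current number of tokens on $p_i$ is therefore to let time elapse while $p_i$ has positive storage cost, and then one must also handle the tokens of the other counter, which accrue the same kind of cost and whose ages change; the standard remedy is to refund them one token at a time through negative-cost transitions that are gated by token age (they may only consume tokens that have demonstrably ``paid'' their storage, e.g.\ tokens of age exactly $1$). Your stipulation that increments and decrements use $[0:\infty)$ so that ``token ages are irrelevant'' destroys exactly this coupling: if ages are never constrained, the per-token refund transitions can be fired on fresh tokens without any time having elapsed, generating unbounded negative cost that pays off false zero tests. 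You acknowledge the verification that ``no clever use of timing, token creation, or alternative transition orderings can generate extra negative cost'' as the main obstacle, but that verification, together with the concrete gadget it is supposed to apply to, is precisely what is missing; so the proposal identifies the right reduction and the right cost tradeoff, but not a proof.
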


\begin{proof}
\newcommand{\ctransitions}{\delta}

We prove that it is undecidable whether a given control-state can be reached 
with cost $\le 0$, through a reduction from the 
control-state reachability problem for Minsky $2$-counter machines \cite{Minsky:book}.

We recall that a $2$-counter machine $\cmachine$, operating
on two counters $\counter_0$ and $\counter_1$, is a triple
$\tuple{\states,\ctransitions,\initstate}$, 
where $\states$ is a finite set of {\it control
states}, $\ctransitions$ is a finite set of {\it transitions},
and $\initstate\in\states$ is the {\it initial control state}.
A transition $\ctransition\in\ctransitions$  is a triple
$(\state_1,\op,\state_2)$, where $\op$ is of one of the 
three forms
(for $i=0,1$):
(i) $\inc{\counter_i}$ which increments the value of $\counter_i$ by one;
(ii) $\dec{\counter_i}$ which decrements the value of $\counter_i$ by one; or
(iii) $\ztest{\counter_i}$ which checks whether the value of 
$\counter_i$ is equal to zero.
A {\it configuration} $\conf$ of $\cmachine$ is a triple $(\state,y_0,y_1)$, where
$\state\in\states$ and $y_0,y_1\in\mathbb{N}$.
The configuration gives the control-state together with the
values of the counters $\counter_0$ and $\counter_1$.
The initial configuration $\initconf$  is 
$(q_{\it init},0,0)$.
The operational semantics of $M$ is defined in the standard manner.
In the control-state reachability problem, we are given a counter machine
$\cmachine$ and a (final) control-state $\finalstate$, and check
whether we can reach a configuration of the form
$(\finalstate,y_0,y_1)$ (for arbitrary $y_0$ and $y_1$) from $\initconf$.

Given an instance of the control-state reachability  problem for
$2$-counter machines, 
with $\cmachine=\tuple{\states,\ctransitions,\initstate}$
and $\finalstate\in\states$,
we derive an instance of the cost threshold
coverability problem for a PTPN where we have only non-negative costs on places
and only non-positive costs on transitions; and where the threshold 
vector is given by $0$.

We define the PTPN $\ptpn=\tuple{\states',\places,\transitions,\costfun}$ 
as follows.
Each control state $\state\in\states$ has a copy in $\states'$.
For each counter $\counter_i$ we have a place
$\counter_i\in\places$ with $C(\counter_i)=1$.
The number of tokens in place $\counter_i$ gives the value of the
corresponding counter.
To simplify the notation, we adopt the convention that, unless otherwise
stated, the time intervals on transitions in our PTPN are $[0:\infty)$.

Increment and decrement transitions
(on a counter $\counter_i$) are simulated straightforwardly by
changing control state and adding/removing a token
from the corresponding place.
More precisely, for a transition 
$\ctransition=\tuple{\state_1,\inc{\counter_i},\state_2}\in\ctransitions$,
there is a transition
$\ctransition\in\transitions$ such that
$\inputs(\ctransition)=\set{\state_1}$
$\outputs(\ctransition)=\set{\state_2,\counter_i}$,
and $\cost(\ctransition)=0$.
For a transition 
$\ctransition=\tuple{\state_1,\dec{\counter_i},\state_2}\in\ctransitions$ 
there is a transition
$\ctransition\in\transitions$ such that
$\inputs(\ctransition)=\set{\state_1,\counter_i}$, 
$\outputs(\ctransition)=\set{\state_2}$,
and $\cost(\ctransition)=0$.
The details of simulating a zero testing transition
$\ctransition=\tuple{\state_1,\ztest{\counter_i},\state_2}\in\ctransitions$
are shown in Figure~\ref{ztest:fig}.
The main idea
is to put a positive cost on the counter places
$\counter_i$ and $\counter_{1-i}$.
If the system `cheats' and takes the transition from
a configuration where the counter value $\counter_i$ 
is positive (the corresponding place
is not empty), then the transition will impose a cost which 
cannot be compensated
in the remainder of the computation.
On the other hand, since the other counter $\counter_{1-i}$ also has a positive cost,
we will also pay an extra (unjustified)
price corresponding to the number
of tokens in $\counter_{1-i}$.
Therefore, we use a number of auxiliary places and transitions
which make it possible to reimburse
unjustified cost for tokens on counter $\counter_{1-i}$.
The reimbursement is carried out
(at most completely, but possibly just partially)
by cycling around the tokens in $\counter_{1-i}$.
For technical reasons (see below), the construction
ensures that the only parts of a computation in which time elapses
are those that simulate the zero testing of counters (the other
parts have zero  duration).
Concretely,  
we have seven transitions 
$\transition^\ctransition_1,\ldots,\transition^\ctransition_7\in\transitions$.
Furthermore, we have four extra control states $\state^\ctransition_1,\ldots,\state^\ctransition_4$, and
two extra places
$\place^\ctransition,z$.
The place $z$ is shared among all transitions in $\transitions$ used to
simulate transitions in $\ctransitions$ that zero-test
the value of a counter.
The operation of $\ptpn$ starts by putting a single token of
age zero in $z$.
The costs are given by 
$\cost(\transition^\ctransition_3)=-2$,
$\cost(\place^\ctransition)=2$,
$\cost(\counter_{1-i})=1$; while the cost of the
other places and transitions are all equal to $0$.
\begin{figure}[tbp]
\begin{center}
\begin{tikzpicture}[label distance=-0.5mm,node distance=15mm]
  \node[place,draw=blue,fill=blue,label=above:$\state_1$] (s1){};
\node[transition,right of = s1,fill=red,label=above:$\transition^\ctransition_1$](t1){};
\node[place,draw=blue,fill=blue,right of = t1,label=above:$\state^\ctransition_1$] (qr1){};
\node[transition,right of = qr1,fill=red,label=above:$\transition^\ctransition_2$](t2){};
\node[place,draw=blue,fill=blue,right of = t2,label=above:$\state^\ctransition_2$] (qr2){};
\node[transition,right of = qr2,label=right:$-2$,label=above:$\transition^\ctransition_3$](t3){};

\node[place, below of = t3,xshift=10mm,,label=below:$2$,label= left:$\place^\ctransition$] (pr){};
\node[place, right of = pr,label=below:$1$,label= above:$\counter_{1-i}$] (c2){};
\node[place, below of = qr1] (ref){$z$};

\node[transition,below of = qr2 ,label=left :$\transition^\ctransition_7$](t7){};

\node[place,draw=blue,fill=blue,below of = s1,yshift=-15mm,label=below:$\state_2$] (s2){};
\node[transition,right of = s2,fill=red,label=below:$\transition^\ctransition_6$](t6){};
\node[place,draw=blue,fill=blue,right of = t6,label=below:$\state^\ctransition_4$] (qr4){};
\node[transition,right of = qr4,fill=red,label=below:$\transition^\ctransition_5$](t5){};
\node[place,draw=blue,fill=blue,right of = t5,label=below:$\state^\ctransition_3$] (qr3){};
\node[transition,right of = qr3,fill=red,label=below:$\transition^\ctransition_4$](t4){};

\draw[->,thick]
         (s1) edge (t1)
         (t1) edge (qr1)
         (qr1) edge (t2)
         (t2) edge (qr2)
         (qr2) edge[bend left] (t3)
         (t3) edge[bend left] (qr2)
         (c2) edge (t3)
         (t3) edge (pr)
         (t4) edge (c2)
         (pr) edge (t4)
         (qr2) edge (t7)
         (t7) edge (qr3)
         (qr3) edge[bend left] (t4)
         (t4) edge[bend left] (qr3)
         (qr3) edge (t5)
         (t5) edge (qr4)
         (qr4) edge (t6)
         (t6) edge (s2)
         (t1) edge[bend left] node [left] {\tiny $[0\!\!:\!\!0]$}  (ref)
         (ref) edge [bend left] node [left] {\tiny $[0\!\!:\!\!0]$} (t1)
         (t6) edge[bend left]  node [left] {\tiny $[0\!\!:\!\!0]$} (ref)
         (ref) edge [bend left] node [left] {\tiny $[1\!\!:\!\!1]$}  (t6)
         (t2) edge[bend left] node [right] {\tiny $[0\!\!:\!\!0]$}  (ref)
         (ref) edge [bend left] node [right] {\tiny $[1\!\!:\!\!1]$} (t2)
         (t5) edge[bend left]  node [right] {\tiny $[0\!\!:\!\!0]$} (ref)
         (ref) edge [bend left] node [right] {\tiny $[0\!\!:\!\!0]$}  (t5)
 
;
\end{tikzpicture}
\end{center}
\caption{Simulating zero testing in a TPTN. Timed transitions are filled.
The cost of a place or transition is shown only if it is different from $0$.}
\label{ztest:fig}
\end{figure}
Intuitively, $\tpn$ simulates the transition 
$\ctransition=(\state_1,\ztest{\counter_i},\state_2)$ by first
firing the transition $\transition^\ctransition_1$ moving to control state
$\state_1^\ctransition$
which signals the start of the simulation procedure.
The transition checks whether the token inside $z$ has still age
zero (by removing and putting back a token of age zero).
The computation stays in $\state_1^\ctransition$ and transition
$\transition^\ctransition_2$ will be fired after exactly one time unit
(this is 
ensured by checking that age of the token in $z$ is exactly equal to one).
Furthermore, the age of the token in $z$ is reset to zero, and the new control
state will be $\state_2^\ctransition$.
The delay will add a cost 
which is equal to the number of tokens in $\counter_i$ and
$\counter_{1-i}$.
More precisely, if $n$ is the number of tokens in place $\counter_{1-i}$,
then the mentioned unit time delay will add a cost of $n$.
We observe also that $\transition^\ctransition_2$ will move the computation to
$\state^\ctransition_2$.
This enables the next phase which will make it possible
to reclaim the (unjustified) cost we have for the tokens in the
place $\counter_{1-i}$.
We can now fire the transition $\transition^\ctransition_3$
$m$ times, where $m\leq n$, thus moving $m$ tokens
from $\counter_{1-i}$ 
to $\place^\ctransition$ and gaining $2m$ (i.e., paying $-2m$).
Eventually, $\transition^\ctransition_7$ will fire, moving control to
 $\state^\ctransition_3$.
From  $\state^\ctransition_3$, transition $\transition^\ctransition_4$
can fire $k$ times (where $k\leq m$) moving $k$
tokens back to $\counter_{1-i}$.
The places  $\place^\ctransition$  and
$\counter_{1-i}$ will now contain $m-k$ resp.\ $n+k-m$ tokens.
Eventually, transition $\transition^\ctransition_5$ will fire,
moving control to $\state^\ctransition_4$ and ensuring that no 
time has elapsed since the firing of $\transition^\ctransition_2$
(and hence no extra costs added due to delays).
Finally, transition $\transition^\ctransition_6$ will fire, ensuring a delay of one time unit (thus
costing $2(m-k)+(n+k-m)=n+m-k$), moving control
to $\state_2$, and resuming the
``normal'' simulation of $\cmachine$.
The total cost $\ell$ for the whole sequence of transitions is 
$\ell=n-2m+n+m-k=2n-m-k$.
This means $0\leq\ell$ and that $\ell=0$ only if
$k=m=n$, i.e., all the tokens of $\counter_{1-i}$ are moved
to $\place^\ctransition$ and back to  $\counter_{1-i}$.
In other words, we can reimburse all the unjustified cost
(but not more than that).

This implies correctness of the construction as follows.
Suppose that the given
instance of the control-state reachability problem has a positive
answer.
Then, there is a faithful simulation in $\tpn$
(which will eventually put a token in the place $\state_F$).
In particular, each time we perform a transition which tests the value
of counter $\counter_i$, the corresponding place is indeed empty and hence
we pay no cost for it.
We can also choose to reimburse all the unjustified cost
paid for counter $\counter_{1-i}$.
Notice that, letting time pass in the parts of the computation
that are not part of the simulation of a zero-testing transition, may
only contribute with non-negative costs, and that 
we can always choose not to have any delays in those parts.
It follows that the computation has an
accumulated cost equal to $0$.
On the other hand, suppose that the given
instance of the control-state reachability problem has a negative
answer.
Then the only way for $\tpn$ to put a token in $\state_F$ is to 
`cheat' during the simulation of a zero testing
transition (as described above).
However, in such a case  the accumulated cost for simulating the transition
is positive.
Since simulations of increment and decrement transitions have zero costs,
and simulations of zero testing transitions have non-negative costs,
the extra cost paid for cheating cannot be recovered later in the
computation.
This means that the accumulated cost for the whole computation 
will be strictly positive, and thus we have a negative instance of the
cost threshold coverability problem (with our chosen threshold of $0$).

\end{proof}

\section{Conclusion and Extensions}\label{sec:conclusion}

\noindent
We have shown that the infimum of the costs to reach a given control-state
is computable in priced timed Petri nets with continuous time. 
This subsumes the corresponding
results for less expressive models such as priced timed automata
\cite{BouyerBBR:PTA} and priced discrete-timed Petri nets
\cite{AM:FOSSACS2009}.

For simplicity of presentation, we have used a one-dimensional cost model,
i.e., with a cost $\in \nnreals$, but our result on decidability of the
Cost-Threshold problem can trivially be generalized to a multidimensional
cost model (provided that the cost is linear in the elapsed time).
However, in a multidimensional cost model, the Cost-Optimality problem 
is not defined, because the infimum of the costs does
not exist, due to trade-offs between different components. 
E.g., one can construct a PTPN (and even a priced timed automaton) with a
2-dimensional cost where the feasible costs are 
$\{(x, 1-x)\,|\, x \in \nnreals, 0 < x \le 1\}$, i.e., with uncountably many
incomparable values.

Another simple generalization is to make token storage costs on places
dependent on the current control-state, e.g., storing one token on place $p$
for one time unit costs $2$ if in control-state $q_1$, but $3$ if in
control-state $q_2$. Our constructions can trivially be extended to handle
this.

Other extensions are much harder. If the token storage costs are not linear 
in the elapsed time then the infimum of the costs is not necessarily an
integer. In particular, the corner-point abstraction of
Section~\ref{sec:abstract_ptpn} and our translation 
to an A-PTPN problem would not work. It is an open question
how to compute optimal costs in such cases.

Finally, some extensions make the cost-problems undecidable.
As shown in Section~\ref{sec:undecidability}, reachability of a given
control-state with cost zero becomes undecidable if general integer costs 
(including negative costs, i.e., rewards) are allowed.
This negative result holds even for the simpler case of discrete-time PTPN, 
i.e., when clocks values are natural numbers.
 
If one considers the reachability problem (instead of our control-state
reachability problem) then the question is undecidable for TPN
\cite{PCVC99:TPN:Nondecidability}, even without considering costs.

\section*{Acknowledgments}
We thank the anonymous referees for their detailed comments.

\newcommand{\etalchar}[1]{$^{#1}$}

\vspace{-20 pt}
\end{document}